\declaretheorem[name=Lemma]{lem}
\declaretheorem[name=Proposition]{prop}
\declaretheorem[style=remark, numbered=no, name=Proof sketch]{proofsketch}
\newcommand{\matheqbox}[2]{\eqmakebox[#1][c]{$\displaystyle#2$}}
\algnewcommand{\LineComment}[1]{\State \(\triangleright\) #1}
\algnewcommand{\LineCommentPhantom}[1]{\State \phantom{\(\triangleright\)} #1}
\algrenewcommand\algorithmicindent{1.0em}
\tikzset{darrow/.style={decoration={
  markings,
  mark=at position .2 with {\arrowreversed{angle 90[width=2.5mm]}},
  mark=at position .8 with {\arrow{angle 90[width=2.5mm]}},
  }
  ,postaction={decorate}}}
\tikzset{-->-/.style={decoration={
  markings,
  mark=at position .8 with {\arrow{angle 90[width=2.5mm]}}},postaction={decorate}}}
\tikzset{-<--/.style={decoration={
  markings,
  mark=at position .8 with {\arrow{angle 90[width=2.5mm]}}},postaction={decorate}}}
\tikzset{->-/.style={decoration={
  markings,
  mark=at position .5 with {\arrow{angle 90[width=2.5mm]}} },postaction={decorate}}}
\tikzset{--->/.style={decoration={
  markings,
  mark=at position 1 with {\arrow{angle 90[width=2.5mm]}} },postaction={decorate}}}
\tikzset{
  nat/.style     = {fill=white,draw=none,ellipse,minimum size=0.3cm,inner sep=1pt},
}
\newcommand*{\fullversion}{}%
\newcommand*{\hideappendix}{}%
\begin{document}
\title{The Three-Dimensional Stable Roommates Problem with Additively Separable Preferences\texorpdfstring{\thanks{This work was supported by the Engineering and Physical Sciences Research Council (Doctoral Training Partnership grant number EP/R513222/1 and grant number EP/P028306/1)}}{}}
\titlerunning{The 3D Stable Roommates Problem with Additively Separable Preferences}
\author{Michael~McKay \orcidID{0000-0003-1496-7434} \and David~Manlove \orcidID{0000-0001-6754-7308}}
\authorrunning{M.\ McKay and D.\ Manlove}

\institute{School of Computing Science, University of Glasgow, UK
\email{m.mckay.1@research.gla.ac.uk}, \email{david.manlove@glasgow.ac.uk}}

\maketitle
\begin{abstract}
The Stable Roommates problem involves matching a set of agents into pairs based on the agents' strict ordinal preference lists. The matching must be stable, meaning that no two agents strictly prefer each other to their assigned partners. A number of three-dimensional variants exist, in which agents are instead matched into triples. Both the original problem and these variants can also be viewed as hedonic games. We formalise a three-dimensional variant using general additively separable preferences, in which each agent provides an integer valuation of every other agent. In this variant, we show that a stable matching may not exist and that the related decision problem is $\NP$-complete, even when the valuations are binary. In contrast, we show that if the valuations are binary and symmetric then a stable matching must exist and can be found in polynomial time. We also consider the related problem of finding a stable matching with maximum utilitarian welfare when valuations are binary and symmetric. We show that this optimisation problem is $\NP$-hard and present a novel 2-approximation algorithm.
\keywords{Stable roommates \and Stable matching \and Three dimensional roommates \and Hedonic games \and Coalition formation \and Complexity}
\end{abstract}

\section{Introduction}
\label{sec:intro}
The Stable Roommates problem (SR) is a classical problem in the domain of matching under preferences. It involves a set of agents that must be matched into pairs. Each agent provides a \emph{preference list}, ranking all other agents in strict order. We call a set of pairs in which each agent appears in exactly one pair a \emph{matching}. The goal is to produce a matching $M$ that admits no \emph{blocking pair}, which comprises two agents, each of whom prefers the other to their assigned partner in $M$. Such a matching is called \emph{stable}. This problem originates from a seminal paper of Gale and Shapley, published in 1962, as a generalisation of the Stable Marriage problem \cite{GS62}. They showed that an SR instance need not contain a stable matching. In 1985, Irving presented a polynomial-time algorithm to either find a stable matching or report that none exist, given an arbitrary SR instance \cite{Irv85}. Since then, many papers have explored extensions and variants of the fundamental SR problem model.

In this paper we consider the extension of SR to three dimensions (i.e., agents must be matched into triples rather than pairs). A number of different formalisms have already been proposed in the literature. The first, presented in 1991 by Ng and Hirschberg, was the \emph{3-Person Stable Assignment Problem} (3PSA) \cite{NH91}. In 3PSA, agents' preference lists are formed by ranking every pair of other agents in strict order. A matching $M$ is a partition of the agents into unordered triples. A \emph{blocking triple} $t$ of $M$ involves three agents that each prefer their two partners in $t$ to their two assigned partners in $M$. Accordingly, a \emph{stable matching} is one that admits no blocking triple. The authors showed that an instance of this model may not contain a stable matching and the associated decision problem is $\NP$-complete \cite{NH91}. In the instances constructed by their reduction, agents' preferences may be \emph{inconsistent} \cite{CCHUANG2207TR}, meaning that it is impossible to derive a logical order of individual agents from a preference list ranking pairs of agents.

In 2007, Huang considered the restriction of 3PSA to \emph{consistent} preferences. He showed that a stable matching may still not exist and the decision problem remains $\NP$-complete \cite{CCHUANG2207TR,CCHUANG2207FULLVERSION}. In his technical report, he also described another variant of 3PSA using \emph{Precedence by Ordinal Number} (PON). PON involves each agent providing a preference list ranking all other agents individually. An agent's preference over pairs is then based on the sum of the ranks of the agents in each pair. Huang left open the problem of finding a stable matching, as defined here, in the PON variant. He also proposed another problem variant involving a more general system than PON, in which agents provide arbitrary numerical ``ratings''. It is this variant that we consider in this paper. He concluded his report by asking if there exist special cases of 3PSA in which stable matchings can be found using polynomial time algorithms. This question is another motivation for our paper.

The same year, Iwama, Miyazaki and Okamoto presented another variant of 3PSA \cite{IMO07}. In this model, agents rank individual agents in strict order of preference, and an ordering over pairs is inferred using a specific \emph{set extension rule} \cite{AzizLang2016,BBP04}. The authors showed that a stable matching may not exist and that the decision problem remains $\NP$-complete. 

In 2009, Arkin et al.\ presented another variant of 3PSA called \emph{Geometric 3D-SR} \cite{ABEOMP09}. In this model, preference lists ranking pairs are derived from agents' relative positions in a metric space. Among other results, they showed that in this model a stable matching, as defined here, need not exist. In 2013, Deineko and Woeginger showed that the corresponding decision problem is $\NP$-complete \cite{DEINEKO20131837}.


All of the problem models described thus far, including SR, can be viewed as \emph{hedonic games} \cite{aziz_savani_moulin_2016}. A hedonic game is a type of \emph{coalition formation game}. In general, coalition formation games involve partitioning a set of agents into disjoint sets, or coalitions, based on agents' preferences. The term `hedonic' refers to the fact that agents are only concerned with the coalition that they belong to. The study of hedonic games and coalition formation games is broad and many different problem models have been considered in the literature \cite{Haj06}.

In particular, SR and its three-dimensional variants can be viewed as hedonic games with a constraint on permissible coalition sizes \cite{Woe13}. In the context of a hedonic game, the direct analogy of stability as described here is \emph{core stability}. In a given hedonic game, a partition is \emph{core stable} if there exists no set of agents $S$, of any size, where each agent in $S$ prefers $S$ to their assigned coalition \cite{aziz_savani_moulin_2016}.

Recently, Boehmer and Elkind considered a number of hedonic game variants, including 3PSA, which they described as \emph{multidimensional roommate games} \cite{Boe20}. In their paper they supposed that the agents have \emph{types}, and an agent's preference between two coalitions depends only on the proportion of agents of each type in each coalition. They showed that, for a number of different `solution concepts', the related problems are $\NP$-hard, although many problems are solvable in linear time when the room size is a fixed parameter. For stability in particular, they presented an integer linear programming formulation to find a stable matching in a given instance, if one exists, in linear time.

In 2020, Bredereck et al.\ considered another variation of multidimensional roommate games involving either a \emph{master list} or \emph{master poset}, a central list or poset from which all agents' preference lists are derived \cite{Bre20}. They presented two positive results relating to restrictions of the problem involving a master poset although they showed for either a master list or master poset that finding a stable matching in general remains $\NP$-hard or $\W[1]$-hard, for three very natural parameters.

Other research involving hedonic games with similar constraints has considered Pareto optimality rather than stability \cite{CSEH2019}; `flatmate games', in which any coalition contains three or fewer agents \cite{Brandt2020FindingAR}; and strategic aspects \cite{WrightV15}.

The template of a hedonic game helps us formalise the extension of SR to three dimensions. In this paper we apply the well-known system of  \emph{additively separable preferences} \cite{Aziz:2011:OPA:2283396.2283405}. In a general hedonic game, additive separable preferences are derived from each agent $\alpha_i$ assigning a numerical valuation $\mathit{val}_{\alpha_i}(\alpha_j)$ to every other agent $\alpha_j$. A preference between two sets is then obtained by comparing the sum of valuations of the agents in each set. This system formalises the system of ``ratings'' proposed by Huang \cite{CCHUANG2207TR}. In a general hedonic game with additively separable preferences, a core stable partition need not exist, and the associated decision problem is strongly $\NP$-hard  \cite{SUNG2010635}. This result holds even when preferences are \emph{symmetric}, meaning that $\mathit{val}_{\alpha_i}(\alpha_j) = \mathit{val}_{\alpha_j}(\alpha_i)$ for any two agents $\alpha_i, \alpha_j$ \cite{AZIZ2013316}.

The three-dimensional variant of SR that we consider in this paper can also be described as an additively separable hedonic game in which each coalition in a feasible partition has size three. To be consistent with previous research relating to three-dimensional variants of SR \cite{CCHUANG2207TR,IMO07}, in this paper we refer to a partition into triples as a \emph{matching} rather than a partition and write \emph{stable matching} rather than \emph{core stable partition}.  We finally remark that the usage of the terminology ``three-dimensional'' to refer to the coalition size rather than, say, the number of agent sets \cite{NH91}, is consistent with previous work in the literature \cite{ABEOMP09,Bre20,IMO07,Woe13}.


\subsubsection{Our contribution.} In this paper we use additively separable preferences to formalise the three-dimensional variant of SR first proposed by Huang in 2007 \cite{CCHUANG2207TR}. The problem model can be equally viewed as a modified hedonic game with additively separable preferences \cite{AZIZ2013316,SUNG2010635}. We show that deciding if a stable matching exists is $\NP$-complete, even when valuations are binary (Section~\ref{sec:generalbinary}). In contrast, when valuations are binary and symmetric we show that a stable matching always exists and give an $O(|N|^3)$ algorithm for finding one, where $N$ is the set of agents (Sections~\ref{sec:3dsrsasbin_prelims} -- \ref{sec:3dsrsasbin_findingarbitrarystablematching}). We believe that this restriction to binary and symmetric preferences has practical as well as theoretical significance. For example, this model could be applied to a social network graph involving a symmetric ``friendship'' relation between users. Alternatively, in a setting involving real people it might be reasonable for an administrator to remove all asymmetric valuations from the original preferences.

We also consider the notion of \emph{utility} based on agents' valuations of their partners in a given matching. This leads us to the notion of \emph{utilitarian welfare} \cite{10.5555/2832249.2832313,10.5555/3398761.3398791} which is the sum of the utilities of all agents in a given matching. We consider the problem of finding a stable matching with maximum utilitarian welfare given an instance in which valuations are binary and symmetric. We prove that this optimisation problem is $\NP$-hard and provide a novel 2-approximation algorithm (Section~\ref{sec:3dsrsasbin_utilitarianwelfare}).

We continue in the next section (Section~\ref{sec:model}) with some preliminary definitions and results.



\section{Preliminary definitions and results}
\label{sec:model}
Let $N=\{\alpha_1,\dots,\alpha_{|N|}\}$ be a set of \emph{agents}. A \emph{triple} is an unordered set of three agents. A \emph{matching} $M$ comprises a set of pairwise disjoint triples. For any agent $\alpha_i$, if some triple in $M$ contains $\alpha_i$ then we say that $\alpha_i$ is \emph{matched} and use $M(\alpha_i)$ to refer to that triple. If no triple in $M$ contains $\alpha_i$ then we say that $\alpha_i$ is \emph{unmatched} and write $M(\alpha_i)=\varnothing$. Given a matching $M$ and two distinct agents $\alpha_i, \alpha_j$, if $M(\alpha_i)=M(\alpha_j)$ then we say that $\alpha_j$ is a \textit{partner} of $\alpha_i$.

We define \emph{additively separable preferences} as follows. Each agent $\alpha_i$ supplies a \textit{valuation function} $\mathit{val}_{\alpha_i} : N\setminus \{ {\alpha_i} \} \longrightarrow \mathbb{Z}$. Given agent $\alpha_i$, let the \emph{utility} of any set $S\subseteq N$ be $u_{\alpha_i}(S) = \sum\limits_{{\alpha_j}\in S \setminus \{ \alpha_i \}} \mathit{val}_{\alpha_i}({\alpha_j})$. We say that $\alpha_i \in N$ prefers some triple $t_1$ to another triple $t_2$ if $u_{\alpha_i}(t_1) > u_{\alpha_i}(t_2)$. An agent's preference between two distinct matchings depends only on that agent's partners in each matching, so given a matching $M$ we write $u_{\alpha_i}(M)$ as shorthand for $u_{\alpha_i}(M(\alpha_i))$. Let $V=\bigcup\limits_{\alpha_i \in N} \mathit{val}_{\alpha_i}$ be the collection of all valuation functions.

Suppose we have some pair $(N, V)$ and a matching $M$ involving the agents in $N$. We say that a triple $\{\alpha_{k_1}, \alpha_{k_2}, \alpha_{k_3} \}$ \textit{blocks} $M$ in $(N, V)$ if $u_{\alpha_{k_1}}(\{\alpha_{k_2}, \alpha_{k_3} \}) > u_{\alpha_{k_1}}(M), u_{\alpha_{k_2}}(\{\alpha_{k_1}, \alpha_{k_3} \}) > u_{\alpha_{k_2}}(M)$, and $u_{\alpha_{k_3}}(\{\alpha_{k_1}, \alpha_{k_2} \}) > u_{\alpha_{k_3}}(M)$. If no triple in $N$ blocks $M$ in $(N, V)$ then we say that $M$ is \textit{stable} in $(N, V)$. We say that $(N, V)$ \emph{contains} a stable matching if at least one matching exists in $(N, V)$ that is stable.

We now define the \emph{Three-Dimensional Stable Roommates problem with Additively Separable preferences} (3D-SR-AS). An instance of 3D-SR-AS is given by the pair $(N, V)$. The problem is to either find a stable matching in $(N, V)$ or report that no stable matching exists. In this paper we consider two different restrictions of this model. The first is when preferences are \emph{binary}, meaning $\mathit{val}_{\alpha_i}(\alpha_j) \in \{ 0, 1\}$ for any $\alpha_i, \alpha_j \in N$. The second is when preferences are also \emph{symmetric}, meaning $\mathit{val}_{\alpha_i}(\alpha_j)=\mathit{val}_{\alpha_j}(\alpha_i)$ for any $\alpha_i, \alpha_j \in N$.

Lemma~\ref{lem:3dsrsasbinblockerimprovement} illustrates a fundamental property of matchings in instances of 3D-SR-AS. We shall use it extensively in the proofs. \ifdefined\fullversion \else Throughout this paper the omitted proofs can be found in the full version \cite{fullversion3dsraspaper}.
\fi

\begin{restatable}{lem}{threedsrsasbinblockerimprovement}
\label{lem:3dsrsasbinblockerimprovement}
Given an instance $(N,V)$ of 3D-SR-AS, suppose that $M$ and $M'$ are matchings in $(N, V)$. Any triple that blocks $M'$ but does not block $M$ contains at least one agent $\alpha_i \in N$ where $u_{\alpha_i}(M') < u_{\alpha_i}(M)$.
\end{restatable}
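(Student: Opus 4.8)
The plan is to argue directly from the definition of a blocking triple, exploiting the fact that the blocking condition is a conjunction over the three agents of the triple. Write $t = \{\alpha_{k_1}, \alpha_{k_2}, \alpha_{k_3}\}$ for a triple that blocks $M'$ but does not block $M$. The central observation is that ``$t$ blocks $M$'' asserts that \emph{all three} agents of $t$ strictly prefer their two partners within $t$ to their partners in $M$; hence the failure of this condition means that at least one agent $\alpha_i \in t$ does not strictly prefer $t$, i.e.\ $u_{\alpha_i}(t \setminus \{\alpha_i\}) \le u_{\alpha_i}(M)$.

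First I would fix such a witnessing agent $\alpha_i \in t$. Next, since $t$ \emph{does} block $M'$, the blocking condition holds for every agent of $t$, and in particular for $\alpha_i$, so $u_{\alpha_i}(t \setminus \{\alpha_i\}) > u_{\alpha_i}(M')$. Combining the two inequalities gives
\[
u_{\alpha_i}(M') < u_{\alpha_i}(t \setminus \{\alpha_i\}) \le u_{\alpha_i}(M),
\]
so $u_{\alpha_i}(M') < u_{\alpha_i}(M)$, which exhibits the required agent. Note that $\alpha_i$ is one of the three agents of the blocking triple, matching the phrasing of the statement.

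There is no substantial obstacle here: the result is essentially immediate once the blocking condition is negated correctly. The only point requiring care is to keep the two uses of the utility function distinct --- $u_{\alpha_i}(t \setminus \{\alpha_i\})$ denotes the utility $\alpha_i$ derives from its two prospective partners inside $t$, whereas $u_{\alpha_i}(M)$ and $u_{\alpha_i}(M')$ (via the shorthand $u_{\alpha_i}(M(\alpha_i))$) denote the utility from its assigned partners --- and to ensure that the chosen $\alpha_i$ is precisely the agent for which the condition against $M$ fails, rather than an arbitrary agent of $t$.
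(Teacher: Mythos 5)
Your proof is correct and is essentially the paper's argument in contrapositive form: the paper assumes for contradiction that no agent satisfies $u_{\alpha_i}(M') < u_{\alpha_i}(M)$ and transfers the blocking inequalities from $M'$ to $M$, while you negate the conjunction defining ``blocks $M$'' directly to extract a witness $\alpha_i$ and chain $u_{\alpha_i}(M') < u_{\alpha_i}(t \setminus \{\alpha_i\}) \le u_{\alpha_i}(M)$. The two are the same inequality manipulation packaged differently, so no further comment is needed.
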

\ifdefined \fullversion
\begin{proof}
Suppose that the triple $\{ \alpha_{k_1}, \alpha_{k_2}, \alpha_{k_3} \}$ blocks $M'$. It follows that $u_{\alpha_{k_1}}(\{\alpha_{k_2},\allowbreak \alpha_{k_3} \}) >\allowbreak u_{\alpha_{k_1}}(M')$, $u_{\alpha_{k_2}}(\{\alpha_{k_1}, \alpha_{k_3} \}) >\allowbreak u_{\alpha_{k_2}}(M')$, and $u_{\alpha_{k_3}}(\{\alpha_{k_1}, \alpha_{k_2} \}) >\allowbreak u_{\alpha_{k_3}}(M')$. Suppose for a contradiction that no $\alpha_p\in \{ \alpha_{k_1}, \alpha_{k_2},\allowbreak \alpha_{k_3} \}$ exists where $u_{\alpha_p}(M') < u_{\alpha_p}(M)$ and hence $u_{\alpha_{k_r}}(M') \geq u_{\alpha_{k_r}}(M)$ for $1 \leq r \leq 3$. It follows that $u_{\alpha_{k_1}}(\{\alpha_{k_2},\allowbreak \alpha_{k_3} \}) >\allowbreak u_{\alpha_{k_1}}(M)$, $u_{\alpha_{k_2}}(\{\alpha_{k_1},\allowbreak \alpha_{k_3} \}) >\allowbreak u_{\alpha_{k_2}}(M)$, and $u_{\alpha_{k_3}}(\{\alpha_{k_1},\allowbreak \alpha_{k_2} \}) >\allowbreak u_{\alpha_{k_3}}(M)$ and thus that $\{ \alpha_{k_1}, \alpha_{k_2}, \alpha_{k_3} \}$ blocks $M$, a contradiction.
\end{proof}
\fi

We also make an observation that unmatched agents may be arbitrarily matched if required. The proof follows from Lemma~\ref{lem:3dsrsasbinblockerimprovement}.

\begin{restatable}{prop}{completematching}
\label{prop:completematching}
Suppose we are given an instance $(N,V)$ of 3D-SR-AS. Suppose $|N|=3k+l$ where $k \geq 0$ and $0\leq l < 3$. If a stable matching $M$ exists in $(N, V)$ then without loss of generality we may assume that $|M|=k$.
\end{restatable}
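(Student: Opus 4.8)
The plan is to take any stable matching $M$ with $|M| < k$ and repeatedly adjoin a triple built from three currently-unmatched agents, showing via Lemma~\ref{lem:3dsrsasbinblockerimprovement} that stability survives each step; after $k - |M|$ such steps we obtain a stable matching with exactly $k$ triples, which is the desired assumption. The bookkeeping is easy: since $|N| = 3k + l$, whenever $|M| < k$ the number of unmatched agents is $|N| - 3|M| = 3(k - |M|) + l \ge 3$, so three unmatched agents are always available to form the next triple, and the process halts with precisely $l$ agents unmatched.

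First I would fix a stable matching $M$ with $|M| < k$, pick any three unmatched agents $\alpha_p, \alpha_q, \alpha_r$, and set $M' = M \cup \{\{\alpha_p, \alpha_q, \alpha_r\}\}$. The heart of the argument is a comparison of each agent's utility under $M$ and under $M'$. Every agent matched in $M$ retains the same partners in $M'$, and every agent unmatched in both has utility $0$ in each; the only agents whose assignment changes are $\alpha_p, \alpha_q, \alpha_r$, each moving from being unmatched (utility $0$) to having two new partners. The key observation is that this move cannot decrease any agent's utility: an agent unmatched in $M$ satisfies $u_{\alpha_i}(M) = 0$, and the valuations summed to compute its utility in $M'$ are all non-negative, so $u_{\alpha_i}(M') \ge 0 = u_{\alpha_i}(M)$. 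Hence no agent satisfies $u_{\alpha_i}(M') < u_{\alpha_i}(M)$.

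With this in hand I would apply Lemma~\ref{lem:3dsrsasbinblockerimprovement} in contrapositive form: any triple that blocks $M'$ must contain an agent whose utility strictly decreased from $M$ to $M'$, but we have just shown that no such agent exists. Therefore any triple blocking $M'$ would also block $M$, and since $M$ is stable no such triple can exist, so $M'$ is stable. Iterating this extension until no further triple can be formed yields a stable matching with $|M| = k$.

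The step I expect to carry all the weight — and hence to be the main obstacle — is the non-decrease of utility, which rests squarely on valuations being non-negative, as they are in the binary setting where this proposition is subsequently applied. This is precisely where generality is delicate: Lemma~\ref{lem:3dsrsasbinblockerimprovement} identifies a utility drop as the \emph{necessary} ingredient for a new blocking triple to appear, and matching three mutually-disliked previously-unmatched agents can indeed drive their utility strictly below $0$. So the essential content of the proof is the recognition that, in the non-negative regime relevant to this paper, completing a stable matching into full triples is always ``free,'' whereas any attempt to argue the claim for unrestricted integer valuations would require a substantially different approach.
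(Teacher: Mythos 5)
Your proof is exactly the argument the paper intends: the paper offers nothing beyond the remark that unmatched agents ``may be arbitrarily matched if required'' and that the proof follows from Lemma~\ref{lem:3dsrsasbinblockerimprovement}, which is precisely your completion-plus-Lemma-1 scheme, and your counting ($|N|-3|M| = 3(k-|M|)+l \geq 3$ whenever $|M|<k$) is correct. In every place the proposition is actually invoked (Section~\ref{sec:generalbinary} and Section~\ref{sec:3dsrsasbin_utilitarianwelfare}, both with binary valuations) your non-negativity observation holds, no agent's utility can decrease under the extension, and the argument is airtight.

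However, your closing caveat understates the situation, and this is worth making precise. The proposition is stated for 3D-SR-AS, where $\mathit{val}_{\alpha_i} : N\setminus\{\alpha_i\} \to \mathbb{Z}$, and in that generality the statement is not merely out of reach of your method --- it is false, so no ``substantially different approach'' could rescue it. Take $N=\{x_1,\dots,x_5,\omega\}$ with $\mathit{val}_{x_i}(x_{i+1})=1$ (indices modulo $5$), all other valuations among the $x$-agents equal to $0$, and $\mathit{val}_{x_i}(\omega)=\mathit{val}_{\omega}(x_i)=-1$ for all $i$; here $k=2$. The matching $\{\{x_1,x_2,x_3\}\}$ is stable: $\omega$ obtains utility $-2$ in every triple and so never belongs to a blocking triple, and inspecting the ten triples of $x$-agents shows each contains an agent who does not strictly improve (for instance $\{x_3,x_4,x_5\}$ fails because $u_{x_5}(\{x_3,x_4\})=0$). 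But any matching of size $k=2$ must place $\omega$ with two agents $x_a,x_b$. If $b=a+1$ then $\{x_{a+4},x_a,x_{a+1}\}$ blocks, since $x_{a+4}$ improves from $0$ to $1$, $x_a$ from $0$ to $1$, and $x_{a+1}$ from $-1$ to $0$; if $b=a+2$ then $\{x_a,x_{a+1},x_{a+2}\}$ blocks, since $x_{a+1}$ improves from $0$ to $1$ and both $x_a$ and $x_{a+2}$ improve from $-1$. Hence every size-$2$ matching is unstable, exactly through the mechanism you identified: forcibly matched agents drop below utility $0$ and become willing participants in new blocking triples. So the honest verdict is that your proof establishes the proposition only under non-negative valuations; as a proof of the literal statement this is a genuine gap, but it is inherited from the paper itself, whose one-line justification via Lemma~\ref{lem:3dsrsasbinblockerimprovement} has precisely the same reach, and the restriction costs nothing for any of the paper's applications.
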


Finally, some notes on notation: in this paper, we use $L = \langle \dots \rangle$ to construct an ordered list of elements $L$. If $L$ and $L'$ are lists then we write $L \cdot L'$ meaning the concatenation of $L'$ to the end of $L$. We also write $L_i$ to mean the $i\textsuperscript{th}$ element of list $L$, starting from $i=1$, and $e \in L$ to describe membership of an element $e$ in $L$. When working with sets of sets, we write $\bigcup S$ to mean $\bigcup_{T\in S} T$.

\section{General binary preferences}
\label{sec:generalbinary}
Let 3D-SR-AS-BIN be the restriction of 3D-SR-AS in which preferences are binary but need not be symmetric.  In this section we establish the $\NP$-completeness of deciding whether a stable matching exists, given an instance $(N, V)$ of 3D-SR-AS-BIN.

\ifdefined\fullversion 

Given an instance $(N, V)$ of 3D-SR-AS-BIN and a matching $M$, it is straightforward to test in $O(|N|^3)$ time if $M$ is stable in $(N, V)$. This shows that the decision version of 3D-SR-AS-BIN belongs to the class $\NP$.

We present a polynomial-time reduction from Partition Into Triangles (PIT), which is the following decision problem: ``Given a simple undirected graph $G=(W,E)$ where $W=\{ w_1, w_2, \dots, w_{3q} \}$ for some integer $q$, can the vertices of $G$ be partitioned into $q$ disjoint sets $X=\{X_1, X_2, \dots, X_q\}$, each set containing exactly three vertices, such that for each $X_p=\{w_i,w_j,w_k\}\in X$ all three of the edges $\{w_i,w_j\}$, $\{w_i,w_k\}$, and $\{w_j,w_k\}$ belong to $E$?'' PIT is $\NP$-complete \cite{GJ79}.

The reduction from PIT to 3D-SR-AS-BIN is as follows (see  Figure~\ref{fig:3d_sr_as_binary_reduction}). Unless otherwise specified assume that $\mathit{val}_{\alpha_i}(\alpha_j)=0$ for any $\alpha_i, \alpha_j \in N$. For each vertex $w_i \in W$ create agents $a_i^1, a_i^2, b_i$ in $N$. Then set:
\begin{itemize}[noitemsep]
     \item $\mathit{val}_{a_i^1}(a_i^2)=\mathit{val}_{a_i^1}(b_i)=1$
     \item $\mathit{val}_{a_i^2}(a_i^1)=\mathit{val}_{a_i^2}(b_i)=1$
     \item $\mathit{val}_{b_i}(a_i^1)=\mathit{val}_{b_i}(a_i^2)=1$ and $\mathit{val}_{b_i}(b_j)=1$ if $\{v_i, v_j\} \in E$.
\end{itemize}
Next, for each $r$ where $1 \leq r \leq 6q$ create $p_r^1,p_r^2,p_r^3,p_r^4,p_r^5$ in $N$. Then set:
\begin{itemize}[noitemsep]
    \item $\mathit{val}_{p_r^1}(p_r^2)=\mathit{val}_{p_r^1}(p_r^3)=\mathit{val}_{p_r^1}(p_r^5)=1$
    \item $\mathit{val}_{p_r^2}(p_r^3)=\mathit{val}_{p_r^2}(p_r^4)=\mathit{val}_{p_r^2}(p_r^1)=1$
    \item $\mathit{val}_{p_r^3}(p_r^4)=\mathit{val}_{p_r^3}(p_r^5)=\mathit{val}_{p_r^3}(p_r^2)=1$
    \item $\mathit{val}_{p_r^4}(p_r^5)=\mathit{val}_{p_r^4}(p_r^1)=\mathit{val}_{p_r^4}(p_r^3)=1$
    \item $\mathit{val}_{p_r^5}(p_r^1)=\mathit{val}_{p_r^5}(p_r^2)=\mathit{val}_{p_r^5}(p_r^4)=1$.
\end{itemize}
We shall refer to $\{ p_r^1,\dots,p_r^5 \}$ as the \emph{$r\textsuperscript{th}$ pentagadget}. Note that $|N|=39q$.

It is straightforward to show that the reduction runs in polynomial time. To prove that the reduction is valid we show that a partition into triangles $X=\{X_1,X_2,\dots,X_q\}$ exists in $G$ if and only if a stable matching $M$ exists in $(N,V)$. In Section~\ref{sec:3dsrasbinfirstdirection} we consider the first direction and show that if a partition into triangles $X=\{X_1,X_2,\dots,X_q\}$ exists in $G$ then a stable matching $M$ exists in $(N, V)$. In Section~\ref{sec:3dsrasbinseconddirection} we consider the second direction and show that if a stable matching $M$ exists in $(N,V)$ then a partition into triangles $X = \{ X_1, X_2, \dots, X_q\}$ exists in $G$. Note that the only instance discussed is $(N, V)$ and hence we shorten ``blocks $M$ in $(N, V)$'' to simply ``blocks $M$''.

\begin{figure}[t]
  \centering
    \ifdefined \fullversion 
    \newcommand\binaryreductionscale{0.9}
\else
    \newcommand\binaryreductionscale{0.82}
\fi

\begin{tikzpicture}
    \begin{scope}[every node/.style={circle,thick,draw,inner sep=0.8mm,minimum size=1.6mm}, scale=\binaryreductionscale]
        \node[draw=none, align=center] (Pc) at (3,0.3) {$\times 6q$ pentagadgets};
        \node (pr2) at (3,5) {$p_r^2$};
        \node (pr3) at (4.9021,3.6180) {$p_r^3$};
        \node (pr4) at (4.1756,1.3820) {$p_r^4$};
        \node (pr5) at (1.8244,1.3820) {$p_r^5$};
        \node (pr1) at (1.0979,3.6180) {$p_r^1$};
        
        \node (bi) at (10,3) {$b_i$};
        \node (ai1) at (8.2,1.5) {$a_i^2$};
        \node (ai2) at (8.2,4.5) {$a_i^1$};
        \node[draw=none] (bk1) at (12.2,1.35) {};
        \node[draw=none] (bk2) at (12.2,2.45) {};
        \node[draw=none] (bkdots) at (12.5,1.9) {\dots};
        \node[draw=none, label={[shift={(0.2, -0.5)}]$b_k$}] (bj) at (12.2,3.55) {};
        \node[draw=none, label={[shift={(0.2, -0.5)}]$b_j$}] (bk) at (12.2,4.65) {};
        \node[draw=none, text width=4.5cm, align=center] (aic) at (10.2,0.2) {for each vertex $w_i\in W$ where $N(w_i)=\{ w_j, w_k, \dots \}$};
    \end{scope}
    \begin{scope}
        \foreach \from/\to in {pr1/pr2, pr2/pr3, pr3/pr4, pr4/pr5, pr5/pr1}
            \draw [thick, darrow] (\from) -- (\to);
        \foreach \from/\to in {pr1/pr3, pr3/pr5, pr5/pr2, pr2/pr4, pr4/pr1}
            \draw [thick, ->-] (\from) -- (\to);
         \foreach \from/\to in {bi/bj, bi/bk, bi/bk1, bi/bk2}
            \draw [thick, --->] (\from) -- (\to);
        \foreach \from/\to in {bi/ai1, ai1/ai2, ai2/bi}
            \draw [thick, -->-] (\from) -- (\to);
        \foreach \from/\to in {ai1/bi, ai2/ai1, bi/ai2}
            \path [thick, -<--] (\from) -- (\to);
    \end{scope}
    \end{tikzpicture}
    \vspace*{-48pt}
    \caption{The reduction from PIT to 3D-SR-AS-BIN. Each vertex represents an agent. An arc is present from agent $\alpha_i$ to agent $\alpha_j$ if $\mathit{val}_{\alpha_i}(\alpha_j) = 1$.}
    \label{fig:3d_sr_as_binary_reduction}
\end{figure}
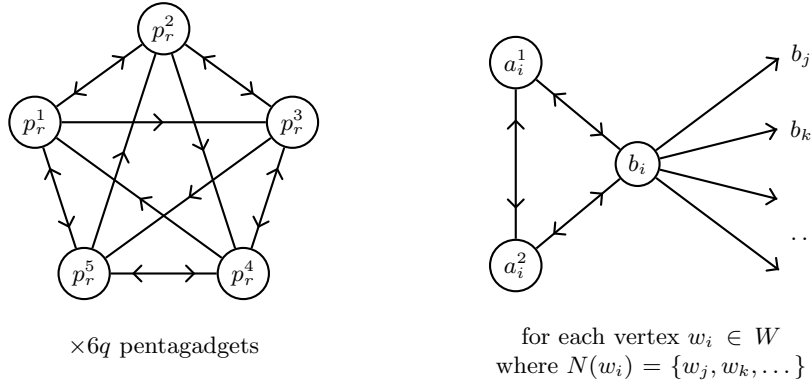

\subsection{Correctness of the reduction: first direction}\hfill
\label{sec:3dsrasbinfirstdirection}

\begin{lem}
\label{lem:3dsrasbinfirstdirection}
In the reduction, if a partition into triangles $X=\{X_1, X_2, \dots, X_q\}$ exists in $G$, then a stable matching exists in $(N, V)$.
\end{lem}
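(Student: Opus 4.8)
The plan is to construct an explicit stable matching $M$ from the partition $X$ and then verify stability by a short case analysis. For the construction I would first exploit the triangles directly: for each $X_p=\{w_i,w_j,w_k\}\in X$, since all three edges lie in $E$ the agents $b_i,b_j,b_k$ mutually value one another, so I place $\{b_i,b_j,b_k\}$ in $M$. This uses all $3q$ of the $b$-agents in $q$ triples. Next I would dispose of the pentagadgets. Observing that there are exactly $6q$ $a$-agents and $6q$ pentagadgets, I would fix an arbitrary bijection assigning a distinct $a$-agent $x_r$ to each pentagadget $r$, and add the two triples $\{p_r^1,p_r^2,p_r^3\}$ and $\{x_r,p_r^4,p_r^5\}$ to $M$. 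A quick count ($q+12q=13q$ triples covering all $39q=3\cdot 13q$ agents) confirms that $M$ is a matching, consistent with Proposition~\ref{prop:completematching}.

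The central step is a utility bookkeeping argument. I would compute $u_{\alpha_i}(M)$ for every agent and record that each one attains the maximum utility $2$ available in any triple — and hence cannot belong to any blocking triple, since membership of a blocking triple requires a strict improvement — except for the $a$-agents (utility $0$) and the three agents $p_r^3,p_r^4,p_r^5$ in each pentagadget (utility $1$ each). In particular every $b_i$ sits in a mutually-valuing triple and so attains utility $2$; this is precisely where the triangle hypothesis is used, and is the reason a non-triangle would fail. Consequently any triple that blocks $M$ is composed entirely of $a$-agents and of agents drawn from $\{p_r^3,p_r^4,p_r^5 : 1\le r\le 6q\}$.

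To finish I would rule out each possibility. Since an $a$-agent $a_i^1$ values only $a_i^2$ and $b_i$, a blocking triple containing it must also contain $a_i^2$ (as $b_i$ has maximum utility and so is excluded); but then its third member values neither $a_i^1$ nor $a_i^2$, gains utility $0$, and fails to improve strictly, a contradiction. Since pentagadget agents value only agents within the same pentagadget, a blocking triple drawn from the $p_r^3,p_r^4,p_r^5$ cannot mix two pentagadgets (a cross-pentagadget partner contributes $0$, leaving total utility at most $1$, no improvement on the current value $1$), so it must lie inside a single pentagadget and hence equal $\{p_r^3,p_r^4,p_r^5\}$. Here $p_r^3$ and $p_r^4$ would improve to utility $2$, but $p_r^5$ does not value $p_r^3$ — the pentagram arc runs $p_r^3\to p_r^5$ and not back — so $p_r^5$ keeps utility $1$ and does not strictly improve. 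Thus no blocking triple exists and $M$ is stable.

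I expect the main obstacle to be this last pentagadget calculation: the internal triple $\{p_r^1,p_r^2,p_r^3\}$ and the split $\{x_r,p_r^4,p_r^5\}$ must be chosen precisely so that the unique internal ``repair'' triple $\{p_r^3,p_r^4,p_r^5\}$ is defused by the single missing pentagram arc at $p_r^5$. A different split would leave a genuine blocking triple, so the verification really depends on the directed, asymmetric pentagram edges rather than on the pentagon edges alone.
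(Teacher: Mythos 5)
Your proposal is correct and follows essentially the same route as the paper's proof: the same matching (the paper's explicit pairing of $a_i^1$ with pentagadget $2i$ and $a_i^2$ with $2i-1$ is just one instance of your arbitrary bijection), the same utility bookkeeping, and the same elimination of blocking triples. The only cosmetic difference is in the pentagadget case: the paper sequentially excludes $p_r^5$, then $p_r^4$, then $p_r^3$ by noting each would need two already-excluded partners, whereas you pin down the unique candidate triple $\{p_r^3,p_r^4,p_r^5\}$ and defuse it via the missing arc $\mathit{val}_{p_r^5}(p_r^3)=0$ — both arguments rest on the same structural facts.
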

\begin{proof}
Suppose a partition into triangles $X = \{ X_1, X_2, \dots, X_q \}$ exists in $G$. We will construct a matching $M$ that is stable in $(N, V)$. For each triangle $X_p=\{ w_i, w_j, w_k \}\in W$, add $\{b_i, b_j, b_k\}$ to $M$. For each pentagadget with index $r$ where $1 \leq r \leq 6q$, add $\{p_r^1, p_r^2, p_r^3\}$ to $M$. This leaves agents $a_i^1$ and $a_i^2$ for each $1 \leq i \leq 3q$ and agents $p_r^4$ and $p_r^5$ for each $0 \leq r \leq 6q$. For each $1 \leq i \leq 3q$, add to $M$ the triples $\{a_i^1, p_{2i}^4, p_{2i}^5\}, \{a_i^2, p_{2i-1}^4, p_{2i-1}^5\}$. Now, in $M$:
\begin{itemize}[itemsep=0mm,parsep=1.2mm]
    \item For any pentagadget index $1 \leq r \leq 6q$:
        \begin{itemize}[itemsep=0mm,parsep=1.2mm]
            \item $u_{p_r^1}(M)=u_{p_r^2}(M)=2$, so neither $p_r^1$ nor $p_r^2$ belong to triples that block $M$.
            \item $u_{p_r^5}(M)=1$, so if $u_{p_r^5}$ belongs to a triple that blocks $M$ then that triple must contain two agents $\alpha_k, \alpha_l$ where $u_{p_r^5}(\{\alpha_k, \alpha_l\})=2$ and hence $\mathit{val}_{p_r^5}(\alpha_k) = \mathit{val}_{p_r^5}(\alpha_l) = 1$. Considering $\mathit{val}_{p_r^5}$, the only such agents are $p_r^1, p_r^4, p_r^2$. From above, neither $p_r^1$ nor $p_r^2$ belong to triples that block $M$. It follows that no such $\alpha_k, \alpha_l$ exist and hence $p_r^5$ does not belong to a triple that blocks $M$.
            \item $u_{p_r^4}(M)=1$, so if $u_{p_r^4}$ belongs to a triple that blocks $M$ then that triple must contain two agents $\alpha_k, \alpha_l$ where $u_{p_r^4}(\{\alpha_k, \alpha_l\})=2$ and hence $\mathit{val}_{p_r^4}(\alpha_k) = \mathit{val}_{p_r^4}(\alpha_l) = 1$. Considering $\mathit{val}_{p_r^4}$, the only such agents are $p_r^3, p_r^5, p_r^1$. From above, neither $p_r^1$ nor $p_r^5$ belong to triples that block $M$. It follows that no two such $\alpha_k, \alpha_l$ exist and hence $p_r^4$ also does not belong to a triple that blocks $M$.
            \item $u_{p_r^3}(M)=1$, so if $u_{p_r^3}$ belongs to a triple that blocks $M$ then that triple must contain two agents $\alpha_k, \alpha_l$ where $u_{p_r^3}(\{\alpha_k, \alpha_l\})=2$. Considering $\mathit{val}_{p_r^3}$, the only such agents are $p_r^2, p_r^4, p_r^5$. From above, these agents do not belong to triples that block $M$, so no such $\alpha_k, \alpha_l$ exist and hence $p_r^3$ also does not belong to a triple that blocks $M$.
        \end{itemize}
        \item $u_{b_i}(M)=2$ for any $1 \leq i \leq 3q$, so $b_i$ also does not belong to a triple that blocks $M$.
\end{itemize}
We have shown above that no pentagadget agent belongs to a triple that blocks $M$ and no $b_i$ for any $1 \leq i \leq 3q$ belongs to a triple that blocks $M$. The remaining possibility is that a blocking triple exists that contains three agents $\{a_i^{s_1}, a_j^{s_2}, a_k^{s_3}\}$ for some $1 \leq i,j,k \leq 3q$ and $s_1,s_2,s_3 \in \{1,2\}$. Since $a_i^{s_1}$ prefers this blocking triple to $M(a_i^{s_1})$, it must be that either $\mathit{val}_{a_i^{s_1}}(a_j^{s_2})=1$ or $\mathit{val}_{a_i^{s_1}}(a_k^{s_3})=1$, or both. For any $a_i^{s_1}$ where $s_1\in \{1, 2\}$, the only agent for which $\mathit{val}_{a_i^{s_1}}=1$ is $a_i^{3 - s_1}$. Assume then, without loss of generality, that the blocking triple contains $\{a_i^1, a_i^2, a_k^{s_4}\}$ for some $1 \leq i,k \leq 3q$ where $i\neq k$ and some ${s_4}\in \{1,2\}$. Note that $i\neq k$ because  ${s_4} \in \{1,2\}$. This leads to a contradiction, since $a_k^{s_4}$ must prefer this triple to $M(a_k^{s_4})$, but $u_{a_k^{s_4}}(\{a_i^1, a_i^2\})=0$ for any $1\leq k \leq 3q$ and any ${s_4}\in \{1,2\}$.
\end{proof}

\subsection{Correctness of the reduction: second direction}\hfill
\label{sec:3dsrasbinseconddirection}

In this section we assume that $M$ is a stable matching in $(N,V)$ where $|M|=|N|/3$ (by Proposition~\ref{prop:completematching}). We analyse its structure and construct a corresponding partition into triangles in $G$.

\begin{lem}
\label{lem:3dsrasbinpentagadgetagentsbelongtotwoagents}
For any $r$ where $1 \leq r \leq 6q$, the pentagadget agents $p_r^1, p_r^2, p_r^3,\allowbreak p_r^4, p_r^5$ belong to exactly two triples.
\end{lem}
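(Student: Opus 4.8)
The plan is to use stability to eliminate the only alternative to the claim. Since each triple has three slots, the five pentagadget agents occupy at least $\lceil 5/3 \rceil = 2$ triples, and since $M$ matches every agent (Proposition~\ref{prop:completematching}) it therefore suffices to show that they cannot be distributed among three or more triples. Throughout I would use the fact that $p_r^i$ assigns value $1$ only to $p_r^{i+1}, p_r^{i+2}, p_r^{i-1}$ (superscripts read modulo $5$, with representatives in $\{1,\dots,5\}$) and value $0$ to every agent outside the $r$\textsuperscript{th} pentagadget, so $u_{p_r^i}(M) \in \{0,1,2\}$ and equals the number of its co-matched pentagadget partners that it values. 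In particular, any pentagadget agent that is the unique pentagadget-$r$ agent in its triple has utility $0$.

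First I would record the only blocking test I need. For three cyclically consecutive agents a direct computation gives $u_{p_r^{j-2}}(\{p_r^{j-1},p_r^j\}) = 2$, $u_{p_r^{j-1}}(\{p_r^{j-2},p_r^j\}) = 2$, and $u_{p_r^{j}}(\{p_r^{j-2},p_r^{j-1}\}) = 1$. Hence the triple $\{p_r^{j-2}, p_r^{j-1}, p_r^{j}\}$ blocks $M$ if and only if $u_{p_r^{j-2}}(M) \le 1$, $u_{p_r^{j-1}}(M) \le 1$, and $u_{p_r^{j}}(M) = 0$. Stability of $M$ therefore imposes the constraint that for every agent $p_r^j$ with $u_{p_r^j}(M) = 0$, at least one of $u_{p_r^{j-1}}(M)$ and $u_{p_r^{j-2}}(M)$ equals $2$.

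I would then assume for contradiction that the pentagadget agents span at least three triples and produce an agent violating this constraint. Because three or more nonempty groups of agents summing to five must include a group of size one, at least one pentagadget agent is the lone pentagadget-$r$ agent in its triple and hence has utility $0$. I split on whether some triple contains all three of its pentagadget agents from pentagadget $r$. If not, then every pentagadget agent shares its triple with at most one other pentagadget agent, so every utility is at most $1$; the utility-$0$ agent just found then immediately violates the constraint, giving a blocking consecutive triple.

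The remaining, and hardest, case is when exactly one triple holds three pentagadget agents, which (given the span-three assumption) forces the distribution $3{+}1{+}1$, since here up to two agents may have utility $2$. Such a triple of three pentagadget agents is, on the $5$-cycle of cyclically consecutive agents, either three consecutive agents (utility multiset $\{2,2,1\}$) or a ``spread'' triple (multiset $\{2,1,1\}$), and in both sub-cases the other two pentagadget agents are utility-$0$ singletons. The obstacle is purely positional: I must verify that the at most two utility-$2$ agents, which all lie inside the single large triple, cannot simultaneously ``cover'' both utility-$0$ singletons in the sense of the constraint. A short check of each sub-case shows that one singleton, together with its two cyclic predecessors (both of utility $\le 1$), violates the constraint, so a consecutive triple blocks $M$. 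This contradicts stability in every case, so the pentagadget agents span exactly two triples.
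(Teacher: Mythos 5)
Your proposal is correct, and the ``short check'' you left implicit does go through: normalising the large triple by rotation, in the consecutive sub-case the triple $\{p_r^3,p_r^4,p_r^5\}$ has utilities $2,2,1$ and the singleton $p_r^2$ has cyclic predecessors $p_r^1$ (utility $0$) and $p_r^5$ (utility $1$), while in the spread sub-case the triple $\{p_r^2,p_r^4,p_r^5\}$ has utilities $1,1,2$ and the singleton $p_r^3$ has predecessors $p_r^2$ (utility $1$) and $p_r^1$ (utility $0$); in both cases your covering constraint fails, and the resulting blocking triples $\{p_r^5,p_r^1,p_r^2\}$ and $\{p_r^1,p_r^2,p_r^3\}$ are exactly the ones the paper exhibits. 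Where you genuinely differ is in the decomposition. The paper splits on the number of triples occupied (four or five; three with distribution $2{+}2{+}1$; three with $3{+}1{+}1$ in its two positional variants) and produces an ad hoc blocking triple in each case; in the four-or-five case it even uses a different mechanism, taking any three lone agents $p_r^{s_1},p_r^{s_2},p_r^{s_3}$ and arguing each values at least one of the other two because every pentagadget agent values three of its four mates. You instead isolate one reusable criterion --- a cyclically consecutive triple $\{p_r^{j-2},p_r^{j-1},p_r^j\}$ blocks if and only if $u_{p_r^{j-2}}(M)\le 1$, $u_{p_r^{j-1}}(M)\le 1$, and $u_{p_r^j}(M)=0$ --- together with its contrapositive covering constraint, and then split only on whether some triple contains three pentagadget agents. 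This merges the paper's four-or-five-triples case and its $2{+}2{+}1$ case into a single uniform step (all utilities are at most $1$, so any utility-$0$ singleton is uncovered), though the $3{+}1{+}1$ case still requires the same two positional sub-cases as the paper. The paper's route is more elementary case-by-case; yours buys a cleaner invariant that explains structurally why every blocking triple used is a consecutive arc of the $5$-cycle.
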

\begin{proof}
Suppose for a contradiction that agents $p_r^1,\dots,p_r^5$ belong to four or five triples in $M$. It must be that three of these agents, say $p_r^{s_1}, p_r^{s_2}$, and $p_r^{s_3}$, belong to triples in $M$ each containing no other agents from the $r\textsuperscript{th}$ pentagadget. It follows that $u_{p_r^{s_1}}(M) = u_{p_r^{s_2}}(M) = u_{p_r^{s_3}}(M) = 0$. Each agent in the $r\textsuperscript{th}$ pentagadget assigns a valuation of one to exactly three other agents in the same pentagadget. It follows that either $\mathit{val}_{p_r^{s_1}}(p_r^{s_2}) = 1$ or $\mathit{val}_{p_r^{s_1}}(p_r^{s_3}) = 1$, or both. It then follows that $u_{p_r^{s_1}}(\{p_r^{s_2}, p_r^{s_3}\}) \geq 1$. A symmetric argument shows that $u_{p_r^{s_2}}(\{p_r^{s_1}, p_r^{s_3}\}) \geq 1$ and $u_{p_r^{s_3}}(\{p_r^{s_1}, p_r^{s_2}\}) \geq 1$. The triple $\{p_r^{s_1}, p_r^{s_2},p_r^{s_3}\}$ therefore blocks $M$, which is a contradiction.

Suppose then that the agents $p_r^1,\dots,p_r^5$ belong to three triples in $M$. Since there are five agents in $p_r^1,\dots,p_r^5$, there are two possibilities:
\begin{itemize}
    \item Two of the triples each contain exactly two agents in $\{ p_r^1,\dots,p_r^5 \}$ and the third triple contains exactly one agent in $\{ p_r^1,\dots,p_r^5 \}$. Due to the symmetry of the pentagadget, assume without loss of generality that $p_r^1$ is the sole agent from $p_r^1, \dots, p_r^5$ that belongs to the third triple. It follows that $u_{p_r^1}(M)=0$. The four agents $\{ p_r^2, \dots, p_r^5 \}$ each have at most one partner in $\{ p_r^1, \dots, p_r^5 \}$ in $M$. It follows that the utility in $M$ of each of these four agents is at most one. It follows that $\{ p_r^1, p_r^4, p_r^5 \}$ blocks $M$, since $u_{p_r^4}(\{p_r^5, p_r^1\}) = u_{p_r^5}(\{p_r^1, p_r^4\}) = 2$ and $u_{p_r^1}(\{p_r^4, p_r^5\}) = 1$. This is a contradiction.
    
    \item Two of the triples each contain exactly one agent in $\{ p_r^1, \dots, p_r^5 \}$ and the third triple contains exactly three agents in $\{ p_r^1, \dots, p_r^5 \}$. Suppose $p_r^{s_1}$ and $p_r^{s_2}$ are the two agents in the former two triples such that $s_1 \bmod 5 < s_2$. It follows that $u_{p_r^{s_1}}(M) = u_{p_r^{s_1}}(M) = 0$. Since there are five agents in $\{ p_r^1, \dots, p_r^5 \}$, there are two further possible cases:
    
    \begin{itemize} 
        \item Suppose $s_2 = (s_1 \bmod 5) + 1$. By the symmetry of the pentagadget, assume without loss of generality that $s_1=1$ and $s_2=2$. It follows that $\{ p_r^3, p_r^4, p_r^5 \} \in M$. Note that $u_{p_r^5}(M)=1$. The triple $\{p_r^5, p_r^1, p_r^2 \}$ blocks $M$ since $u_{p_r^5}(\{p_r^1, p_r^2\}) = u_{p_r^1}(\{p_r^2, p_r^5\}) = 2$ and $u_{p_r^2}(\{p_r^1, p_r^5\}) = 1$. This is a contradiction.
        \item Suppose $s_2 = ((s_1 + 1) \bmod 5) + 1$. By the symmetry of the pentagadget, assume without loss of generality that $s_1=1$ and $s_2=3$. It follows that $\{ p_r^2, p_r^4, p_r^5 \} \in M$. Note that $u_{p_r^2}(M)=1$. The triple $\{p_r^1, p_r^2, p_r^3 \}$ blocks $M$ since $u_{p_r^1}(\{p_r^2, p_r^3\}) = u_{p_r^2}(\{p_r^1, p_r^3\}) = 2$ and $u_{p_r^3}(\{p_r^1, p_r^2\}) = 1$. This is also a contradiction.
    \end{itemize}
\end{itemize}

In summary, we have shown that the five agents $p_r^1, \dots, p_r^5$ do not belong to three, four, or five different triples in $M$. It follows that these five agents belong to exactly two triples in $M$.
\end{proof}

\begin{lem}
\label{lem:3dsrasbinallairscores0}
For any $1 \leq i \leq 3q$ and any $s\in \{1,2\}$, $u_{a_i^s}(M)=0$.
\end{lem}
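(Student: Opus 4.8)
The plan is to use Lemma~\ref{lem:3dsrasbinpentagadgetagentsbelongtotwoagents} to pin down the global structure of $M$ and then to exhibit a single blocking triple that forces every $a$-agent to be matched entirely to pentagadget agents. First I would determine the shape of the two triples occupied by the $r$th pentagadget. Since its five agents lie in exactly two triples and each triple has three slots, these agents split as $(3,2)$: one triple $T_1^{(r)}$ consists of three pentagadget-$r$ agents (hence nothing else), while the other, $T_2^{(r)}$, contains exactly two pentagadget-$r$ agents together with one \emph{external} agent. That external agent cannot lie in another pentagadget $r'$: otherwise $T_2^{(r)}$ would be one of the two triples of pentagadget $r'$ while containing only a single agent of $r'$, forcing the remaining four agents of $r'$ into one triple, which is impossible. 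So every external agent is a vertex-gadget agent $a_j^t$ or $b_j$.

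Next I would count. Over the $6q$ pentagadgets the triples $T_2^{(r)}$ are pairwise distinct (a common triple would contain two agents from each of two pentagadgets), and each agent occupies a unique triple of $M$, so their external agents are $6q$ distinct vertex-gadget agents. As there are $6q$ agents of the form $a_j^t$ and $3q$ of the form $b_j$, it suffices to show that no $b$-agent is external: then all $6q$ external agents are $a$-agents, and since there are exactly $6q$ such agents, \emph{every} $a$-agent is external.

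The crux is ruling out external $b$-agents, for which I would invoke stability. Suppose $b_i \in T_2^{(r)}$; then both partners of $b_i$ are pentagadget agents, so $u_{b_i}(M)=0$. Matching $a_i^1$ with $b_i$ would require $a_i^1 \in T_2^{(r)}$, which it is not, and similarly for $a_i^2$; since the only agents $a_i^1$ and $a_i^2$ value besides $b_i$ are each other, we get $u_{a_i^1}(M)\le 1$ and $u_{a_i^2}(M)\le 1$. Then $\{a_i^1,a_i^2,b_i\}$ blocks $M$, as $u_{a_i^1}(\{a_i^2,b_i\})=u_{a_i^2}(\{a_i^1,b_i\})=2$ and $u_{b_i}(\{a_i^1,a_i^2\})=2$, each strictly exceeding the utility in $M$, contradicting stability. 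Hence every $a$-agent is matched with two pentagadget agents, and as it values no pentagadget agent, $u_{a_i^s}(M)=0$ for all $i$ and $s$.

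I expect the main obstacle to be the exhaustion step rather than the blocking triple itself: one must check that the external agents are genuinely distinct and that their count matches the number of $a$-agents exactly, so that excluding $b$-agents forces \emph{all} $a$-agents to be external. The blocking-triple argument is local and routine once this structure is established.
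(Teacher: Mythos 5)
Your proposal is correct and follows essentially the same route as the paper's proof: use Lemma~\ref{lem:3dsrasbinpentagadgetagentsbelongtotwoagents} to obtain the $(3,2)$ split with one external agent per pentagadget, rule out pentagadget agents (same cardinality contradiction) and $b$-agents (same blocking triple $\{b_i, a_i^1, a_i^2\}$) as the external agent, and then count the $6q$ distinct external agents against the $6q$ agents of the form $a_i^s$. Your explicit verification that the triples $T_2^{(r)}$ are pairwise distinct is a slightly more careful rendering of the paper's remark that each pentagadget identifies a unique external agent, but the argument is the same.
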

\begin{proof}
Consider an arbitrary pentagadget index $1 \leq {r_1} \leq 6q$.

By Lemma~\ref{lem:3dsrasbinpentagadgetagentsbelongtotwoagents}, the five agents $p_{r_1}^1, \dots, p_{r_1}^5$ belong to exactly two triples in $M$. It follows that one of these two triples contains exactly three agents in $\{ p_{r_1}^1, \dots, p_{r_1}^5 \}$ and the other triple contains the two remaining agents in $\{ p_{r_1}^1, \dots, p_{r_1}^5 \}$ as well as some third agent, say $\alpha_h$. Note that $u_{\alpha_h}(M) = 0$.

Suppose ${\alpha_h} = p_{r_2}^t$ for some $1 \leq {r_2} \leq 6q$ and some $1 \leq t \leq 5$. It follows that ${r_1} \neq {r_2}$. Since the triple $M(p_{r_2}^t)$ contains $p_{r_2}^t$ and two agents in $\{ p_{r_1}^1, \dots, p_{r_1}^5 \}$, it follows that the four agents in $\{ p_{r_2}^1, \dots, p_{r_2}^5\} \setminus \{ p_{r_2}^t \}$ belong to at least two triples in $M$. In total, the five agents $p_{r_2}^1, \dots, p_{r_2}^5$ belong to three or more triples, which contradicts Lemma~\ref{lem:3dsrasbinpentagadgetagentsbelongtotwoagents}. It follows that ${\alpha_h} \neq p_{r_2}^t$ for any $1 \leq {r_2} \leq 6q$ and any $1 \leq t \leq 5$.

Suppose then that ${\alpha_h} = b_j$ for some $1 \leq j \leq 3q$. Consider $a_j^1$ and $a_j^2$ and their respective valuation functions. Since $a_j^1 \notin M(b_j)$ and $a_j^2 \notin M(b_j)$, it follows that $u_{a_j^1}(M) \leq 1$ and $u_{a_j^2}(M) \leq 1$. Recalling that $u_{\alpha_h}(M) = u_{b_j}(M) = 0$, it follows that $\{b_j, a_j^1, a_j^2\}$ blocks $M$, since $u_{b_j}(\{a_j^1, a_j^2\}) = u_{a_j^1}(\{b_j, a_j^2\}) = u_{a_j^2}(\{b_j, a_j^1\}) = 2$. This is a contradiction. It follows that ${\alpha_h} \neq b_j$ for any $1 \leq j \leq 3q$.

It remains that ${\alpha_h} = a_i^s$ for some $1 \leq i \leq 3q$ and some $s \in \{1,2\}$. The initial selection of ${r_1}$ is arbitrary, and each $1 \leq {r_1} \leq 6q$ identifies a unique pentagadget $\{ p_{r_1}^1, \dots, p_{r_1}^5 \}$ and therefore a unique agent ${\alpha_h}$. There are therefore $6q$ unique agents $\alpha_h = a_i^s$ where $u_{a_i^s}(M)=0$ for some $1 \leq i \leq 3q$ and some $s \in \{1,2\}$. This shows that $u_{a_i^s}(M)=0$ for every $1 \leq i \leq 3q$ and every $s \in \{1,2\}$.
\end{proof}

\begin{lem}
\label{lem:3dsrasbinallbiscores2}
$u_{b_i}(M)=2$ for any $1 \leq i \leq 3q$.
\end{lem}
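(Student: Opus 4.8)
The plan is to derive the lemma as a direct consequence of Lemma~\ref{lem:3dsrasbinallairscores0} by exhibiting the blocking triple $\{b_i, a_i^1, a_i^2\}$ whenever $u_{b_i}(M)$ falls short of $2$. This is precisely the triple used in the final case of the proof of Lemma~\ref{lem:3dsrasbinallairscores0}, so the groundwork is already in place; the present statement simply repackages that blocking argument.

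First I would record the trivial upper bound: in $M$ the agent $b_i$ has exactly two partners, and $b_i$ assigns a valuation of at most $1$ to any agent, so $u_{b_i}(M) \leq 2$. It therefore suffices to rule out $u_{b_i}(M) \leq 1$. Suppose for a contradiction that $u_{b_i}(M) < 2$ for some $1 \leq i \leq 3q$, and consider the triple $\{b_i, a_i^1, a_i^2\}$. From the construction, $\mathit{val}_{b_i}(a_i^1) = \mathit{val}_{b_i}(a_i^2) = 1$, so $u_{b_i}(\{a_i^1, a_i^2\}) = 2 > u_{b_i}(M)$. Likewise $\mathit{val}_{a_i^1}(b_i) = \mathit{val}_{a_i^1}(a_i^2) = 1$ and $\mathit{val}_{a_i^2}(b_i) = \mathit{val}_{a_i^2}(a_i^1) = 1$, so $u_{a_i^1}(\{b_i, a_i^2\}) = u_{a_i^2}(\{b_i, a_i^1\}) = 2$. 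By Lemma~\ref{lem:3dsrasbinallairscores0} we have $u_{a_i^1}(M) = u_{a_i^2}(M) = 0$, so each of these two utilities strictly exceeds the corresponding utility in $M$. All three agents thus strictly prefer $\{b_i, a_i^1, a_i^2\}$ to their triples in $M$, so this triple blocks $M$, contradicting the stability of $M$. Hence $u_{b_i}(M) = 2$.

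I do not expect any genuine obstacle here: the only inputs are the binary valuations fixed in the reduction and the already-established fact that every $a_i^s$ has utility $0$ in $M$. The single point requiring a moment's care is the trivial bound $u_{b_i}(M) \leq 2$, which is what guarantees that excluding the cases $u_{b_i}(M) \in \{0,1\}$ forces $u_{b_i}(M) = 2$ rather than leaving some larger value open.
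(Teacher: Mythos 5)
Your proof is correct and follows essentially the same route as the paper: assume $u_{b_i}(M) < 2$, invoke Lemma~\ref{lem:3dsrasbinallairscores0} to get $u_{a_i^1}(M) = u_{a_i^2}(M) = 0$, and exhibit $\{b_i, a_i^1, a_i^2\}$ as a blocking triple, contradicting stability. The only difference is that you make explicit the trivial upper bound $u_{b_i}(M) \leq 2$, which the paper leaves implicit; this is a harmless (indeed slightly tidier) addition.
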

\begin{proof}
Suppose not and there exists some $1 \leq i \leq 3q$ such that $u_{b_i}(M) < 2$. Lemma~\ref{lem:3dsrasbinallairscores0} shows that $u_{a_i^1}(M) = u_{a_i^2}(M) = 0$. Considering the valuation functions of $a_i^1$, $a_i^2$, and $b_i$, it can be seen that $u_{b_i}(\{a_i^1, a_i^2\}) = u_{a_i^1}(\{b_i, a_i^2\}) = u_{a_i^2}(\{b_i, a_i^1\}) = 2$. It follows that $\{b_i, a_i^1, a_i^2\}$ blocks $M$, which is a contradiction.
\end{proof}

\begin{lem}
\label{lem:3dsrasbinallbiintriplestogether}
For any $b_i$ where $1 \leq i \leq 3q$, the triple $M(b_i)$ comprises $\{b_i, b_j, b_k\}$ for some $1 \leq j,k \leq 3q$ where $\{w_i, w_j\}, \{w_j, w_k\} \in E$.
\end{lem}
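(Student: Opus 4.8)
The plan is to read off the structure of $M(b_i)$ directly from the exact utility values that Lemmas~\ref{lem:3dsrasbinallairscores0} and~\ref{lem:3dsrasbinallbiscores2} have already pinned down. First I would note that, since we are working under the assumption $|M|=|N|/3$ (justified by Proposition~\ref{prop:completematching}), every agent is matched and every triple of $M$ contains exactly three agents; in particular $b_i$ has exactly two partners in $M$. By Lemma~\ref{lem:3dsrasbinallbiscores2}, $u_{b_i}(M)=2$. As $\mathit{val}_{b_i}$ takes values only in $\{0,1\}$, this sum over $b_i$'s two partners equals $2$ only if $b_i$ assigns value $1$ to each of those two partners.

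Next I would identify the possible partners. By construction, the only agents $\alpha$ with $\mathit{val}_{b_i}(\alpha)=1$ are $a_i^1$, $a_i^2$, and those $b_j$ with $\{w_i,w_j\}\in E$. I then rule out $a_i^1$ and $a_i^2$ as partners of $b_i$: if $a_i^s\in M(b_i)$ for some $s\in\{1,2\}$, then since $\mathit{val}_{a_i^s}(b_i)=1$ we would have $u_{a_i^s}(M)\geq 1$, contradicting Lemma~\ref{lem:3dsrasbinallairscores0}. Hence both partners of $b_i$ must be of the form $b_j,b_k$ for some $1\leq j,k\leq 3q$ with $\{w_i,w_j\},\{w_i,w_k\}\in E$. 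This already gives $M(b_i)=\{b_i,b_j,b_k\}$ and, in particular, the first required edge $\{w_i,w_j\}\in E$.

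Finally, to obtain the edge $\{w_j,w_k\}$, I would re-apply the same reasoning to $b_j$. Since $b_j\in M(b_i)$ we have $M(b_j)=\{b_i,b_j,b_k\}$, so the two partners of $b_j$ are exactly $b_i$ and $b_k$. Lemma~\ref{lem:3dsrasbinallbiscores2} gives $u_{b_j}(M)=2$, so by the same value-$1$ argument $b_j$ assigns value $1$ to both of its partners; in particular $\mathit{val}_{b_j}(b_k)=1$, which means $\{w_j,w_k\}\in E$, the second required edge. I do not expect a genuine obstacle here, since the result is essentially a bookkeeping consequence of the exact utilities already forced by the earlier lemmas. The one point that needs care is the asymmetry of the valuations: the contradiction with Lemma~\ref{lem:3dsrasbinallairscores0} must be drawn from $\mathit{val}_{a_i^s}(b_i)=1$ rather than from $\mathit{val}_{b_i}(a_i^s)$, and the edge $\{w_j,w_k\}$ must be read off from $b_j$'s valuation function, not from $b_i$'s.
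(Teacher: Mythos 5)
Your proof is correct and follows essentially the same approach as the paper's: both pin down $M(b_i)$ from $u_{b_i}(M)=2$ (Lemma~\ref{lem:3dsrasbinallbiscores2}), rule out $a_i^1,a_i^2$ as partners via $\mathit{val}_{a_i^s}(b_i)=1$ together with Lemma~\ref{lem:3dsrasbinallairscores0}, and conclude that both partners are $b$-agents corresponding to neighbours of $w_i$. Your explicit final step---re-applying the argument to $b_j$ to read $\{w_j,w_k\}\in E$ off $\mathit{val}_{b_j}$---is in fact slightly more careful than the paper's own proof, which directly establishes only $\{w_i,w_j\},\{w_i,w_k\}\in E$ and leaves the stated edge $\{w_j,w_k\}$ implicit.
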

\begin{proof}
Lemma~\ref{lem:3dsrasbinallbiscores2} shows that $u_{b_i}(M)=2$. Suppose $M(b_i)=\{ b_i, \alpha_k, \alpha_l \}$ for some $\alpha_k, \alpha_l\in N$. Since $u_{b_i}(M)=2$, It must be that $\mathit{val}_{b_i}(\alpha_k)=1$ and hence either $\alpha_k = a_i^s$ for some $s\in \{1,2\}$ or $\alpha_k = b_j$ for some $1\leq j \leq 3q$ where $\{w_i, w_j\}\in E$.

Suppose first that $\alpha_k = a_i^s$ for some $s\in \{1,2\}$. Since $\mathit{val}_{a_i^s}(b_i)=1$, it follows that $u_{a_i^s}(M) \geq 1$, which contradicts Lemma~\ref{lem:3dsrasbinallairscores0}. It follows that $\alpha_k = b_j$ for some $1\leq j \leq 3q$ where $\{ w_i, w_j \} \in E$. Similarly, it can be shown that $\alpha_l = b_k$ for some $1\leq k \leq 3q$ where $\{ w_i, w_j \} \in E$. It follows that $M(b_i) = \{ b_i, b_j, b_k \}$ for some $1 \leq j,k \leq 3q$ where $\{w_i, w_j\}, \{w_j, w_k\} \in E$.
\end{proof}

\begin{lem}
\label{lem:3dsrasbinpitexists}
A partition into triangles exists in $G$.
\end{lem}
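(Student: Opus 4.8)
The plan is to read the desired partition directly off the triples of $M$ that consist entirely of $b$-agents. By Lemma~\ref{lem:3dsrasbinallbiintriplestogether}, for every $b_i$ with $1 \leq i \leq 3q$ the triple $M(b_i)$ has the form $\{b_i, b_j, b_k\}$ and hence contains three $b$-agents. Since the triples of $M$ are pairwise disjoint and there are exactly $3q$ agents $b_1, \dots, b_{3q}$, these agents are partitioned by $M$ into exactly $q$ triples, each consisting solely of $b$-agents.

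Next I would show that each such triple corresponds to a triangle in $G$. Fix a triple $\{b_i, b_j, b_k\}$ of $M$. By Lemma~\ref{lem:3dsrasbinallbiscores2} we have $u_{b_i}(M) = u_{b_j}(M) = u_{b_k}(M) = 2$. Since $M(b_i) = \{b_i, b_j, b_k\}$, it follows that $\mathit{val}_{b_i}(b_j) + \mathit{val}_{b_i}(b_k) = u_{b_i}(M) = 2$, and as valuations are binary both summands equal $1$. By the construction of the reduction this gives $\{w_i, w_j\} \in E$ and $\{w_i, w_k\} \in E$. Applying the identical argument to $b_j$ yields $\{w_j, w_k\} \in E$. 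Hence all three edges $\{w_i, w_j\}$, $\{w_i, w_k\}$, and $\{w_j, w_k\}$ belong to $E$, so $\{w_i, w_j, w_k\}$ is a triangle in $G$.

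Finally, I would let $X$ be the collection of vertex sets $\{w_i, w_j, w_k\}$ obtained from the $q$ all-$b$ triples of $M$. Because the agents $b_1, \dots, b_{3q}$ are in bijection with the vertices $w_1, \dots, w_{3q}$ and are partitioned by $M$ into $q$ disjoint triples, the sets in $X$ are pairwise disjoint, each has size three, and together they cover $W$. Combined with the previous paragraph, $X$ is a partition of $W$ into $q$ triangles, which is precisely what is required.

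I do not anticipate a genuine obstacle here, since the structural work has already been carried out in Lemmas~\ref{lem:3dsrasbinpentagadgetagentsbelongtotwoagents}--\ref{lem:3dsrasbinallbiintriplestogether}. The only point requiring care is deducing all three edges of each triangle rather than only two: this is why the utility-$2$ property must be applied to at least two of the three agents in each triple, rather than relying on a single agent's valuations.
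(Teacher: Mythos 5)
Your proof is correct and takes essentially the same approach as the paper: both use Lemma~\ref{lem:3dsrasbinallbiintriplestogether} to conclude that the $b$-agents are partitioned by $M$ into exactly $q$ all-$b$ triples and then check that the corresponding vertex sets are pairwise adjacent, yielding the partition into triangles. Your only deviation is re-deriving the edge relations from the utility-two property of Lemma~\ref{lem:3dsrasbinallbiscores2} (applied to two agents of each triple) instead of quoting them directly from Lemma~\ref{lem:3dsrasbinallbiintriplestogether}, which is an inessential variation.
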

\begin{proof}
Lemma~\ref{lem:3dsrasbinallbiintriplestogether} shows that for an arbitrary $b_i$ where $1 \leq i \leq 3q$, $M(b_i)$ comprises $\{b_i, b_j, b_k\}$ for some $1 \leq j,k \leq 3q$ where $\{w_i, w_j\}\in E$ and $\{w_i, w_k\}\in E$. It follows that there are exactly $q$ triples in $M$ each containing three agents $\{b_i, b_j, b_k\}$, where the three corresponding vertices $w_i, w_j, w_k$ are pairwise adjacent in $G$. From these triples of pairwise adjacent vertices, a partition into triangles $X$ can be easily constructed.
\end{proof}

\subsection{Conclusion}\hfill

\begin{restatable}{theorem}{threedsrasbinexistence}
Given an instance of 3D-SR-AS-BIN, the problem of deciding whether a stable matching exists is $\NP$-complete. The result holds even if each agent must be matched.
\end{restatable}
\begin{proof}
We have already shown that the decision version of 3D-SR-AS-BIN belongs to $\NP$. We presented a polynomial time reduction from Partition Into Triangles (PIT) to 3D-SR-AS-BIN. If a partition into triangles exists in the PIT instance $G=(W, E)$ then a stable matching $M$ exists in $(N, V)$ where $|M|=|N|/3$ (Lemma~\ref{lem:3dsrasbinfirstdirection}). If a stable matching $M$ exists in $(N, V)$ where $|M|=|N|/3$ then a partition into triangles exists in $G$ (Lemma~\ref{lem:3dsrasbinpitexists}).
\end{proof}

\else

\begin{figure}[t!]
  \centering
    \ifdefined \fullversion 
    \newcommand\binaryreductionscale{0.9}
\else
    \newcommand\binaryreductionscale{0.82}
\fi

\begin{tikzpicture}
    \begin{scope}[every node/.style={circle,thick,draw,inner sep=0.8mm,minimum size=1.6mm}, scale=\binaryreductionscale]
        \node[draw=none, align=center] (Pc) at (3,0.3) {$\times 6q$ pentagadgets};
        \node (pr2) at (3,5) {$p_r^2$};
        \node (pr3) at (4.9021,3.6180) {$p_r^3$};
        \node (pr4) at (4.1756,1.3820) {$p_r^4$};
        \node (pr5) at (1.8244,1.3820) {$p_r^5$};
        \node (pr1) at (1.0979,3.6180) {$p_r^1$};
        
        \node (bi) at (10,3) {$b_i$};
        \node (ai1) at (8.2,1.5) {$a_i^2$};
        \node (ai2) at (8.2,4.5) {$a_i^1$};
        \node[draw=none] (bk1) at (12.2,1.35) {};
        \node[draw=none] (bk2) at (12.2,2.45) {};
        \node[draw=none] (bkdots) at (12.5,1.9) {\dots};
        \node[draw=none, label={[shift={(0.2, -0.5)}]$b_k$}] (bj) at (12.2,3.55) {};
        \node[draw=none, label={[shift={(0.2, -0.5)}]$b_j$}] (bk) at (12.2,4.65) {};
        \node[draw=none, text width=4.5cm, align=center] (aic) at (10.2,0.2) {for each vertex $w_i\in W$ where $N(w_i)=\{ w_j, w_k, \dots \}$};
    \end{scope}
    \begin{scope}
        \foreach \from/\to in {pr1/pr2, pr2/pr3, pr3/pr4, pr4/pr5, pr5/pr1}
            \draw [thick, darrow] (\from) -- (\to);
        \foreach \from/\to in {pr1/pr3, pr3/pr5, pr5/pr2, pr2/pr4, pr4/pr1}
            \draw [thick, ->-] (\from) -- (\to);
         \foreach \from/\to in {bi/bj, bi/bk, bi/bk1, bi/bk2}
            \draw [thick, --->] (\from) -- (\to);
        \foreach \from/\to in {bi/ai1, ai1/ai2, ai2/bi}
            \draw [thick, -->-] (\from) -- (\to);
        \foreach \from/\to in {ai1/bi, ai2/ai1, bi/ai2}
            \path [thick, -<--] (\from) -- (\to);
    \end{scope}
    \end{tikzpicture}
    \vspace*{-48pt}
    \caption{The reduction from PIT to 3D-SR-AS-BIN. Each vertex represents an agent. An arc is present from agent $\alpha_i$ to agent $\alpha_j$ if $\mathit{val}_{\alpha_i}(\alpha_j) = 1$.}
    \label{fig:3d_sr_as_binary_reduction}
\end{figure}

\begin{restatable}{theorem}{threedsrasbinexistence}
Given an instance of 3D-SR-AS-BIN, the problem of deciding whether a stable matching exists is $\NP$-complete. The result holds even if each agent must be matched.
\end{restatable}
\begin{proofsketch}
Given an instance $(N, V)$ of 3D-SR-AS-BIN and a matching $M$, it is straightforward to test in $O(|N|^3)$ time if $M$ is stable in $(N, V)$. This shows that the decision version of 3D-SR-AS-BIN belongs to the class $\NP$.

We present a polynomial-time reduction from Partition Into Triangles (PIT), which is the following decision problem: ``Given a simple undirected graph $G=(W,E)$ where $W=\{ w_1, w_2, \dots, w_{3q} \}$ for some integer $q$, can the vertices of $G$ be partitioned into $q$ disjoint sets $X=\{X_1, X_2, \dots, X_q\}$, each set containing exactly three vertices, such that for each $X_p=\{w_i,w_j,w_k\}\in X$ all three of the edges $\{w_i,w_j\}$, $\{w_i,w_k\}$, and $\{w_j,w_k\}$ belong to $E$?'' PIT is $\NP$-complete \cite{GJ79}.

The reduction from PIT to 3D-SR-AS-BIN is as follows (see  Figure~\ref{fig:3d_sr_as_binary_reduction}). Unless otherwise specified assume that $\mathit{val}_{\alpha_i}(\alpha_j)=0$ for any $\alpha_i, \alpha_j \in N$. For each vertex $w_i \in W$ create agents $a_i^1, a_i^2, b_i$ in $N$. Then set $\mathit{val}_{a_i^1}(a_i^2)=\mathit{val}_{a_i^1}(b_i)=1$, $\mathit{val}_{a_i^2}(a_i^1)=\mathit{val}_{a_i^2}(b_i)=1$, $\mathit{val}_{b_i}(a_i^1)=\mathit{val}_{b_i}(a_i^2)=1$, and $\mathit{val}_{b_i}(b_j)=1$ if $\{w_i, w_j\} \in E$ for any $w_j\in N \setminus \{ w_i \}$. Next, for each $r$ where $1 \leq r \leq 6q$ create $p_r^1,p_r^2,p_r^3,p_r^4,p_r^5$ in $N$. Then set $\mathit{val}_{p_r^1}(p_r^2) = \mathit{val}_{p_r^1}(p_r^3) = \mathit{val}_{p_r^1}(p_r^5)=1$, $\mathit{val}_{p_r^2}(p_r^3)=\mathit{val}_{p_r^2}(p_r^4)=\mathit{val}_{p_r^2}(p_r^1)=1$, $\mathit{val}_{p_r^3}(p_r^4)=\mathit{val}_{p_r^3}(p_r^5)=\mathit{val}_{p_r^3}(p_r^2)=1$, $\mathit{val}_{p_r^4}(p_r^5)=\mathit{val}_{p_r^4}(p_r^1)=\mathit{val}_{p_r^4}(p_r^3)=1$, and $\mathit{val}_{p_r^5}(p_r^1)=\mathit{val}_{p_r^5}(p_r^2)=\mathit{val}_{p_r^5}(p_r^4)=1$. We shall refer to $\{ p_r^1,\dots,p_r^5 \}$ as the \emph{$r\textsuperscript{th}$ pentagadget}. Note that $|N|=39q$.
In the full proof of this theorem, contained in \cite{fullversion3dsraspaper}, we show that a partition into triangles $X$ exists in $G=(W,E)$ if and only if a stable matching $M$ exists in $(N, V)$ where $|M|=|N|/3$.\qed
\end{proofsketch}

\fi

\section{Symmetric binary preferences}
\label{sec:symmetricbinary}
Consider the restriction of 3D-SR-AS in which preferences are binary and symmetric, which we call \emph{3D-SR-SAS-BIN}. In this section we show that every instance of 3D-SR-SAS-BIN admits a stable matching. We give a step-by-step constructive proof of this result between Sections~\ref{sec:3dsrsasbin_prelims} -- \ref{sec:3dsrsasbin_findingarbitrarystablematching}, leading to an $O(|N|^3)$ algorithm for finding a stable matching. In Section~\ref{sec:3dsrsasbin_utilitarianwelfare} we consider an optimisation problem related to 3D-SR-SAS-BIN.
\subsection{Preliminaries}
\label{sec:3dsrsasbin_prelims}
An instance $(N, V)$ of 3D-SR-SAS-BIN corresponds to a simple undirected graph $G=(N,E)$ where $\{ \alpha_i, \alpha_j \} \in E$ if $\mathit{val}_{\alpha_i}(\alpha_j)=1$, which we refer to as the \emph{underlying graph}. 

We introduce a restricted type of matching called a \emph{$P$\nobreakdash-matching}. Recall that by definition, $M(\alpha_p)=\varnothing$ implies that $u_{\alpha_p}(M)=0$ for any $\alpha_p \in N$ in an arbitrary matching $M$. We say that a matching $M$ in $(N, V)$ is a \emph{$P$\nobreakdash-matching} if $M(\alpha_p) \neq \varnothing$ implies $u_{\alpha_p}(M) > 0$.

It follows that a $P$\nobreakdash-matching corresponds to a $\{ K_3, P_3 \}$-packing in the underlying graph \cite{KH83}. Note that any triple in a $P$\nobreakdash-matching $M$ must contain some agent with utility two. A \emph{stable $P$\nobreakdash-matching} is a $P$\nobreakdash-matching that is also stable. We will eventually show that any instance of 3D-SR-SAS-BIN contains a stable $P$\nobreakdash-matching.

In an instance $(N, V)$ of 3D-SR-SAS-BIN, a \emph{triangle} comprises three agents $\alpha_{m_1}, \alpha_{m_2},\allowbreak \alpha_{m_3}$ such that $\mathit{val}_{\alpha_{m_1}}(\alpha_{m_2}) = \mathit{val}_{\alpha_{m_2}}(\alpha_{m_3}) = \mathit{val}_{\alpha_{m_3}}(\alpha_{m_1}) = 1$. If $(N, V)$ contains no triangle then we say it is \emph{triangle-free}. If $(N, V)$ is not triangle-free then it can be reduced by successively removing three agents that belong to a triangle until it is triangle-free. This operation corresponds to removing a \emph{maximal triangle packing} (see \cite{CHATAIGNER20091396,KH83}) in the underlying graph and can be performed in $O(|N|^3)$ time. The resulting instance is triangle-free. We summarise this observation in the following lemma.

\begin{restatable}{lem}{threedsrsasbintrianglefree}
\label{lem:threedsrsasbintrianglefree}
Given an instance $(N, V)$ of 3D-SR-SAS-BIN, we can identify an instance $(N', V')$ of 3D-SR-SAS-BIN and a set of triples $M_{\triangle}$ in $O(|N|^3)$ time such that $(N', V')$ is triangle-free, $|N'|\leq |N|$, and if $M$ is a stable $P$\nobreakdash-matching in $(N', V')$ then $M' = M \cup M_{\triangle}$ is a stable $P$\nobreakdash-matching in $(N, V)$.
\end{restatable}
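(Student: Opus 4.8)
The plan is to reduce a general instance to a triangle-free one by iteratively removing triangles, and to show that stability and the $P$-matching property are both preserved when we glue a stable $P$-matching on the reduced instance back together with the removed triangles. Concretely, I would first describe the reduction explicitly: while $(N,V)$ contains a triangle $\{\alpha_{m_1},\alpha_{m_2},\alpha_{m_3}\}$, add this triple to $M_{\triangle}$ and delete the three agents from the instance, obtaining a subinstance on the remaining agents with the restricted valuation functions. Since each iteration removes three agents, at most $|N|/3$ iterations occur, and each iteration can search for a triangle in $O(|N|^3)$ time (or the whole process can be done in $O(|N|^3)$ by computing a maximal triangle packing once), so the overall running time is $O(|N|^3)$. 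This yields the claimed $(N',V')$ with $|N'|\le|N|$ that is triangle-free, together with the set $M_{\triangle}$.

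The substantive part is the correctness claim: if $M$ is a stable $P$-matching in $(N',V')$, then $M' = M \cup M_{\triangle}$ is a stable $P$-matching in $(N,V)$. I would argue this in two pieces. First, $M'$ is a $P$-matching: every agent matched in $M$ already has positive utility in $M'$ (its partners are unchanged), and every agent in a removed triangle forms a triangle, so each such agent has utility at least $1$ (indeed each triangle agent values at least one of its two partners, and by symmetry receives a reciprocal edge). The triples are pairwise disjoint because the triangle-removal process deletes agents before the next iteration, and $N'$ is disjoint from $\bigcup M_{\triangle}$. Second, and this is the key obstruction, I must show $M'$ is stable in $(N,V)$. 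Here I would invoke Lemma~\ref{lem:3dsrsasbinblockerimprovement}: I want to compare $M'$ against a reference matching to rule out blocking triples, but the cleaner route is a direct argument. Suppose some triple $T=\{\alpha_{k_1},\alpha_{k_2},\alpha_{k_3}\}$ blocks $M'$ in $(N,V)$. Each blocking agent strictly improves, so each has utility strictly greater than its $M'$-utility; in particular each $\alpha_{k_r}$ has $M'$-utility at most $1$, so none of them can be a triangle agent with utility $2$ — but every triangle agent in $M_{\triangle}$ has utility exactly $2$ (all three mutual edges present). Hence no blocking agent lies in $\bigcup M_{\triangle}$, so $T\subseteq N'$.

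Having confined $T$ to $N'$, I then need that $T$ blocks $M$ in $(N',V')$. This requires that for each $\alpha_{k_r}\in N'$, the utility $u_{\alpha_{k_r}}(M')$ in the full instance equals $u_{\alpha_{k_r}}(M)$ in the reduced instance — which holds because $M'(\alpha_{k_r})=M(\alpha_{k_r})$ for agents surviving in $N'$, and the blocking condition $u_{\alpha_{k_r}}(\{T\setminus\{\alpha_{k_r}\}\}) > u_{\alpha_{k_r}}(M')$ involves only valuations among agents of $T$, all of which lie in $N'$ and hence are preserved in $V'$. Therefore $T$ would block the stable $P$-matching $M$ in $(N',V')$, contradicting its stability. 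The main obstacle I anticipate is being careful about the utility bookkeeping across the two instances, in particular justifying that a blocking agent of $M'$ cannot be a triangle agent (the utility-$2$ observation is what makes this work, and it relies essentially on the triangle being a complete $K_3$ under the symmetric binary valuations) and that valuations restrict cleanly to the surviving agents so that blocking in $(N,V)$ among agents of $N'$ transfers to blocking in $(N',V')$.
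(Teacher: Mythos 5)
Your proposal is correct and takes essentially the same approach as the paper: remove a maximal triangle packing $M_{\triangle}$ in $O(|N|^3)$ time, observe that every agent in $\bigcup M_{\triangle}$ attains the maximum possible utility of two under $M' = M \cup M_{\triangle}$ and so cannot belong to a blocking triple, and conclude that any blocking triple would lie entirely within $N'$ and would therefore block $M$ in $(N', V')$, contradicting its stability. Your write-up simply makes explicit the utility bookkeeping and the transfer of blocking triples between the two instances that the paper's terse proof leaves implicit.
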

\ifdefined \fullversion
\begin{proof}
The set $M_{\triangle}$ corresponds to a \emph{maximal triangle packing} in the underlying graph \cite{CHATAIGNER20091396}, and thus can be found in $O(|N|^3)$ time. Let $N' = N \setminus \bigcup M_{\triangle}$. Construct $V'$ accordingly. Since each triple in  $M_{\triangle}$ corresponds to a triangle, any agent belonging to a triple in $M_{\triangle}$ gains utility two. It follows that if $M$ is a stable $P$\nobreakdash-matching in $(N', V')$ then $M'=M\cup M_{\triangle}$ is a stable $P$\nobreakdash-matching in $(N, V)$.
\end{proof}
\fi

\subsection{Repairing a \texorpdfstring{$P$}{P}-matching in a triangle-free instance}
\label{sec:3dsrsasbin_specialcase_algorithmsection}

In this section we consider an arbitrary triangle-free instance $(N, V)$ of 3D-SR-SAS-BIN. Since the only instance referred to in this section is $(N, V)$ so here we shorten ``is stable in $(N, V)$'' to ``is stable'', or similar. 

We first define a special type of $P$\nobreakdash-matching which is `repairable'. We then present Algorithm~\texttt{repair} (Algorithm~\ref{alg:3dsrsasbin_almostthere_algo}), which, given $(N,V)$ and a `repairable' $P$\nobreakdash-matching $M$, constructs a new $P$\nobreakdash-matching $M'$ that is stable. We shall see in the next section how this relates to a more general algorithm that, given a triangle-free instance, constructs a $P$\nobreakdash-matching that is stable in that instance.

Given a triangle-free instance $(N, V)$, we say a $P$\nobreakdash-matching $M$ is \emph{repairable} if it is not stable and there exists exactly one $\alpha_i \in N$ where $u_{\alpha_i}(M)=0$ and any triple that blocks $M$ comprises $\{ \alpha_i, \alpha_{j_1}, \alpha_{j_2}\}$ for some $\alpha_{j_1}, \alpha_{j_2}\in N$ where $u_{\alpha_{j_1}}(M)=1$, $u_{\alpha_{j_2}}(M)=0$, and $\mathit{val}_{\alpha_i}(\alpha_{j_1})=\mathit{val}_{\alpha_{j_1}}(\alpha_{j_2})=1$.


\ifdefined \fullversion

We now provide some intuition behind Algorithm~\texttt{repair} and refer the reader to Figure~\ref{fig:3d_sr_sas_bin_algorithm_7_cases_original_case}. Recall that the overall goal of the algorithm is to construct a stable $P$\nobreakdash-matching $M'$. Since the given $P$\nobreakdash-matching $M$ is repairable, our aim will be to modify $M$ such that $u_{\alpha_i}(M')\geq 1$ while ensuring that no three agents that are ordered to different triples in $M'$ block $M'$. The stability of the constructed $P$\nobreakdash-matching $M'$ then follows. We note that one way to achieve this aim would be to construct $M'$ such that $u_{\alpha_i}(M')\geq 1$ and $u_{\alpha_p}(M')\geq u_{\alpha_p}(M)$ for any $\alpha_p \in N\setminus \{ \alpha_i \}$, from which it follows that $M'$ is stable.

The algorithm begins by selecting some triple $\{ \alpha_i, \alpha_{j_1}, \alpha_{j_2} \}$ that blocks $M$. The two agents in $M(\alpha_{j_1}) \setminus \{ \alpha_{j_1} \}$ are labelled $\alpha_{j_3}$ and $\alpha_{j_4}$. We present two example cases in which it is possible to construct a stable $P$\nobreakdash-matching. 

First, suppose there exists some $\alpha_{z_1}$ where $\mathit{val}_{\alpha_{j_3}}(\alpha_{z_1})=1$ and $u_{\alpha_{z_1}}(M)=0$.  Construct $M'$ from $M$ by removing $\{ \alpha_{j_1}, \alpha_{j_2}, \alpha_{j_3} \}$ and adding $\{ \alpha_i, \alpha_{j_1}, \alpha_{j_2} \}$ and $\{ \alpha_{j_3}, \alpha_{j_4}, \alpha_{z_1} \}$. Now, $u_{\alpha_i}(M')=1$ and $u_{\alpha_p}(M')\geq u_{\alpha_p}(M)$ for any $\alpha_p \in N\setminus \{ \alpha_i \}$. It follows by Lemma~\ref{lem:3dsrsasbinblockerimprovement} that $M'$ is stable. Second, suppose there exists no such $\alpha_{z_1}$ but there exists some $\alpha_{z_2}$ where $\mathit{val}_{\alpha_{j_4}}(\alpha_{z_2})=1$ and $u_{\alpha_{z_2}}(M)=0$. Now construct $M'$ from $M$ by removing $\{ \alpha_{j_1}, \alpha_{j_2}, \alpha_{j_3} \}$ and adding $\{ \alpha_i, \alpha_{j_1}, \alpha_{j_2} \}$ and $\{ \alpha_{j_3}, \alpha_{j_4}, \alpha_{z_2} \}$. Note that $u_{\alpha_i}(M')=1$ and $u_{\alpha_p}(M')\geq u_{\alpha_p}(M)$ for any $\alpha_p \in N\setminus \{ \alpha_i, \alpha_{j_3} \}$. It can be shown that $\alpha_{j_3}$ does not belong to a triple that blocks $M'$ since no $\alpha_{z_1}$ exists as described. It follows again by Lemma~\ref{lem:3dsrsasbinblockerimprovement} that $M'$ is stable.

Generalising these two example cases, the algorithm constructs a list $S$ of agents, which initially comprises $\langle \alpha_{j_1}, \alpha_{j_3},\allowbreak \alpha_{j_4} \rangle$. The list $S$ has length $3c$ for some $c\geq 1$, where $\{S_{3c-2}, S_{3c-1}, S_{3c} \} \in M$ and $\mathit{val}_{S_p}(S_{p+1})=1$ for each $p$ ($1 \leq p < 3c$). The list $S$ therefore corresponds to a path in the underlying graph. In each iteration of the main loop, three agents belonging to some triple in $M$ are appended to the end of $S$. The loop continues until $S$ satisfies at least one of six specific conditions (shown in the first \texttt{if/else} statement). We show that eventually at least one of these conditions must hold. The algorithm then constructs $M'$. The exact construction of $M'$ depends on which condition(s) caused the main loop to terminate. Two of these conditions, and the corresponding constructions of $M'$, generalise the existence of $\alpha_{z_1}$ and $\alpha_{z_2}$ as described in the example cases.

\begin{algorithm}[b!]
\textbf{Input:} a triangle-free instance $(N,V)$ of 3D-SR-SAS-BIN, repairable $P$-matching $M$ in $(N, V)$ (Section~\ref{sec:3dsrsasbin_specialcase_algorithmsection}) with some such $\alpha_i\in N$.\\
\textbf{Output:} stable $P$-matching $M'$ in $(N,V)$
\smallskip

\begin{algorithmic}
\caption{Algorithm \texttt{repair} \label{alg:3dsrsasbin_almostthere_algo}} 

\State $\{ \alpha_{j_1}, \alpha_{j_2} \} \gets$ some $\alpha_{j_1}, \alpha_{j_2}\in N$ where $\{ \alpha_i, \alpha_{j_1}, \alpha_{j_2}\}$ blocks $M$ and $u_{\alpha_{j_1}}(M)=1$

\State $\{ \alpha_{j_3}, \alpha_{j_4} \} \gets M(\alpha_{j_1}) \setminus \{ \alpha_{{j_1}} \}$ where $u_{\alpha_{{j_3}}}(M)=2$

\smallskip

\State $S \gets \langle \alpha_{j_1}, \alpha_{j_3}, \alpha_{j_4} \rangle$
\State $c \gets 1$
\State $b \gets 0$

\State $\alpha_{z_1}, \alpha_{z_2}, \alpha_{y_1}, \alpha_{y_2}, \alpha_{w_1} \gets \bot$

\smallskip

\WhileNoDo{\textbf{true}}

\State $\alpha_{z_1} \gets$ some $\alpha_{z_1}\in N \setminus \{ \alpha_i \}$ where $\mathit{val}_{\alpha_{z_1}}(S_{3c-1})=1 $ and $u_{\alpha_{z_1}}(M)=0$, else $\bot$

\smallskip

\State $\alpha_{z_2} \gets$ some $\alpha_{z_2}\in N \setminus \{ \alpha_i, \alpha_{j_2} \}$ where $\mathit{val}_{\alpha_{z_2}}(S_{3c})=1 $ and $ u_{\alpha_{z_2}}(M)=0$, else $\bot$

\smallskip

\State $\alpha_{y_1} \gets$ some $\alpha_{y_1}\in N$ where $\mathit{val}_{S_{3c}}(\alpha_i)=\mathit{val}_{\alpha_{y_1}}(\alpha_i)=1 $ and $u_{\alpha_{y_1}}(M)=0$, else $\bot$

\smallskip

\State $\alpha_{y_2} \gets$ some $\alpha_{y_2}\in N$ where $\mathit{val}_{S_{3c}}(\alpha_{j_2})=\mathit{val}_{\alpha_{y_2}}(\alpha_{j_2})=1 $ and $u_{\alpha_{y_2}}(M)=0$, else $\bot$

\smallskip

\State $b \gets$ some $1 \leq b < c$ where $\mathit{val}_{S_{3b}}(\alpha_{j_2})=\mathit{val}_{S_{3c}}(S_{3b})=1$, else $0$

\smallskip

\State $\alpha_{w_1} \gets$ some $\alpha_{w_1}\in N$ where $\mathit{val}_{S_{3c}}(\alpha_{w_1})=1$, $u_{\alpha_{w_1}}(M)=1 $ and $\alpha_{w_1} \notin S$
\State $\hphantom{\alpha_{w_1}} $ and there exists some $\alpha_{z_3}\in N \setminus \{ \alpha_i \}$ where $\mathit{val}_{\alpha_{w_1}}(\alpha_{z_3})=1 $ and $ u_{\alpha_{z_3}}(M)=0$, \State $\hphantom{\alpha_{w_1}}$ else $\bot$

\smallskip

\If{$\alpha_{z_1} \neq \bot$ or $\alpha_{z_2} \neq \bot$ or $\alpha_{y_1} \neq \bot$ or $\alpha_{y_2} \neq \bot$ or $b>0$ or $\alpha_{w_1}=\bot$}

    \State \textbf{break}

\Else

\State $\{ \alpha_{w_2}, \alpha_{w_3} \} \gets M(\alpha_{w_1}) \setminus \{ \alpha_{{w_1}} \}$ where $u_{\alpha_{{w_2}}}(M)=2$


\State $S \gets S \cdot \langle \alpha_{{w_1}}, \alpha_{{w_2}}, \alpha_{{w_3}}  \rangle$
\State $c \gets c + 1$

\EndIf
\State \textbf{end if}

\EndWhile
\State \textbf{end while}
\medskip

\medskip
\State \emph{continued overleaf}
\medskip
\algstore{alg:3dsrsasbin_almostthere_algo}
\end{algorithmic}
\end{algorithm}

\begin{algorithm}[t!]
\begin{algorithmic}
\algrestore{alg:3dsrsasbin_almostthere_algo}
\addtocounter{algorithm}{-1} 
\caption{Algorithm \texttt{repair}}

\smallskip

\State \emph{continued from previous page}

\medskip

\If{$\alpha_{z_1} \neq \bot $ and $ \alpha_{z_1} \neq \alpha_{j_2}$} 
\LineComment{Case 1}

\State $M_{\textrm{S}} \gets \{\{ \alpha_i, \alpha_{j_1}, \alpha_{j_2} \}\} \cup \bigcup\limits_{1 \leq d < c}\{ \{ S_{3d-1}, S_{3d}, S_{3d+1} \} \} \cup \{\{ \alpha_{z_1}, S_{3c-1}, S_{3c} \}\}$

\smallskip

\ElsIf{$\alpha_{z_2} \neq \bot$} 
\LineComment{Case 2}

\State $M_{\textrm{S}} \gets \{\{ \alpha_i, \alpha_{j_1}, \alpha_{j_2} \}\} \cup \bigcup\limits_{1 \leq d < c}\{ \{ S_{3d-1}, S_{3d}, S_{3d+1} \} \} \cup \{\{ S_{3c-1}, S_{3c}, \alpha_{z_2} \}\}$

\smallskip

\ElsIf{$\alpha_{z_1} \neq \bot $ and $\alpha_{z_1} = \alpha_{j_2}$} %
\LineComment{Case 3}

\State $\alpha_{z_4} \gets$ some $\alpha_{z_4} \in N\setminus \{ \alpha_i, \alpha_{j_2} \}$ where $\mathit{val}_{S_{3c-2}}(\alpha_{z_4})=1 $ and $ u_{\alpha_{z_4}}(M)=0$
\smallskip
\State $M_{\textrm{S}} \gets \{\{ \alpha_i, \alpha_{j_1}, \alpha_{j_3} \}\} \cup \bigcup\limits_{1 \leq d < c - 1}\{ \{ S_{3d}, S_{3d+1}, S_{3d+2} \} \} \cup \{\{ S_{3c-3}, S_{3c-2}, \alpha_{z_4} \}\}$
\State $\hphantom{M' \gets } \cup \{\{ S_{3c-1}, S_{3c}, \alpha_{j_2} \}\}$

\smallskip

\ElsIf{$\alpha_{y_1} \neq \bot$} 
\LineComment{Case 4}
\State $M_{\textrm{S}} \gets \{\{ \alpha_{j_2}, \alpha_{j_1}, \alpha_{j_3} \}\} \cup \bigcup\limits_{1 \leq d < c}\{ \{ S_{3d}, S_{3d+1}, S_{3d+2} \} \} \cup \{\{ S_{3c}, \alpha_i, \alpha_{y_1} \}\}$

\smallskip

\ElsIf{$\alpha_{y_2} \neq \bot$} 
 \LineComment{Case 5}
\State $M_{\textrm{S}} \gets \{\{ \alpha_{i}, \alpha_{j_1}, \alpha_{j_3} \}\} \cup \bigcup\limits_{1 \leq d < c}\{ \{ S_{3d}, S_{3d+1}, S_{3d+2} \} \} \cup \{\{ S_{3c}, \alpha_{j_2}, \alpha_{y_2} \}\}$

\smallskip

\ElsIf{$b>0$} 

\LineComment{Case 6}
\State $\alpha_{z_5} \gets$ some $\alpha_{z_5} \in N\setminus \{ \alpha_i, \alpha_{j_2} \}$ where $\mathit{val}_{S_{3b+1}}(\alpha_{z_3})=1$ and $u_{\alpha_{z_3}}(M)=0$

\smallskip

\State $M_{\textrm{S}} \gets \{\{ \alpha_i, \alpha_{j_1}, \alpha_{j_3} \}\} 
\cup \bigcup\limits_{1 \leq d < b}\{ \{ S_{3d}, S_{3d+1}, S_{3d+2} \} \}
\cup \{\{ \alpha_{z_4}, S_{3b+1}, S_{3b+2} \}\}$ \State $\hphantom{M' \gets } \cup \bigcup\limits_{b + 1 \leq d < c}\{ \{ S_{3d}, S_{3d+1}, S_{3d+2} \} \}
\cup \{\{ S_{3c}, S_{3b}, \alpha_{j_2} \}\}$


\Else 

\LineComment{Case 7. Note that $\alpha_{w_1}=\bot$.}
\State $M_{\textrm{S}} \gets \{\{ \alpha_{i}, \alpha_{j_1}, \alpha_{j_3} \}\} \cup \bigcup\limits_{1 \leq d < c}\{ \{ S_{3d}, S_{3d+1}, S_{3d+2} \} \}$

\EndIf
\State \textbf{end if}

\State \Return $M' = M_{\textrm{S}} \cup \{ r \in M \,|\, r \cap S = \varnothing \}$

\medskip
\end{algorithmic}
\end{algorithm}

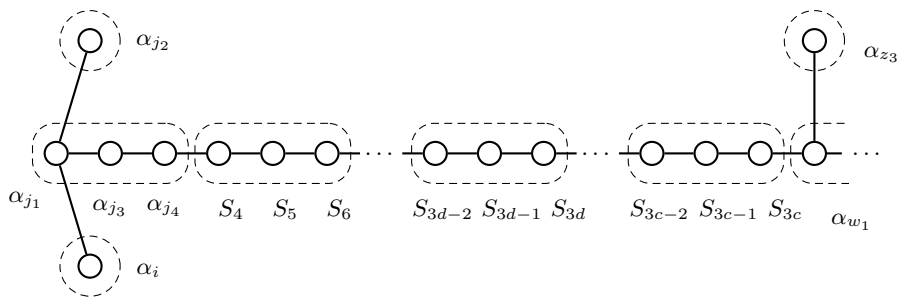
\begin{figure}[t]
    \centering
    \begin{tikzpicture}
\centering
\begin{scope}[every node/.style={circle,draw, minimum size=2.4mm}, xscale=0.9]
    \node[thick, circle, label={[shift={(0.9, -0.7)}]:$\alpha_{i\phantom{j_2}}$}] (ai) at (1,0.5) {};
    \node[thick, circle, label={[shift={(0.85,-0.6)}]:$\alpha_{j_2}$}] (aj2) at (1,3.5) {};
    \node[circle, densely dashed, minimum size=8mm] at (aj2) {};
    \node[thick, circle, label={[shift={(-0.4, -1.2)}]:$\alpha_{j_1}$}] (aj1) at (0.5,2) {};
    \node[thick, circle, label={[shift={(0, -1.3)}]:$\alpha_{j_3}$}] (aj3) at (1.3,2) {};
    \node[thick, circle, label={[shift={(0, -1.3)}]:$\alpha_{j_4}$}] (aj4) at (2.1,2) {};
    
    \node[thick, circle, label={[shift={(0.4, -1.5)}]:$S_{4\phantom{d-1}}$}] (s4) at (2.9,2) {};
    \node[thick, circle, label={[shift={(0.4, -1.5)}]$S_{5\phantom{d-1}}$}] (s5) at (3.7,2) {};
    \node[thick, circle, label={[shift={(0.4, -1.5)}]:$S_{6\phantom{d-1}}$}] (s6) at (4.5,2) {};
    
    \node[draw=none, inner sep=0.5mm] (dots1) at (5.3,2) {$\dots$};
    
    \node[thick, circle, label={[shift={(0.1, -1.5)}]:$S_{3d-2}$}] (s3d2) at (6.1,2) {};
    \node[thick, circle, label={[shift={(0.3, -1.5)}]:$S_{3d-1}$}] (s3d1) at (6.9,2) {};
    \node[thick, circle, label={[shift={(0.5, -1.5)}]:$S_{3d\phantom{-3}}$}] (s3d) at (7.7,2) {};
    
    \node[draw=none, inner sep=0.5mm] (dots2) at (8.5,2) {$\dots$};
    
    \node[thick, circle, label={[shift={(0.1, -1.5)}]:$S_{3c-2}$}] (s3c2) at (9.3,2) {};
    \node[thick, circle, label={[shift={(0.3, -1.5)}]:$S_{3c-1}$}] (s3c1) at (10.1,2) {};
    \node[thick, circle, label={[shift={(0.5, -1.5)}]:$S_{3c\phantom{-1}}$}] (s3c) at (10.9,2) {};
    
    \node[thick, circle, label={[shift={(0.5, -1.5)}]:$\alpha_{w_1}\vphantom{S_{3d-1}}$}] (aw1) at (11.7,2) {};
    \node[draw=none] (aw2) at (12.5,2) {$\dots$};
    
    \node[thick, circle, label={[shift={(0.9, -0.75)}]:$\alpha_{z_3}\vphantom{S_{3d-1}}$}] (az3) at (11.7,3.5) {};
    4
    \node[circle, densely dashed, minimum size=8mm] at (ai) {};
    \node[rectangle, inner sep=0, minimum height=8mm, minimum width=20.7mm, rounded corners=3mm, densely dashed] (triple1) at (aj3) {};
    \node[rectangle, inner sep=0, minimum height=8mm, minimum width=20.7mm, rounded corners=3mm, densely dashed] (triple1) at (s5) {};
    \node[rectangle, inner sep=0, minimum height=8mm, minimum width=20.7mm, rounded corners=3mm, densely dashed] (triple1) at (s3d1) {};
    \node[rectangle, inner sep=0, minimum height=8mm, minimum width=20.7mm, rounded corners=3mm, densely dashed] (triple1) at (s3c1) {};
    \node[circle, densely dashed, minimum size=8mm] at (az3) {};

    \begin{scope}
        \clip(0, 0) rectangle (12.2,4);
         \node[rectangle, inner sep=0, minimum height=8mm, minimum width=20.7mm, rounded corners=3mm, densely dashed] (triple2) at (aw2) {};
    \end{scope}
\end{scope}
\begin{scope}
    \foreach \from/\to in {aj2/aj1, aj1/ai, aj1/aj3, aj3/aj4, aj4/s4, s4/s5, s5/s6, s6/dots1, dots1/s3d2, s3d2/s3d1, s3d1/s3d, s3d/dots2, dots2/s3c2, s3c2/s3c1, s3c1/s3c, s3c/aw1, aw1/az3, aw1/aw2}
        \draw [thick] (\from) -- (\to);
\end{scope}
\end{tikzpicture}
    
    \vspace*{-6pt}
    \caption{Players and triples in $M$ before a new iteration of the \texttt{while} loop} 
    \label{fig:3d_sr_sas_bin_algorithm_7_cases_original_case}
\end{figure}

The six stopping conditions correspond to seven different cases, labelled Case 1 -- Case 7, in which $M'$ is constructed. Each condition corresponds to a single construction except the first condition, which corresponds to two constructions (Case 1 and Case 3). Cases 1 and 3 generalise the first example case, described above, in which some $\alpha_{z_1}$ exists. Case 2 generalises the second example case described above, in which no such $\alpha_{z_1}$ exists but some $\alpha_{z_2}$ exists as described. Cases 4 -- 7 correspond to similar scenarios. Like the two example cases, in each of Cases 1 -- 6 the algorithm identifies a number of agents divisible by three, and in each of these cases no agent identified by the algorithm, including $\alpha_i$, is unmatched in $M'$. This fact greatly simplifies the proof that $M'$ is stable in each of Cases 1 -- 6. Case 7 is unique, since the number of agents identified is not divisible by three. In Case 7, the final agent in the list $S$, labelled $S_{3c}$, for which $u_{S_{3c}}(M)=1$, is unmatched in $M'$. To show that this agent does not belong to a triple that blocks $M'$ we rely on the fact that no condition relating to previous cases held in any previous iteration of the main loop. In this way, the six stopping conditions and seven corresponding constructions of $M'$ are somewhat hierarchical. For another example, the proof that $M'$ is stable in Case 4 relies on the fact that in no iteration did the condition for Cases 1 and 3 hold. A similar reliance exists in the proofs of each of the other cases. This dependence between the cases, which is evident in the overall proof, helps show why all seven cases are required in this algorithm.

Algorithm~\texttt{repair} is presented in Algorithm~\ref{alg:3dsrsasbin_almostthere_algo} in two parts. The first part involves the construction of $S$ and exploration of the instance. The second part involves the construction of $M'$.  In order to establish the correctness and complexity of this algorithm we use a number of lemmas. The following lemma shows that the \texttt{while} loop in Algorithm~\texttt{repair} eventually terminates.

\else

We now provide some intuition behind Algorithm~\texttt{repair} and refer the reader to Figure~\ref{fig:3d_sr_sas_bin_algorithm_7_cases_original_case}. Recall that the overall goal of the algorithm is to construct a stable $P$\nobreakdash-matching $M'$. Since the given $P$\nobreakdash-matching $M$ is repairable, our aim will be to modify $M$ such that $u_{\alpha_i}(M')\geq 1$ while ensuring that no three agents that are ordered to different triples in $M'$ block $M'$. The stability of the constructed $P$\nobreakdash-matching $M'$ then follows. 

The algorithm begins by selecting some triple $\{ \alpha_i, \alpha_{j_1}, \alpha_{j_2} \}$ that blocks $M$. The two agents in $M(\alpha_{j_1}) \setminus \{ \alpha_{j_1} \}$ are labelled $\alpha_{j_3}$ and $\alpha_{j_4}$. We present two example cases in which it is possible to construct a stable $P$\nobreakdash-matching. First, suppose there exists some $\alpha_{z_1}$ where $\mathit{val}_{\alpha_{j_3}}(\alpha_{z_1})=1$ and $u_{\alpha_{z_1}}(M)=0$.  Construct $M'$ from $M$ by removing $\{ \alpha_{j_1}, \alpha_{j_2}, \alpha_{j_3} \}$ and adding $\{ \alpha_i, \alpha_{j_1}, \alpha_{j_2} \}$ and $\{ \alpha_{j_3}, \alpha_{j_4}, \alpha_{z_1} \}$. Now, $u_{\alpha_i}(M')=1$ and $u_{\alpha_p}(M')\geq u_{\alpha_p}(M)$ for any $\alpha_p \in N\setminus \{ \alpha_i \}$. It follows by Lemma~\ref{lem:3dsrsasbinblockerimprovement} that $M'$ is stable. Second, suppose there exists no such $\alpha_{z_1}$ but there exists some $\alpha_{z_2}$ where $\mathit{val}_{\alpha_{j_4}}(\alpha_{z_2})=1$ and $u_{\alpha_{z_2}}(M)=0$. Now construct $M'$ from $M$ by removing $\{ \alpha_{j_1}, \alpha_{j_2}, \alpha_{j_3} \}$ and adding $\{ \alpha_i, \alpha_{j_1}, \alpha_{j_2} \}$ and $\{ \alpha_{j_3}, \alpha_{j_4}, \alpha_{z_2} \}$. Note that $u_{\alpha_i}(M')=1$ and $u_{\alpha_p}(M')\geq u_{\alpha_p}(M)$ for any $\alpha_p \in N\setminus \{ \alpha_i, \alpha_{j_3} \}$. It can be shown that $\alpha_{j_3}$ does not belong to a triple that blocks $M'$ since no $\alpha_{z_1}$ exists as described. It follows again by Lemma~\ref{lem:3dsrsasbinblockerimprovement} that $M'$ is stable. Generalising these two example cases, the algorithm constructs a list $S$ of agents, which initially comprises $\langle \alpha_{j_1}, \alpha_{j_3},\allowbreak \alpha_{j_4} \rangle$. The list $S$ has length $3c$ for some $c\geq 1$, where $\{S_{3c-2}, S_{3c-1}, S_{3c} \} \in M$ and $\mathit{val}_{S_p}(S_{p+1})=1$ for each $p$ ($1 \leq p < 3c$). The list $S$ therefore corresponds to a path in the underlying graph. In each iteration of the main loop, three agents belonging to some triple in $M$ are appended to the end of $S$. The loop continues until $S$ satisfies at least one of six specific conditions. We show that eventually at least one of these conditions must hold. 

These six stopping conditions correspond to seven different cases, labelled Case 1 -- Case 7, in which a stable $P$\nobreakdash-matching $M'$ may be constructed. The exact construction of $M'$ depends on which condition(s) caused the main loop to terminate. Cases 1 and 3 generalise the first example case, in which some $\alpha_{z_1}$ exists as described. Case 2 generalises the second example case, in which no such $\alpha_{z_1}$ exists but some $\alpha_{z_2}$ exists as described. Cases 4 -- 7 correspond to similar scenarios. The six stopping conditions and seven corresponding constructions of $M'$ are somewhat hierarchical. For example, the proof that $M'$ is stable in Case 4 relies on the fact that in no iteration did the condition for Cases 1 and 3 hold. A similar reliance exists in the proofs of each of the other cases. The proof that $M'$ is stable in Case 7 is the most complex. It relies on the fact that no condition relating to any of the previous six cases held in the final or some previous iteration of the main loop. Further intuition for the different cases is given in the full version \cite{fullversion3dsraspaper}.

Algorithm~\texttt{repair} is presented in Algorithm~\ref{alg:3dsrsasbin_almostthere_algo} in two parts. The first part involves the construction of $S$ and exploration of the instance. The second part involves the construction of $M'$.  The following lemma establishes the correctness and complexity of this algorithm.

\fi

\ifdefined \fullversion

\begin{lem}
\label{lem:algalwaysterminates}
The {\normalfont \texttt{while}} loop in Algorithm~{\normalfont \texttt{repair}} terminates after at most $\lfloor (|N|-2) \mathbin{/} 3 \rfloor$ iterations.
\end{lem}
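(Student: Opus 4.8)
The plan is to treat $S$ as a growing certificate and show that each pass through the \texttt{while} loop either halts it via a \textbf{break} or appends three genuinely new agents, none of which can ever equal $\alpha_i$ or $\alpha_{j_2}$; bounding $|S|$ by $|N|-2$ then caps the iteration count. First I would record the invariant that once the loop variable has value $c$, the list $S=\langle S_1,\dots,S_{3c}\rangle$ consists of exactly $c$ triples $T_k=\{S_{3k-2},S_{3k-1},S_{3k}\}$, each of which is a triple of $M$. This holds initially, since $\{\alpha_{j_1},\alpha_{j_3},\alpha_{j_4}\}=M(\alpha_{j_1})$, and it is preserved by each append, since the inserted block $\{\alpha_{w_1},\alpha_{w_2},\alpha_{w_3}\}$ is exactly $M(\alpha_{w_1})$.

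The key step is to argue that these $c$ triples are pairwise disjoint, so that $S$ contains $3c$ distinct agents. This rests on the guard $\alpha_{w_1}\notin S$ used when selecting $\alpha_{w_1}$: because $M$ is a matching its triples are pairwise disjoint, so if either partner $\alpha_{w_2}$ or $\alpha_{w_3}$ already appeared in some earlier $T_k\subseteq S$, then $M(\alpha_{w_1})$ would coincide with $T_k$, forcing $\alpha_{w_1}\in S$ and contradicting the guard. Hence every append enlarges $S$ by three previously-absent agents.

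Next I would show $S\subseteq N\setminus\{\alpha_i,\alpha_{j_2}\}$. Every agent placed in $S$ belongs to a triple of the $P$\nobreakdash-matching $M$ and therefore has strictly positive utility in $M$: within each $T_k$ the second entry is selected with utility two, the first ($\alpha_{j_1}$ or $\alpha_{w_1}$) has utility one, and the third is a matched agent, which has positive utility by the defining property of a $P$\nobreakdash-matching. Since $u_{\alpha_i}(M)=u_{\alpha_{j_2}}(M)=0$ and $\alpha_i\neq\alpha_{j_2}$ (they lie in a common blocking triple), neither can occur in $S$; combined with the previous step this gives $|S|=3c\le|N|-2$.

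Finally I would convert this into the iteration bound. Each iteration that is not a \textbf{break} reaches the \textbf{else} branch, performs exactly one append, and increments $c$, so at the start of the $m$-th iteration we have $c=m$ and $|S|=3m$; when the terminating \textbf{break} occurs at iteration $m$ the value of $c$ still equals the number of iterations performed, and $3c\le|N|-2$ yields $c\le\lfloor(|N|-2)/3\rfloor$. As the loop either breaks or appends in every pass, and appends are capped by this bound, the loop must break within $\lfloor(|N|-2)/3\rfloor$ iterations. I expect the disjointness argument of the second step — verifying in particular that the two partners of a fresh $\alpha_{w_1}$ are themselves fresh — to be the main obstacle, together with pinning down the exact correspondence between $c$ and the iteration count so that the bound reads $\lfloor(|N|-2)/3\rfloor$ rather than merely $O(|N|)$.
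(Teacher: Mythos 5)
Your proposal is correct and follows essentially the same route as the paper's proof: the guard $\alpha_{w_1}\notin S$ together with the pairwise disjointness of the triples of $M$ shows each appended block $\{\alpha_{w_1},\alpha_{w_2},\alpha_{w_3}\}=M(\alpha_{w_1})$ consists of agents not already in $S$, and $\alpha_i,\alpha_{j_2}\notin S$ gives $|S|\leq |N|-2$, hence at most $\lfloor(|N|-2)/3\rfloor$ iterations. The only difference is that you spell out explicitly (via the $P$\nobreakdash-matching property and $u_{\alpha_i}(M)=u_{\alpha_{j_2}}(M)=0$) why $\alpha_i$ and $\alpha_{j_2}$ can never enter $S$, a step the paper's proof leaves implicit.
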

\begin{proof}
Any three agents $\{ \alpha_{w_1}, \alpha_{w_2}, \alpha_{w_3} \}$ added to $S$ in a single iteration comprise a triple in $M$. Just before the addition of $\langle \alpha_{w_1}, \alpha_{w_2}, \alpha_{w_3} \rangle$ to $S$, we know that $\alpha_{w_1} \notin S$. It follows that $\alpha_{w_2}, \alpha_{w_3} \notin S$, so in general $S$ contains any agent at most once. Since $\alpha_i, \alpha_{j_2} \notin S$ it follows that $|S|\leq |N| - 2$ and thus the algorithm terminates after at most $\lfloor (|N|-2) \mathbin{/} 3 \rfloor$ iterations of the \texttt{while} loop.
\end{proof}

In Case 3, the algorithm identifies some agent $\alpha_{z_4}$ in $N\setminus \{ \alpha_i \}$ such that $\mathit{val}_{S_{3c-1}}(\alpha_{z_4}) = 1$ and $u_{\alpha_{z_4}}(M) = 0$. Proposition~\ref{prop:az4exists} shows that such an agent is guaranteed to exist. 

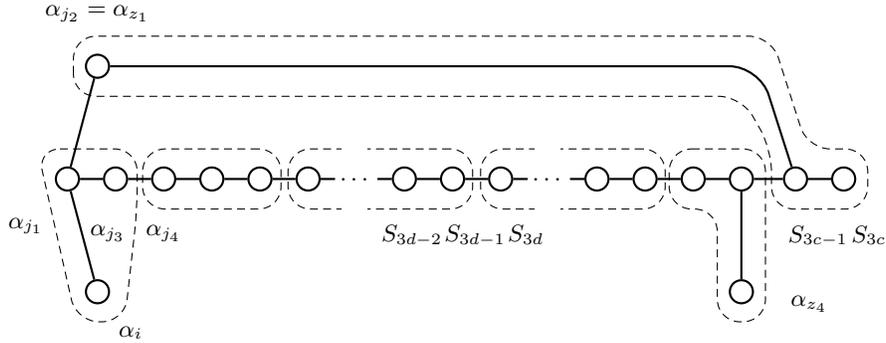
\begin{figure}
    \centering
    \begin{tikzpicture}
\begin{scope}[every node/.style={circle,draw, minimum size=2.4mm}, xscale=0.8]
    \node[thick, circle, label={[label distance=1.8mm]-45:$\alpha_{i\phantom{j_2}}$}] (ai) at (1,0.5) {};
    \node[thick, circle, label={[shift={(0.0, -0.3)}]:$\alpha_{j_2}=\alpha_{z_1}$}] (aj2) at (1,3.5) {};
    \node[thick, circle, label={[shift={(-0.55, -1.2)}]:$\alpha_{j_1}$}] (aj1) at (0.5,2) {};
    \node[thick, circle, label={[shift={(-0.1, -1.3)}]:$\alpha_{j_3}$}] (aj3) at (1.3,2) {};
    \node[thick, circle, label={[shift={(0, -1.3)}]:$\alpha_{j_4}$}] (aj4) at (2.1,2) {};
    
    \node[thick, circle, label={[shift={(0.4, -1.5)}]}] (s4) at (2.9,2) {};
    \node[thick, circle, label={[shift={(0.4, -1.5)}]}] (s5) at (3.7,2) {};
    \node[thick, circle, label={[shift={(0.4, -1.5)}]}] (s6) at (4.5,2) {};
    
    \node[draw=none, inner sep=0.5mm] (dots1) at (5.3,2) {$\dots$};
    
    \node[thick, circle, label={[shift={(0.1, -1.5)}]:$S_{3d-2}$}] (s3d2) at (6.1,2) {};
    \node[thick, circle, label={[shift={(0.3, -1.5)}]:$S_{3d-1}$}] (s3d1) at (6.9,2) {};
    \node[thick, circle, label={[shift={(0.5, -1.5)}]:$S_{3d\phantom{-3}}$}] (s3d) at (7.7,2) {};
    
    \node[draw=none, inner sep=0.5mm] (dots2) at (8.5,2) {$\dots$};
    
    \node[thick, circle, label={[shift={(0.1, -1.5)}]}] (s3c5) at (9.3,2) {};
    \node[thick, circle, label={[shift={(0.3, -1.5)}]}] (s3c4) at (10.1,2) {};
    \node[thick, circle, label={[shift={(0.5, -1.5)}]}] (s3c3) at (10.9,2) {};
    
    \node[thick, circle, label={[shift={(0.9, -0.75)}]:$\alpha_{z_4}$}] (az4) at (11.7,0.5) {};
    
    \node[thick, circle, label={[shift={(0.5, -1.5)}]}] (s3c2) at (11.7,2) {};
    \node[thick, circle, label={[shift={(0.3, -1.5)}]:$S_{3c-1}$}] (s3c1) at (12.6,2) {};
    \node[thick, circle, label={[shift={(0.5, -1.5)}]:$S_{3c\phantom{-1}}$}] (s3c) at (13.4,2) {};
    
    \node[draw=none, inner sep=0, minimum size=0] (s3by) at (12.0, 3.5]) {};
    
    \draw [thick, rounded corners=4mm] (aj2.east)--(s3by.center)--(s3c1);

    \node[rectangle, inner sep=0, minimum height=8mm, minimum width=18.4mm, rounded corners=3mm, densely dashed] (triple1) at (s4) {};
    
    \begin{scope}
        \clip(0,1) rectangle (5.05, 4.0);
        \node[rectangle, inner sep=0, minimum height=8mm, minimum width=24.54mm, rounded corners=3mm, densely dashed] (triple1) at ($(s6)!0.5!(s3d1)$) {};
    \end{scope}
    \begin{scope}
            \clip(5.5, 0.0) rectangle (12, 4.0);
            \node[rectangle, inner sep=0, minimum height=8mm, minimum width=24.54mm, rounded corners=3mm, densely dashed] (triple1) at ($(s6)!0.5!(s3d1)$) {};
    \end{scope}
    
    \begin{scope}
        \clip(0,1) rectangle (8.25, 4.0);
        \node[rectangle, inner sep=0, minimum height=8mm, minimum width=24.54mm, rounded corners=3mm, densely dashed] (triple1) at ($(s3d)!0.5!(s3c4)$) {};
    \end{scope}
    \begin{scope}
            \clip(8.7, 0.0) rectangle (12, 4.0);
            \node[rectangle, inner sep=0, minimum height=8mm, minimum width=24.54mm, rounded corners=3mm, densely dashed] (triple1) at ($(s3d)!0.5!(s3c4)$) {};
    \end{scope}

    \draw [rounded corners=3mm, densely dashed] (0.0, 2.4)--(1.66, 2.4)--(1.66, 1.6)--(1.45, 0.1)--(0.65, 0.1)--cycle;
    
    \draw [rounded corners=3mm, densely dashed] (0.6, 3.1)--(0.6, 3.9)--(12.2,3.9)--(12.8,2.4)--(13.8,2.4)--(13.8,1.6)--(12.2,1.6)--(12.2,2.4)--(11.7, 3.1)--cycle;
    
    \draw [rounded corners=3mm, densely dashed] (10.5, 2.4)--(12.1, 2.4)--(12.1, 0.1)--(11.3, 0.1)--(11.3, 1.6)--(10.5, 1.6)--cycle;
\end{scope}
\begin{scope}
    \foreach \from/\to in {aj2/aj1, aj1/ai, aj1/aj3, aj3/aj4, aj4/s4, s4/s5, s5/s6, s6/dots1, dots1/s3d2, s3d2/s3d1, s3d1/s3d, s3d/dots2, dots2/s3c5, s3c5/s3c4, s3c4/s3c3, s3c3/s3c2, s3c2/s3c1, s3c1/s3c, az4/s3c2}
        \draw [thick] (\from) -- (\to);
\end{scope}
\end{tikzpicture}
    \caption{The structure of $M'$ in Case 3} 
    \label{fig:3d_sr_sas_bin_algorithm_7_cases_case_3}
\end{figure}

\begin{prop}
\label{prop:az4exists}
In Case 3 of Algorithm~{\normalfont \texttt{repair}}, some agent $\alpha_{z_4}$ in $N\setminus \{ \alpha_i, \alpha_{j_2} \}$ exists where $\mathit{val}_{S_{3c-2}}(\alpha_{z_4})=1$ and $u_{\alpha_{z_4}}(M)=0$.
\end{prop}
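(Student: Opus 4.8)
The plan is to recover $\alpha_{z_4}$ from an existential witness that was already certified at the moment the final triple of $S$ was appended to the list, rather than searching for it afresh. Recall that the last triple of $S$ is $\{S_{3c-2}, S_{3c-1}, S_{3c}\}$ and that, whenever a triple is appended, its first agent is the agent playing the role of $\alpha_{w_1}$; so $S_{3c-2}$ was this $\alpha_{w_1}$. By the guard governing the choice of $\alpha_{w_1}$, that agent is selected (rather than left as $\bot$) only when some $\alpha_{z_3} \in N \setminus \{\alpha_i\}$ exists with $\mathit{val}_{\alpha_{w_1}}(\alpha_{z_3}) = 1$ and $u_{\alpha_{z_3}}(M) = 0$. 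This $\alpha_{z_3}$ satisfies $\mathit{val}_{S_{3c-2}}(\alpha_{z_3}) = 1$, $u_{\alpha_{z_3}}(M) = 0$, and $\alpha_{z_3} \neq \alpha_i$, i.e.\ almost everything the proposition demands, so the proof reduces to (i) guaranteeing that this witness exists at all and (ii) ruling out $\alpha_{z_3} = \alpha_{j_2}$; then I set $\alpha_{z_4} = \alpha_{z_3}$.

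For (i), I would first argue that $c \geq 2$ in Case~3, since the initial triple $\{S_1, S_2, S_3\} = \{\alpha_{j_1}, \alpha_{j_3}, \alpha_{j_4}\}$ was never appended via an $\alpha_{w_1}$ and hence carries no such witness. Suppose $c = 1$. Then $S_{3c-1} = \alpha_{j_3}$, and the Case~3 hypothesis $\alpha_{z_1} = \alpha_{j_2}$ combined with the defining property of $\alpha_{z_1}$ yields $\mathit{val}_{\alpha_{j_2}}(\alpha_{j_3}) = 1$. We also have $\mathit{val}_{\alpha_{j_1}}(\alpha_{j_2}) = 1$ (since $\{\alpha_i, \alpha_{j_1}, \alpha_{j_2}\}$ blocks $M$) and $\mathit{val}_{\alpha_{j_1}}(\alpha_{j_3}) = 1$ (since $u_{\alpha_{j_3}}(M) = 2$ forces $\alpha_{j_3}$ adjacent to both its partners). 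By symmetry the three agents $\alpha_{j_1}, \alpha_{j_2}, \alpha_{j_3}$ are then pairwise adjacent, i.e.\ form a triangle, contradicting triangle-freeness. Hence $c \geq 2$ and the append-time witness $\alpha_{z_3}$ is available.

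For (ii), which I expect to be the only genuinely delicate point, I would again invoke triangle-freeness. In Case~3 we have $\alpha_{z_1} = \alpha_{j_2}$, so $\mathit{val}_{\alpha_{j_2}}(S_{3c-1}) = 1$; and since $S$ is a path, $\mathit{val}_{S_{3c-2}}(S_{3c-1}) = 1$. If $\alpha_{z_3}$ were $\alpha_{j_2}$, then $\mathit{val}_{S_{3c-2}}(\alpha_{j_2}) = 1$ as well, making $S_{3c-2}, S_{3c-1}, \alpha_{j_2}$ pairwise adjacent and forming a triangle, a contradiction. Therefore $\alpha_{z_3} \in N \setminus \{\alpha_i, \alpha_{j_2}\}$, and $\alpha_{z_4} = \alpha_{z_3}$ has the required properties. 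The crux of the argument is conceptual rather than computational: the needed agent is essentially the witness that the algorithm already checked when extending $S$, and triangle-freeness is precisely the ingredient that both forces $c \geq 2$ and separates this witness from $\alpha_{j_2}$.
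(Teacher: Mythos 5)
Your proposal is correct and follows essentially the same route as the paper's proof: rule out $c=1$ by exhibiting the triangle $\{\alpha_{j_1},\alpha_{j_2},\alpha_{j_3}\}$, recover $\alpha_{z_4}$ as the witness $\alpha_{z_3}$ that was certified in the iteration in which $S_{3c-2}$ was appended in the role of $\alpha_{w_1}$, and exclude $\alpha_{z_4}=\alpha_{j_2}$ via the triangle $\{S_{3c-2}, S_{3c-1}, \alpha_{j_2}\}$. If anything, you make explicit the pairwise adjacencies (from the blocking triple and from $u_{\alpha_{j_3}}(M)=2$) that the paper leaves implicit in the $c=1$ case.
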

\begin{proof}
Refer to Figure~\ref{fig:3d_sr_sas_bin_algorithm_7_cases_case_3}. We claim that the condition of Case 3 implies that $c\geq 1$. Suppose for a contradiction that $c=1$. The condition shows that $\alpha_{z_1}=\alpha_{j_2}$ exists where $\mathit{val}_{\alpha_{z_1}}(S_{3c-1})=1$. Since $S_{3c-1}=\alpha_{j_3}$, the triple $\{ \alpha_{j_1}, \alpha_{j_2}, \alpha_{j_3} \}$ contradicts the fact that $(N, V)$ is triangle-free.

Since $c>1$ it follows that $c'=c-1$ is the value of $c$ in the second last iteration of the \texttt{while} loop. Consider the second last iteration of the \texttt{while} loop. In this iteration $\alpha_{w_1}=S_{3c-2}$ was identified where $\mathit{val}_{S_{3c'}}(\alpha_{w_1})=1$, $\alpha_{w_1}\notin S$ and there existed $\alpha_{z_3}\in N\setminus \{ \alpha_i \}$ where $\mathit{val}_{\alpha_{w_1}}(\alpha_{z_3})=1$ and $u_{\alpha_{z_3}}(M)=0$. We refer to the agent labelled $\alpha_{z_3}$ in this iteration as $\alpha_{z_4}$. It follows that $\mathit{val}_{S_{3c-2}}(\alpha_{z_4})=1$.

We claim that $\alpha_{z_4}\neq (\alpha_{z_1} = \alpha_{j_2})$ since otherwise the triple $\{ \alpha_{z_4}, S_{3c-1}, S_{3c-2} \}$ contradicts the fact that $(N, V)$ is triangle-free. It follows that $\alpha_{z_4}\in N\setminus \{ \alpha_i, \alpha_{j_2} \}$, completing the proof.
\end{proof}

Likewise in Case 6, the algorithm identifies some agent $\alpha_{z_5}$ in $N\setminus \{ \alpha_i, \alpha_{j_2} \}$ exists where $\mathit{val}_{S_{3b+1}}(\alpha_{z_5})=1$ and $u_{\alpha_{z_5}}(M)=0$. Proposition~\ref{prop:az5exists} shows that such an agent is guaranteed to exist.

\begin{prop}
\label{prop:az5exists}
In Case 6 of Algorithm~{\normalfont \texttt{repair}}, some agent $\alpha_{z_5}$ in $N\setminus \{ \alpha_i, \alpha_{j_2} \}$ exists where $\mathit{val}_{S_{3b+1}}(\alpha_{z_5})=1$ and $u_{\alpha_{z_5}}(M)=0$.
\end{prop}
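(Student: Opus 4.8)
The plan is to mirror the proof of Proposition~\ref{prop:az4exists}, tracing the agent $S_{3b+1}$ back to the iteration of the \texttt{while} loop in which it was appended to $S$. Recall that the triple $\langle S_{3d+1}, S_{3d+2}, S_{3d+3}\rangle$ is appended precisely in the iteration in which the loop variable $c$ holds the value $d$ (before being incremented to $d+1$). Since Case~6 requires $b>0$ together with $1 \leq b < c$, we have $c \geq 2$, so the loop passed through the value $c = b$; in that iteration a triple was appended, the break condition therefore failed, and in particular $\alpha_{w_1} \neq \bot$. Hence $S_{3b+1}$ is exactly the agent labelled $\alpha_{w_1}$ in the iteration with $c=b$.

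First I would read off the witness guaranteed by the defining condition of $\alpha_{w_1}$: because $\alpha_{w_1} = S_{3b+1}$ was identified (and not $\bot$), there existed some $\alpha_{z_3} \in N \setminus \{\alpha_i\}$ with $\mathit{val}_{\alpha_{w_1}}(\alpha_{z_3})=1$ and $u_{\alpha_{z_3}}(M)=0$. Renaming this agent to $\alpha_{z_5}$ immediately gives $\mathit{val}_{S_{3b+1}}(\alpha_{z_5})=1$ and $u_{\alpha_{z_5}}(M)=0$ with $\alpha_{z_5}\neq \alpha_i$, which supplies everything required except the separation $\alpha_{z_5}\neq \alpha_{j_2}$.

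The only step requiring real care is to rule out $\alpha_{z_5} = \alpha_{j_2}$, and here I would use triangle-freeness exactly as in Proposition~\ref{prop:az4exists}. The Case~6 condition provides $\mathit{val}_{S_{3b}}(\alpha_{j_2})=1$, so $\{S_{3b}, \alpha_{j_2}\}$ is an edge of the underlying graph, while the path structure of $S$ (that is, $\mathit{val}_{S_p}(S_{p+1})=1$ for each $p$) gives the edge $\{S_{3b}, S_{3b+1}\}$. If $\alpha_{z_5} = \alpha_{j_2}$ then $\mathit{val}_{S_{3b+1}}(\alpha_{z_5})=1$ would supply the edge $\{S_{3b+1}, \alpha_{j_2}\}$, so the three agents $S_{3b}, S_{3b+1}, \alpha_{j_2}$ would form a triangle, contradicting that $(N,V)$ is triangle-free. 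Hence $\alpha_{z_5} \neq \alpha_{j_2}$ and so $\alpha_{z_5} \in N \setminus \{\alpha_i, \alpha_{j_2}\}$, as required.

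I expect the main obstacle to be purely bookkeeping: aligning the indexing so that $S_{3b+1}$ is recognised as the $\alpha_{w_1}$ of the $c=b$ iteration, and confirming that that iteration genuinely appended a triple (so that the witness $\alpha_{z_3}$ exists rather than the loop having already broken). Once these are pinned down, the claim follows from the recorded witness condition together with a single application of triangle-freeness.
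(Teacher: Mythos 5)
Your proof is correct and follows the paper's argument essentially verbatim: trace $S_{3b+1}$ back to the iteration in which it was appended as $\alpha_{w_1}$ (legitimate since $1 \leq b < c$ guarantees the break condition failed there), rename the recorded witness $\alpha_{z_3}$ to $\alpha_{z_5}$, and rule out $\alpha_{z_5} = \alpha_{j_2}$ because the edges $\{S_{3b}, \alpha_{j_2}\}$, $\{S_{3b}, S_{3b+1}\}$, and $\{S_{3b+1}, \alpha_{j_2}\}$ would form a triangle, contradicting triangle-freeness. Your bookkeeping that $S_{3b+1}$ is the $\alpha_{w_1}$ of the iteration in which $c$ holds the value $b$ is in fact the consistent indexing (the paper calls this the $(b+1)$\textsuperscript{th} iteration, a harmless labelling discrepancy).
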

\begin{proof}
Refer to Figure~\ref{fig:3d_sr_sas_bin_algorithm_7_cases_case_6}. It follows from definition of $b$ and the condition of Case 6 that $b < c$.

\begin{figure}
    \centering
    \begin{tikzpicture}
\begin{scope}[every node/.style={circle,draw, minimum size=2.4mm}, xscale=0.8]
    \node[thick, circle, label={[label distance=1.8mm]-45:$\alpha_{i\phantom{j_2}}$}] (ai) at (1,0.5) {};
    \node[thick, circle, label={[shift={(-0.4, 0)}]:$\alpha_{j_2}$}] (aj2) at (1,3.5) {};
    \node[thick, circle, label={[shift={(-0.55, -1.2)}]:$\alpha_{j_1}$}] (aj1) at (0.5,2) {};
    \node[thick, circle, label={[shift={(-0.1, -1.3)}]:$\alpha_{j_3}$}] (aj3) at (1.3,2) {};
    \node[thick, circle, label={[shift={(0, -1.3)}]:$\alpha_{j_4}$}] (aj4) at (2.1,2) {};
    
    \node[thick, circle, label={[shift={(0.4, -1.85)}]}] (s4) at (2.9,2) {};
    \node[thick, circle, label={[shift={(0.4, -1.5)}]}] (s5) at (3.7,2) {};
    \node[thick, circle, label={[shift={(0.4, -1.5)}]}] (s6) at (4.5,2) {};
    
    \node[draw=none, inner sep=0.5mm] (dots1) at (5.3,2) {$\dots$};
    
    \node[thick, circle, label={[shift={(0.1, -1.5)}]}] (s3b2) at (6.1,2) {};
    \node[thick, circle, label={[shift={(0.1, -1.5)}]:$S_{3b-1}$}] (s3b1) at (6.9,2) {};
    \node[draw=none] (s3b) at (7.7,2) {};
    
    \node[thick, circle, label={[shift={(0.2, -0.1)}]:$S_{3b+1}$}] (s3bp1) at (8.6,2) {};
    \node[thick, circle, label={[shift={(0.3, -0.1)}]:$S_{3b+2}$}] (s3bp2) at (9.4,2) {};
    \node[thick, circle, label={[shift={(0.5, -0.1)}]}] (s3bp3) at (10.2,2) {};
    
    \node[thick, circle, label={[shift={(0.9, -0.75)}]:$\alpha_{z_5}\vphantom{S_{3d-1}}$}] (az5) at (8.6,0.5) {};
    
     \node[draw=none, inner sep=0.5mm] (dots2) at (11.0,2) {$\dots$};
    
    \node[thick, circle, label={[shift={(0.1, -1.5)}]}] (s3c2) at (11.8,2) {};
    \node[thick, circle, label={[shift={(0.3, -1.5)}]}] (s3c1) at (12.6,2) {};
    \node[thick, circle, label={[shift={(0.5, -1.5)}]:$S_{3c\phantom{-1}}$}] (s3c) at (13.4,2) {};
    
    \node[draw=none, inner sep=0, minimum size=0] (s3bx) at (7.7, 3.5) {};
    \node[draw=none, inner sep=0, minimum size=0] (s3by) at (12.8, 3.5) {};

    \node[rectangle, inner sep=0, minimum height=8mm, minimum width=18.4mm, rounded corners=3mm, densely dashed] (triple1) at (s4) {};
    
    \begin{scope}
        \clip(0,1) rectangle (5.05, 4.0);
        \node[rectangle, inner sep=0, minimum height=8mm, minimum width=24.54mm, rounded corners=3mm, densely dashed] (triple1) at ($(s6)!0.5!(s3b1)$) {};
    \end{scope}
    \begin{scope}
            \clip(5.5, 0.0) rectangle (12, 4.0);
            \node[rectangle, inner sep=0, minimum height=8mm, minimum width=24.54mm, rounded corners=3mm, densely dashed] (triple1) at ($(s6)!0.5!(s3b1)$) {};
    \end{scope}
    
    \begin{scope}
        \clip(0,1) rectangle (10.65, 4.0);
        \node[rectangle, inner sep=0, minimum height=8mm, minimum width=24.54mm, rounded corners=3mm, densely dashed] (triple1) at ($(s3bp3)!0.5!(s3c1)$) {};
    \end{scope}
    \begin{scope}
            \clip(11.1, 0.0) rectangle (14, 4.0);
            \node[rectangle, inner sep=0, minimum height=8mm, minimum width=24.54mm, rounded corners=3mm, densely dashed] (triple1) at ($(s3bp3)!0.5!(s3c1)$) {};
    \end{scope}
    
    
    \draw [rounded corners=3mm, densely dashed] (0.0, 2.4)--(1.66, 2.4)--(1.66, 1.6)--(1.45, 0.1)--(0.65, 0.1)--cycle;
    
    \draw [rounded corners=3mm, densely dashed] (0.6,3.9)--(13.0, 3.9)--(13.8, 2.4)--(13.8, 1.6)--(13.0, 1.6)--(13.0, 2.4)--(12.5, 3.1)--(8.1, 3.1)--(8.1, 1.6)--(7.3, 1.6)--(7.3, 3.1)--(0.6, 3.1)--cycle;
    
    \draw [rounded corners=3mm, densely dashed] (8.2, 2.4)--(9.8, 2.4)--(9.8, 1.6)--(9.0, 1.6)--(9.0, 0.1)--(8.2, 0.1)--cycle;
    
    \node[fill=white, draw=white, minimum size=6mm] (s3bbackground) at (8.2, 1.3) {};
    \node[thick, circle, label={[shift={(0.5, -1.5)}]:$S_{3b\phantom{-3}}$}] (s3bextra) at (7.7,2) {};
    
    \draw [thick, rounded corners=3mm] (s3b)--(s3bx.center)--(s3by.center)--(s3c.north);
    \draw [thick, rounded corners=3mm] (s3b)--(s3bx.center)--(aj2);
\end{scope}
\begin{scope}
    \foreach \from/\to in {aj2/aj1, aj1/ai, aj1/aj3, aj3/aj4, aj4/s4, s4/s5, s5/s6, s6/dots1, dots1/s3b2, s3b2/s3b1, s3b1/s3b, s3b/s3bp1, s3bp1/s3bp2, s3bp2/s3bp3, s3bp3/dots2, dots2/s3c2, s3c2/s3c1, s3c1/s3c, s3bp1/az5}
        \draw [thick] (\from) -- (\to);
\end{scope}
\end{tikzpicture}
    \caption{The structure of $M'$ in Case 6} 
    \label{fig:3d_sr_sas_bin_algorithm_7_cases_case_6}
\end{figure}
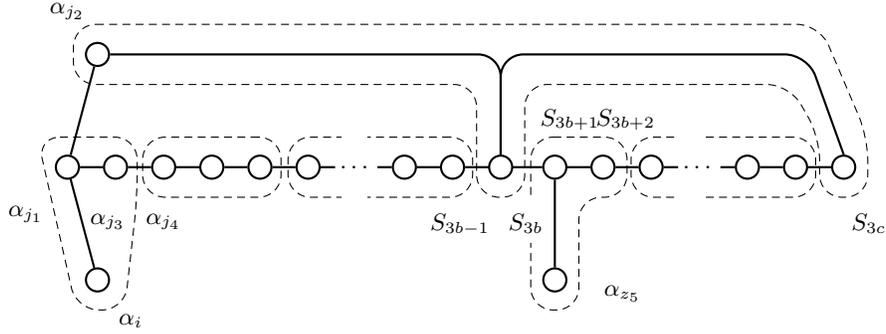

Consider the $(b+1)\textsuperscript{th}$ iteration of the \texttt{while} loop. In this iteration $\alpha_{w_1}=S_{3b+1}$ was identified and the three agents in $M(\alpha_{w_1}) = \{ S_{3b+1}, S_{3b+2}, S_{3b+3} \}$ were added to the end of $S$. By definition of $\alpha_{w_1}$, in that iteration some agent $\alpha_{z_3}\in N \setminus \{ \alpha_i \}$ was identified where $\mathit{val}_{\alpha_{w_1}}(\alpha_{z_3})=1$ and $u_{\alpha_{z_3}}(M)=0$. We refer to the agent labelled $\alpha_{z_3}$ in this iteration as $\alpha_{z_5}$. It follows that $\mathit{val}_{S_{3b+1}}(\alpha_{z_5})=1$.

We claim that $\alpha_{z_5}\neq \alpha_{j_2}$ since otherwise the triple $\{ S_{3b}. \alpha_{z_5}, S_{3b+1} \}$ contradicts the fact that $(N, V)$ is triangle-free. It follows that $\alpha_{z_5}\in N\setminus \{ \alpha_i, \alpha_{j_2} \}$, completing the proof.
\end{proof}

\begin{lem}
\label{lem:algreturnspmatching}
Algorithm~{\normalfont \texttt{repair}} returns a $P$\nobreakdash-matching.
\end{lem}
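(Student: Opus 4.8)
The plan is to verify the two defining requirements of a $P$-matching for the returned object $M' = M_{\textrm{S}} \cup \{ r \in M \mid r \cap S = \varnothing \}$: first that $M'$ is genuinely a matching (pairwise disjoint triples), and second that every matched agent has strictly positive utility. I would treat these separately, handling the utility requirement first since it is uniform across the seven cases, and then the disjointness requirement, which needs only a short per-case bookkeeping argument. (Termination of the \texttt{while} loop, and hence that exactly one of Cases~1--7 is reached, is already guaranteed by Lemma~\ref{lem:algalwaysterminates}.)

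For the utility requirement, I would observe that every triple placed into $M_{\textrm{S}}$ is a path on three vertices in the underlying graph: in each case the triple has the form $\{x,y,z\}$ with $\mathit{val}_x(y) = \mathit{val}_y(z) = 1$, so $y$ has utility two and, by the symmetry of preferences, each of $x$ and $z$ has a neighbour (namely $y$) inside the triple and hence utility at least one. Checking this reduces to reading off the valuations guaranteed by the construction: the path edges $\mathit{val}_{S_p}(S_{p+1})=1$ for the interior triples, the repairable hypothesis $\mathit{val}_{\alpha_i}(\alpha_{j_1}) = \mathit{val}_{\alpha_{j_1}}(\alpha_{j_2}) = 1$ for the blocking triple, and the defining properties of $\alpha_{z_1}, \alpha_{z_2}, \alpha_{y_1}, \alpha_{y_2}, \alpha_{z_4}, \alpha_{z_5}$ and of the index $b$ for the boundary triples (for instance $\mathit{val}_{S_{3b}}(\alpha_{j_2}) = \mathit{val}_{S_{3c}}(S_{3b}) = 1$ makes $S_{3b}$ the centre of $\{S_{3c}, S_{3b}, \alpha_{j_2}\}$ in Case~6). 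Triples retained from $M$ inherit the positive-utility property because $M$ is itself a $P$-matching, and any agent absent from $M'$ altogether (such as $S_{3c}$ and $\alpha_{j_2}$ in Case~7) is unmatched, which is permitted.

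For the disjointness requirement, I would first note that every agent introduced by the construction other than those of $S$ --- namely $\alpha_i$, $\alpha_{j_2}$, and whichever of $\alpha_{z_1}, \alpha_{z_2}, \alpha_{y_1}, \alpha_{y_2}, \alpha_{z_4}, \alpha_{z_5}$ is used --- satisfies $u_{\cdot}(M) = 0$ and is therefore unmatched in $M$. Consequently such agents lie in no triple of $M$, so they cannot collide with a retained triple $r$; and the agents of $S$ are excluded from the retained triples by the filter $r \cap S = \varnothing$. Hence $M_{\textrm{S}}$ is disjoint from the retained part, and the retained part is internally disjoint as a subcollection of $M$. It then remains to check that the triples of $M_{\textrm{S}}$ are pairwise disjoint in each case: using the fact (from the proof of Lemma~\ref{lem:algalwaysterminates}) that $S$ lists each agent at most once, one verifies by index bookkeeping that the interior and boundary triples partition exactly $S_1, \dots, S_{3c}$ (or $S_1, \dots, S_{3c-1}$ in Case~7), while the utility-zero agents, being distinct from every agent of $S$ (which all have positive utility in $M$), occupy the remaining slots.

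The main obstacle I anticipate is establishing that the utility-zero agents introduced within a single case are themselves distinct wherever two of them meet in $M_{\textrm{S}}$. For Cases~3 and~6 this is precisely the content of Propositions~\ref{prop:az4exists} and~\ref{prop:az5exists}, which place $\alpha_{z_4}$ and $\alpha_{z_5}$ in $N \setminus \{\alpha_i, \alpha_{j_2}\}$. The delicate remaining collisions are $\alpha_{y_1} = \alpha_{j_2}$ in Case~4 and $\alpha_{y_2} = \alpha_i$ in Case~5, either of which would destroy disjointness; I would rule both out using triangle-freeness. In either situation one derives $\mathit{val}_{\alpha_i}(\alpha_{j_2}) = 1$ (directly, since $\alpha_{y_2} = \alpha_i$ forces $\mathit{val}_{\alpha_i}(\alpha_{j_2}) = 1$; or by symmetry from $\mathit{val}_{\alpha_{j_2}}(\alpha_i) = 1$ when $\alpha_{y_1} = \alpha_{j_2}$), and then, together with the repairable hypothesis $\mathit{val}_{\alpha_i}(\alpha_{j_1}) = \mathit{val}_{\alpha_{j_1}}(\alpha_{j_2}) = 1$, the agents $\alpha_i, \alpha_{j_1}, \alpha_{j_2}$ form a triangle, contradicting triangle-freeness. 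Once these distinctness facts are in hand, the per-case partition check is mechanical and both $P$-matching requirements follow.
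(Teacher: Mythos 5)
Your proof is correct and takes essentially the same approach as the paper, whose entire proof of this lemma is ``by inspection of the construction of $M'$'' --- your argument is simply that inspection carried out in full. The only places requiring more than index bookkeeping --- verifying that each triple of $M_{\textrm{S}}$ is a path with a positive-utility centre, excluding the collisions $\alpha_{y_1}=\alpha_{j_2}$ (Case~4) and $\alpha_{y_2}=\alpha_i$ (Case~5) via triangle-freeness, and invoking Propositions~\ref{prop:az4exists} and~\ref{prop:az5exists} so that $\alpha_{z_4},\alpha_{z_5}\in N\setminus\{\alpha_i,\alpha_{j_2}\}$ in Cases~3 and~6 --- are all handled correctly.
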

\begin{proof}
By inspection of the construction of $M'$ and Figures~ \ref{fig:3d_sr_sas_bin_algorithm_7_cases_case_1} \nobreakdash-- \ref{fig:3d_sr_sas_bin_algorithm_7_cases_case_7}.
\end{proof}

In the remainder of this section we will show that the returned $P$\nobreakdash-matching $M'$ is stable in $(N, V)$. The construction of $M'$ is slightly different in each of Cases 1 -- 7. In Lemmmas~\ref{lem:algocases1and3noalphapexists}, \ref{lem:algocases245and6noalphapexists} and~\ref{lem:algocase7noalphapexists} we show in each of the cases that no agent $\alpha_g \in N$ exists where $u_{\alpha_{g}}(M') < u_{\alpha_{g}}(M)$ and $\alpha_g$ belongs to a triple that blocks $M'$. It follows directly that $M'$ is stable (shown in Lemma \ref{lem:algoreturnsstablematching_notimecomplex}).

\begin{lem}
\label{lem:algocases1and3noalphapexists}
In Cases 1 and 3 of Algorithm~{\normalfont \texttt{repair}}, no agent $\alpha_{g}\in N$ exists where $u_{\alpha_{g}}(M') < u_{\alpha_{g}}(M)$ and $\alpha_g$ belongs to a triple that blocks $M'$.
\end{lem}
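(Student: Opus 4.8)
The plan is to pin down exactly which agents can lose utility in passing from $M$ to $M'$ and then to exclude each of them from every triple that blocks $M'$, using the fact that the \texttt{while} loop did not break at earlier iterations.

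First I would determine the agents $\alpha_g$ with $u_{\alpha_g}(M')<u_{\alpha_g}(M)$. Only agents touched by the reconstruction can change utility: the agents $\alpha_i,\alpha_{j_2}$ and $\alpha_{z_1}$ (Case 1) or $\alpha_{z_4}$ (Case 3), which all have $M$-utility $0$ and can only gain; and the path agents $S_1,\dots,S_{3c}$. Using that $S$ is a path (so $\mathit{val}_{S_p}(S_{p+1})=1$) and that each $M$-triple $\{S_{3e-2},S_{3e-1},S_{3e}\}$ is a $P_3$ with centre $S_{3e-1}$ (forced by triangle-freeness and $u_{S_{3e-1}}(M)=2$), a direct inspection of the new triples shows that every non-centre path agent keeps or gains utility and that the final centre $S_{3c-1}$ retains utility $2$ (it is rejoined with $\alpha_{z_1}$ in Case 1 and with $\alpha_{j_2}$ in Case 3). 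Each remaining centre $S_{3e-1}$ with $1\le e\le c-1$ drops from $2$ to exactly $1$, because its second $M'$-partner, which lies two steps from it along $S$, is unvalued by triangle-freeness. Hence any such $\alpha_g$ equals some centre $S_{3e-1}$ with $1\le e\le c-1$, and since $M'$ is a $P$-matching (Lemma~\ref{lem:algreturnspmatching}) we have $u_{\alpha_g}(M')=1$.

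Now I would fix $\alpha_g=S_{3e-1}$ and assume for contradiction that $\{S_{3e-1},\beta,\gamma\}$ blocks $M'$. Because $u_{S_{3e-1}}(M')=1$, the blocking condition for $S_{3e-1}$ forces $\mathit{val}_{S_{3e-1}}(\beta)=\mathit{val}_{S_{3e-1}}(\gamma)=1$, so $\beta$ and $\gamma$ are both neighbours of $S_{3e-1}$ in the underlying graph. Triangle-freeness gives $\mathit{val}_\beta(\gamma)=0$, since otherwise $\{S_{3e-1},\beta,\gamma\}$ would be a triangle. Thus $u_\beta(\{S_{3e-1},\gamma\})=\mathit{val}_\beta(S_{3e-1})+\mathit{val}_\beta(\gamma)=1$, and for $\beta$ to strictly prefer this triple we need $u_\beta(M')=0$, that is, $\beta$ is unmatched in $M'$. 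Since every $M$-triple is either disjoint from $S$ (hence preserved) or wholly contained in $S$ (hence rematched), no agent matched in $M$ becomes unmatched, so $\beta$ is also unmatched in $M$, giving $u_\beta(M)=0$.

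Finally I would use the loop's non-termination: as $e<c$, the \texttt{while} loop did not break at the iteration with counter $e$, so at that iteration $\alpha_{z_1}=\bot$, meaning $S_{3e-1}$ has no neighbour of $M$-utility $0$ other than $\alpha_i$. With $u_\beta(M)=0$ this forces $\beta=\alpha_i$, contradicting that $\beta$ is unmatched in $M'$ (whereas $\alpha_i$ is matched in $M'$ in both cases). Hence no blocking triple contains $S_{3e-1}$, and the same argument settles Cases 1 and 3 at once. I expect the main obstacle to be the bookkeeping of the first step, verifying case by case, and in particular at the ends $e=1$ and $e=c-1$ where the end-triples differ between Cases 1 and 3, that the decreased agents are exactly the interior centres with new utility $1$; once this is established, the observation that a blocking partner of such a centre would have to be an unmatched valued neighbour, and hence $\alpha_i$, closes both cases uniformly.
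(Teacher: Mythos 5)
Your proposal is correct and follows essentially the same route as the paper's proof: restrict the potential losers to the centre agents $S_{3d-1}$ with $d<c$ (each dropping from utility $2$ to $1$), use triangle-freeness to force both partners in any blocking triple to have utility $0$ in $M'$ and hence in $M$, and then contradict the fact that $\alpha_{z_1}=\bot$ held at the $d$\textsuperscript{th} iteration of the \texttt{while} loop. The only cosmetic difference is how you exclude the possibility that a blocking partner equals $\alpha_i$ (you note $\alpha_i$ is matched in $M'$, whereas the paper observes that at least one of the two partners is distinct from $\alpha_i$ and so would have been chosen as $\alpha_{z_1}$); both are valid.
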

\begin{proof}
Refer to Figures~\ref{fig:3d_sr_sas_bin_algorithm_7_cases_case_3} and~\ref{fig:3d_sr_sas_bin_algorithm_7_cases_case_1}. Suppose for a contradiction that some such $\alpha_g\in N$ exists. By the construction of $M'$ in Cases 1 and 3, $u_{\alpha_p}(M')\geq u_{\alpha_p}(M)$ for any $\alpha_p\in N \setminus S$. It follows that $\alpha_g\in S$ and hence $u_{\alpha_g}(M')\geq 1$. Since $u_{\alpha_{g}}(M') < u_{\alpha_{g}}(M)$ it must be that $u_{\alpha_g}(M) = 2$. The only such agents in $S$ are $S_{3d-1}$ for $1 \leq d \leq c$.

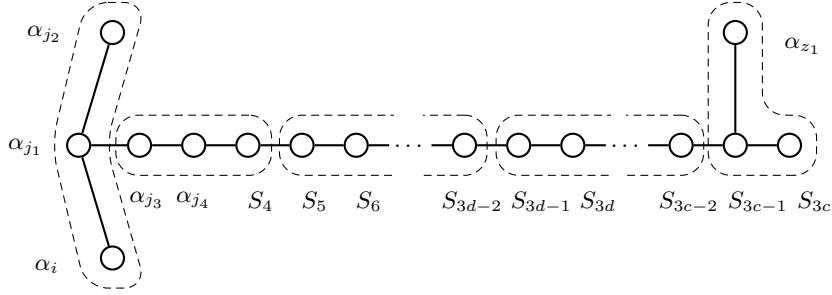
\begin{figure}
    \centering
    \begin{tikzpicture}
\begin{scope}[every node/.style={circle,draw, minimum size=2.4mm}, xscale=0.9]
    \node[thick, circle, label={[shift={(-0.75, -0.75)}]:$\alpha_{i\phantom{j_2}}$}] (ai) at (1,0.5) {};
    \node[thick, circle, label={[shift={(-0.9, -0.6)}]:$\alpha_{j_2}$}] (aj2) at (1,3.5) {};
    \node[thick, circle, label={[shift={(-0.7, -0.6)}]:$\alpha_{j_1}$}] (aj1) at (0.5,2) {};
    \node[thick, circle, label={[shift={(0.1, -1.3)}]:$\alpha_{j_3}$}] (aj3) at (1.4,2) {};
    \node[thick, circle, label={[shift={(0, -1.3)}]:$\alpha_{j_4}$}] (aj4) at (2.2,2) {};
    \node[thick, circle, label={[shift={(0.4, -1.5)}]:$S_{4\phantom{d-1}}$}] (s4) at (3.0,2) {};
    \node[thick, circle, label={[shift={(0.4, -1.5)}]$S_{5\phantom{d-1}}$}] (s5) at (3.8,2) {};
    \node[thick, circle, label={[shift={(0.4, -1.5)}]:$S_{6\phantom{d-1}}$}] (s6) at (4.6,2) {};
    \node[draw=none, inner sep=0.5mm] (dots1) at (5.4,2) {$\dots$};
    \node[thick, circle, label={[shift={(0.1, -1.5)}]:$S_{3d-2}$}] (s3d2) at (6.2,2) {};
    \node[thick, circle, label={[shift={(0.3, -1.5)}]:$S_{3d-1}$}] (s3d1) at (7.0,2) {};
    \node[thick, circle, label={[shift={(0.5, -1.5)}]:$S_{3d\phantom{-3}}$}] (s3d) at (7.8,2) {};
    \node[draw=none, inner sep=0.5mm] (dots2) at (8.6,2) {$\dots$};
    \node[thick, circle, label={[shift={(0.1, -1.5)}]:$S_{3c-2}$}] (s3c2) at (9.4,2) {};
    \node[thick, circle, label={[shift={(0.3, -1.5)}]:$S_{3c-1}$}] (s3c1) at (10.2,2) {};
    \node[thick, circle, label={[shift={(0.5, -1.5)}]:$S_{3c\phantom{-1}}$}] (s3c) at (11.0,2) {};
    \node[thick, circle, label={[shift={(0.9, -0.75)}]:$\alpha_{z_1}\vphantom{S_{3d-1}}$}] (az1) at (10.2,3.5) {};
    \draw [rounded corners=3mm, densely dashed] (0.6, 3.9)--(1.5, 3.9)--(0.9, 2)--(1.5, 0.1)--(0.6, 0.1)--(0.1, 2)--cycle;
    
    \begin{scope}[yscale=-1, xscale=-1, yshift=-4.0cm, xshift=0.1cm, xshift=-22.0cm]
        \draw [rounded corners=3mm, densely dashed] (10.5, 2.4)--(12.1, 2.4)--(12.1, 0.1)--(11.3, 0.1)--(11.3, 1.6)--(10.5, 1.6)--cycle;
    \end{scope}
    
    \node[rectangle, inner sep=0, minimum height=8mm, minimum width=20.7mm, rounded corners=3mm, densely dashed] (triple1) at (aj4) {};
    
    \begin{scope}
        \clip(0,1) rectangle (8.35, 4.0);
        \node[rectangle, inner sep=0, minimum height=8mm, minimum width=27.6mm, rounded corners=3mm, densely dashed] (triple1) at ($(s3d1)!0.5!(s3c2)$) {};
    \end{scope}
    \begin{scope}
            \clip(8.6, 0.0) rectangle (12, 4.0);
            \node[rectangle, inner sep=0, minimum height=8mm, minimum width=27.6mm, rounded corners=3mm, densely dashed] (triple1) at ($(s3d1)!0.5!(s3c2)$) {};
    \end{scope}
    
    \begin{scope}
        \clip(0,1) rectangle (5.15, 4.0);
        \node[rectangle, inner sep=0, minimum height=8mm, minimum width=27.6mm, rounded corners=3mm, densely dashed] (triple1) at ($(s5)!0.5!(s3d2)$) {};
    \end{scope}
    \begin{scope}
            
            \clip(5.6, 0.0) rectangle (12, 4.0);
            \node[rectangle, inner sep=0, minimum height=8mm, minimum width=27.6mm, rounded corners=3mm, densely dashed] (triple1) at ($(s5)!0.5!(s3d2)$) {};
    \end{scope}
\end{scope}
\begin{scope}
    \foreach \from/\to in {aj2/aj1, aj1/ai, aj1/aj3, aj3/aj4, aj4/s4, s4/s5, s5/s6, s6/dots1, dots1/s3d2, s3d2/s3d1, s3d1/s3d, s3d/dots2, dots2/s3c2, s3c2/s3c1, s3c1/s3c, s3c1/az1}
        \draw [thick] (\from) -- (\to);
\end{scope}
\end{tikzpicture}
    \caption{The structure of $M'$ in Case 1} 
    \label{fig:3d_sr_sas_bin_algorithm_7_cases_case_1}
\end{figure}

First consider $S_{3c-1}$. Since $u_{S_{3c-1}}(M')=2$ it follows that $S_{3c-1}$ does not belong to a triple that blocks $M'$ and hence $\alpha_g\neq S_{3c-1}$. Now consider $S_{3d-1}$ for $1\leq d < c$. Suppose for a contradiction that triple $\{ S_{3d-1}, \alpha_{k_1}, \alpha_{k_2} \}$ blocks $M'$ where $\alpha_{k_1}, \alpha_{k_2} \in N$. Since $u_{S_{3d-1}}(M')=1$ it follows that $u_{S_{3d-1}}(\{ \alpha_{k_1}, \alpha_{k_2} \})=2$ and hence that $\mathit{val}_{S_{3d-1}}(\alpha_{k_1})=\mathit{val}_{S_{3d-1}}(\alpha_{k_2})=1$. Consider $\alpha_{k_1}$ and $\alpha_{k_2}$. Since $(N, V)$ is triangle-free, it must be that $u_{\alpha_{k_1}}(\{ S_{3d-1}, \alpha_{k_2} \})=u_{\alpha_{k_2}}(\{ S_{3d-1}, \alpha_{k_1} \})=1$. It follows that $u_{\alpha_{k_1}}(M')=u_{\alpha_{k_2}}(M')=0$. By construction of $M'$, no agent $\alpha_p \in N$ exists where $u_{\alpha_p}(M') = 0$ and $u_{\alpha_p}(M') < u_{\alpha_p}(M)$. It follows that $u_{\alpha_{k_1}}(M)=u_{\alpha_{k_2}}(M)=0$. Recall the $d\textsuperscript{th}$ iteration of the \texttt{while} loop. We have shown that two agents $\alpha_{k_1}, \alpha_{k_2}$ exist where $\mathit{val}_{S_{3d-1}}(\alpha_{k_1})=\mathit{val}_{S_{3d-1}}(\alpha_{k_2})=1$ and $u_{\alpha_{k_1}}(M)=u_{\alpha_{k_2}}(M)=0$. It follows that some $\alpha_{z_1} \in N\setminus \{\alpha_i\}$ exists where $\mathit{val}_{\alpha_{z_1}}(S_{3d-1})=1$ and $u_{\alpha_{z_1}}(M)=0$, since either $\alpha_{z_1}=\alpha_{k_1}$ or $\alpha_{i}=\alpha_{k_1}$ and $\alpha_{z_1}=\alpha_{k_2}$. In this iteration, since $\alpha_{z_1}\neq \bot$ the break condition held and the \texttt{while} loop terminated. This is a contradiction since $d < c$. It follows, for $1\leq d \leq c$, that no triple containing $S_{3d-1}$ blocks $M'$. In summary, in Cases 1 and 3, no $\alpha_g\in N$ exists where $u_{\alpha_{g}}(M') < u_{\alpha_{g}}(M)$ and $\alpha_g$ belongs to a triple that blocks $M'$.
\end{proof}

\begin{lem}
\label{lem:algocases245and6noalphapexists}
In Cases 2, 4, 5, and 6 of Algorithm~{\normalfont \texttt{repair}}, no agent $\alpha_{g}\in N$ exists where $u_{\alpha_{g}}(M') < u_{\alpha_{g}}(M)$ and $\alpha_g$ belongs to a triple that blocks $M'$.
\end{lem}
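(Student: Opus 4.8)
The plan is to follow the template of Lemma~\ref{lem:algocases1and3noalphapexists}, isolating first the features of the construction that are common to Cases 2, 4, 5, and 6. In every one of these cases $S$ is assembled by concatenating whole triples of $M$ — the initial block $\langle \alpha_{j_1}, \alpha_{j_3}, \alpha_{j_4}\rangle$ is the triple $M(\alpha_{j_1})$, and each iteration appends the complete triple $M(\alpha_{w_1})$ — so every triple of $M$ is either contained in $S$ or disjoint from it, and the only further agents used in $M_{\textrm{S}}$ (namely $\alpha_i$, $\alpha_{j_2}$, and the witness $\alpha_{z_2}$, $\alpha_{y_1}$, $\alpha_{y_2}$ or $\alpha_{z_5}$ of the case) are unmatched in $M$. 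It follows that $u_{\alpha_p}(M')\geq u_{\alpha_p}(M)$ for every $\alpha_p\in N\setminus S$. Since in each of these four cases the number of agents identified is divisible by three, every agent of $S$ is matched in $M'$ and hence, by Lemma~\ref{lem:algreturnspmatching}, has utility at least one. These two observations give the key property that no $\alpha_p\in N$ satisfies both $u_{\alpha_p}(M')=0$ and $u_{\alpha_p}(M')<u_{\alpha_p}(M)$.

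Assuming for a contradiction that some $\alpha_g$ with $u_{\alpha_g}(M')<u_{\alpha_g}(M)$ lies in a triple blocking $M'$, the first observation gives $\alpha_g\in S$, and as $\alpha_g$ is matched we get $u_{\alpha_g}(M')\geq 1$; since utilities lie in $\{0,1,2\}$ this forces $u_{\alpha_g}(M)=2$ and $u_{\alpha_g}(M')=1$. The only agents of $S$ with utility two in $M$ are the centres $S_{3e-1}$ ($1\leq e\leq c$), so $\alpha_g=S_{3e-1}$. Writing the blocking triple as $\{S_{3e-1},\alpha_{k_1},\alpha_{k_2}\}$, the value $u_{S_{3e-1}}(M')=1$ forces $\mathit{val}_{S_{3e-1}}(\alpha_{k_1})=\mathit{val}_{S_{3e-1}}(\alpha_{k_2})=1$; triangle-freeness makes $\alpha_{k_1},\alpha_{k_2}$ non-adjacent, so each draws only utility one from the triple and blocking forces $u_{\alpha_{k_1}}(M')=u_{\alpha_{k_2}}(M')=0$. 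The key property then yields $u_{\alpha_{k_1}}(M)=u_{\alpha_{k_2}}(M)=0$, and since $\alpha_{k_1}\neq\alpha_{k_2}$ at least one of them lies in $N\setminus\{\alpha_i\}$, producing a valid candidate for $\alpha_{z_1}$ relative to $S_{3e-1}$.

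For $1\leq e<c$ I would conclude exactly as in Lemma~\ref{lem:algocases1and3noalphapexists}: in the $e$-th iteration the current value of $c$ equals $e$, so the quantity assigned to $\alpha_{z_1}$ concerns $S_{3e-1}$; as the loop continued beyond that iteration its break condition failed there, forcing $\alpha_{z_1}=\bot$ and contradicting the candidate. For the terminal index $e=c$ in Cases 4, 5, and 6 I would note that reaching the case requires Cases 1 and 3 both to fail, and their joint failure forces $\alpha_{z_1}=\bot$ in the final iteration, again contradicting the candidate.

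The hard part is Case 2 with $e=c$, where the failure of Case 1 only gives $\alpha_{z_1}\in\{\bot,\alpha_{j_2}\}$. Here I would exploit that $\alpha_{j_2}$ is matched with positive utility in $M'$: it is paired with $\alpha_{j_1}$, and $\mathit{val}_{\alpha_{j_1}}(\alpha_{j_2})=\mathit{val}_{\alpha_{j_2}}(\alpha_{j_1})=1$ since $\{\alpha_i,\alpha_{j_1},\alpha_{j_2}\}$ blocked $M$. Therefore $\alpha_{k_1},\alpha_{k_2}\neq\alpha_{j_2}$, so the candidate above in fact lies in $N\setminus\{\alpha_i,\alpha_{j_2}\}$; this excludes $\alpha_{z_1}=\bot$ and, under the convention that the selection of $\alpha_{z_1}$ avoids $\alpha_{j_2}$ whenever another valid agent exists, also excludes $\alpha_{z_1}=\alpha_{j_2}$, so Case 1 rather than Case 2 would have been entered — a contradiction. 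I expect this last step to be the principal obstacle, as it turns on the intended deterministic reading of the nondeterministic choice of $\alpha_{z_1}$ and on carefully excluding $\alpha_{j_2}$ from the two witnesses.
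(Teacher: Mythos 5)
Your proof follows the same skeleton as the paper's: establish $u_{\alpha_p}(M')\geq u_{\alpha_p}(M)$ for $\alpha_p\notin S$, localise $\alpha_g$ to $S$ with $u_{\alpha_g}(M)=2$ so that $\alpha_g=S_{3d-1}$, use triangle-freeness to force both partners $\alpha_{k_1},\alpha_{k_2}$ of the blocking triple to have utility zero in $M'$ and hence in $M$, and convert at least one of them into a witness for $\alpha_{z_1}$ in the $d$\textsuperscript{th} iteration, contradicting either the loop's continuation (for $d<c$) or the case that fired (for $d=c$). Where you diverge is the terminal subcase of Case 2, and there you are in fact \emph{more} careful than the paper: the paper's proof dismisses $d=c$ with ``the condition for either Case 1 or Case 3 was true, a contradiction'', which under the written \texttt{if}/\texttt{else} order genuinely refutes Cases 4, 5 and 6 (reaching them falsifies both the Case 1 and Case 3 conditions, whose joint failure forces $\alpha_{z_1}=\bot$, exactly as you argue), but Case 2 is tested \emph{before} Case 3, so entering Case 2 only falsifies Case 1 and is compatible with $\alpha_{z_1}=\alpha_{j_2}\neq\bot$. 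Your supplementary observations close precisely this hole: in Case 2, $\alpha_{j_2}$ is placed with $\alpha_{j_1}$ and $\mathit{val}_{\alpha_{j_2}}(\alpha_{j_1})=1$ (by symmetry, from the repairable condition), so $u_{\alpha_{j_2}}(M')\geq 1$ and hence $\alpha_{k_1},\alpha_{k_2}\in N\setminus\{\alpha_i,\alpha_{j_2}\}$; and your residual appeal to a tie-breaking convention (select $\alpha_{z_1}\neq\alpha_{j_2}$ whenever another candidate exists) is genuinely needed for the pseudocode as literally written. Indeed, with an adversarial resolution of ``some'' one can construct a triangle-free repairable instance with $c=2$ in which the final iteration offers candidates $\alpha_{j_2},\alpha_{k_1},\alpha_{k_2}$ for $\alpha_{z_1}$, the algorithm picks $\alpha_{j_2}$, Case 2 fires because some $\alpha_{z_2}$ exists, and the triple $\{S_{3c-1},\alpha_{k_1},\alpha_{k_2}\}$ blocks the returned matching; so the lemma holds only under your convention (or, equivalently, if the Case 3 test is moved ahead of the Case 2 test, which is evidently what the paper's proof implicitly assumes, since it treats Cases 1 and 3 as jointly exhausting the stopping condition $\alpha_{z_1}\neq\bot$). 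In short, your proposal is correct modulo the convention you state explicitly, and the ``principal obstacle'' you flag is a real imprecision in the paper's own argument rather than a defect of yours.
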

\begin{proof}
Refer to Figures~\ref{fig:3d_sr_sas_bin_algorithm_7_cases_case_6}, \ref{fig:3d_sr_sas_bin_algorithm_7_cases_case_2}, \ref{fig:3d_sr_sas_bin_algorithm_7_cases_case_4}, and~\ref{fig:3d_sr_sas_bin_algorithm_7_cases_case_5}. Suppose for a contradiction that some such $\alpha_g\in N$ exists. As before, by the construction of $M'$ in Cases 2, 4, 5, and 6, $u_{\alpha_p}(M')\geq u_{\alpha_p}(M)$ for any $\alpha_p\in N \setminus S$. It follows that $\alpha_g\in S$ and hence $u_{\alpha_g}(M')\geq 1$. Since $u_{\alpha_{g}}(M') < u_{\alpha_{g}}(M)$ it must be that $u_{\alpha_g}(M) = 2$. The only such agents in $S$ are $S_{3d-1}$ for $1 \leq d \leq c$.

\begin{figure}
    \centering
    \begin{tikzpicture}
\begin{scope}[every node/.style={circle,draw, minimum size=2.4mm}, xscale=0.9]
    \node[thick, circle, label={[shift={(-0.75, -0.75)}]:$\alpha_{i\phantom{j_2}}$}] (ai) at (1,0.5) {};
    \node[thick, circle, label={[shift={(-0.9, -0.6)}]:$\alpha_{j_2}$}] (aj2) at (1,3.5) {};
    \node[thick, circle, label={[shift={(-0.7, -0.6)}]:$\alpha_{j_1}$}] (aj1) at (0.5,2) {};
    \node[thick, circle, label={[shift={(0.1, -1.3)}]:$\alpha_{j_3}$}] (aj3) at (1.4,2) {};
    \node[thick, circle, label={[shift={(0, -1.3)}]:$\alpha_{j_4}$}] (aj4) at (2.2,2) {};
    \node[thick, circle, label={[shift={(0.4, -1.5)}]:$S_{4\phantom{d-1}}$}] (s4) at (3.0,2) {};
    \node[thick, circle, label={[shift={(0.4, -1.5)}]$S_{5\phantom{d-1}}$}] (s5) at (3.8,2) {};
    \node[thick, circle, label={[shift={(0.4, -1.5)}]:$S_{6\phantom{d-1}}$}] (s6) at (4.6,2) {};
    \node[draw=none, inner sep=0.5mm] (dots1) at (5.4,2) {$\dots$};
    \node[thick, circle, label={[shift={(0.1, -1.5)}]:$S_{3d-2}$}] (s3d2) at (6.2,2) {};
    \node[thick, circle, label={[shift={(0.3, -1.5)}]:$S_{3d-1}$}] (s3d1) at (7.0,2) {};
    \node[thick, circle, label={[shift={(0.5, -1.5)}]:$S_{3d\phantom{-3}}$}] (s3d) at (7.8,2) {};
    \node[draw=none, inner sep=0.5mm] (dots2) at (8.6,2) {$\dots$};
    \node[thick, circle, label={[shift={(0.1, -1.5)}]:$S_{3c-2}$}] (s3c2) at (9.4,2) {};
    \node[thick, circle, label={[shift={(0.3, -1.5)}]:$S_{3c-1}$}] (s3c1) at (10.2,2) {};
    \node[thick, circle, label={[shift={(0.5, -1.5)}]:$S_{3c\phantom{-1}}$}] (s3c) at (11.0,2) {};
    
    \node[thick, circle, label={[shift={(0.9, -0.75)}]:$\alpha_{z_2}\vphantom{S_{3d-1}}$}] (az2) at (11.0,3.5) {};
    
    \draw [rounded corners=3mm, densely dashed] (0.6, 3.9)--(1.5, 3.9)--(0.9, 2)--(1.5, 0.1)--(0.6, 0.1)--(0.1, 2)--cycle;
    
    \begin{scope}[yscale=-1, yshift=-4.0cm, xshift=0.1cm]
        \draw [rounded corners=3mm, densely dashed] (9.7, 2.4)--(11.3, 2.4)--(11.3, 0.1)--(10.5, 0.1)--(10.5, 1.6)--(9.7, 1.6)--cycle;
    \end{scope}
    
    \node[rectangle, inner sep=0, minimum height=8mm, minimum width=20.7mm, rounded corners=3mm, densely dashed] (triple1) at (aj4) {};
    
    \begin{scope}
        \clip(0,1) rectangle (8.15, 4.0);
        \node[rectangle, inner sep=0, minimum height=8mm, minimum width=27.6mm, rounded corners=3mm, densely dashed] (triple1) at ($(s3d1)!0.5!(s3c2)$) {};
    \end{scope}
    \begin{scope}
            
            \clip(8.7, 0.0) rectangle (12, 4.0);
            \node[rectangle, inner sep=0, minimum height=8mm, minimum width=27.6mm, rounded corners=3mm, densely dashed] (triple1) at ($(s3d1)!0.5!(s3c2)$) {};
    \end{scope}
    
    \begin{scope}
        \clip(0,1) rectangle (5.05, 4.0);
        \node[rectangle, inner sep=0, minimum height=8mm, minimum width=27.6mm, rounded corners=3mm, densely dashed] (triple1) at ($(s5)!0.5!(s3d2)$) {};
    \end{scope}
    \begin{scope}
            \clip(5.5, 0.0) rectangle (12, 4.0);
            \node[rectangle, inner sep=0, minimum height=8mm, minimum width=27.6mm, rounded corners=3mm, densely dashed] (triple1) at ($(s5)!0.5!(s3d2)$) {};
    \end{scope}
\end{scope}
\begin{scope}
    \foreach \from/\to in {aj2/aj1, aj1/ai, aj1/aj3, aj3/aj4, aj4/s4, s4/s5, s5/s6, s6/dots1, dots1/s3d2, s3d2/s3d1, s3d1/s3d, s3d/dots2, dots2/s3c2, s3c2/s3c1, s3c1/s3c, s3c/az2}
        \draw [thick] (\from) -- (\to);
\end{scope}
\end{tikzpicture}
    \caption{The structure of $M'$ in Case 2} 
    \label{fig:3d_sr_sas_bin_algorithm_7_cases_case_2}
\end{figure}

\begin{figure}
    \centering
    \begin{tikzpicture}
\begin{scope}[every node/.style={circle,draw, minimum size=2.4mm}, xscale=0.9]
    \node[thick, circle, label={[shift={(0.2, -1.29)}]:$\alpha_{i}$}] (ai) at (1,0.5) {};
    \node[thick, circle, label={[shift={(-0.4, 0)}]:$\alpha_{j_2}$}] (aj2) at (1,3.5) {};
    \node[thick, circle, label={[shift={(-0.55, 0.0)}]:$\alpha_{j_1}$}] (aj1) at (0.5,2) {};
    \node[thick, circle, label={[shift={(-0.1, -1.3)}]:$\alpha_{j_3}$}] (aj3) at (1.3,2) {};
    \node[thick, circle, label={[shift={(0, -1.3)}]:$\alpha_{j_4}$}] (aj4) at (2.1,2) {};
    
    \node[thick, circle, label={[shift={(0.0, -1.4)}]:$\alpha_{y_1}$}] (ay1) at (0.2,0.5) {};
    
    \node[thick, circle, label={[shift={(0.4, -1.5)}]}] (s4) at (2.9,2) {};
    \node[thick, circle, label={[shift={(0.4, -1.5)}]}] (s5) at (3.7,2) {};
    \node[thick, circle, label={[shift={(0.4, -1.5)}]}] (s6) at (4.5,2) {};
    
    \node[draw=none, inner sep=0.5mm] (dots1) at (5.3,2) {$\dots$};
    
    \node[thick, circle, label={[shift={(0.1, -0.1)}]:$S_{3d-2}$}] (s3d2) at (6.1,2) {};
    \node[thick, circle, label={[shift={(0.3, -0.1)}]:$S_{3d-1}$}] (s3d1) at (6.9,2) {};
    \node[thick, circle, label={[shift={(0.5, -0.1)}]:$S_{3d\phantom{-3}}$}] (s3d) at (7.7,2) {};
    
    \node[draw=none, inner sep=0.5mm] (dots2) at (8.5,2) {$\dots$};
    
    \node[thick, circle, label={[shift={(0.1, -1.5)}]}] (s3c2) at (9.3,2) {};
    \node[thick, circle, label={[shift={(0.3, -1.5)}]}] (s3c1) at (10.1,2) {};
    \node[thick, circle, label={[shift={(0.4, -0.1)}]:$S_{3c\phantom{-1}}$}] (s3c) at (10.9,2) {};
    
    \node[draw=none, inner sep=0, minimum size=0] (s3by) at (10.4, 0.5]) {};
    
    \draw [thick, rounded corners=4mm] (ai.east)--(s3by.center)--(s3c);

    \node[rectangle, inner sep=0, minimum height=8mm, minimum width=20.7mm, rounded corners=3mm, densely dashed] (triple1) at (s4) {};
    
    \begin{scope}
        \clip(0,1) rectangle (5.05, 4.0);
        \node[rectangle, inner sep=0, minimum height=8mm, minimum width=27.6mm, rounded corners=3mm, densely dashed] (triple1) at ($(s6)!0.5!(s3d1)$) {};
    \end{scope}
    \begin{scope}
            \clip(5.5, 0.0) rectangle (12, 4.0);
            \node[rectangle, inner sep=0, minimum height=8mm, minimum width=27.6mm, rounded corners=3mm, densely dashed] (triple1) at ($(s6)!0.5!(s3d1)$) {};
    \end{scope}
    
    \begin{scope}
        \clip(0,1) rectangle (8.25, 4.0);
        \node[rectangle, inner sep=0, minimum height=8mm, minimum width=27.6mm, rounded corners=3mm, densely dashed] (triple1) at ($(s3d)!0.5!(s3c1)$) {};
    \end{scope}
    \begin{scope}
            \clip(8.7, 0.0) rectangle (12, 4.0);
            \node[rectangle, inner sep=0, minimum height=8mm, minimum width=27.6mm, rounded corners=3mm, densely dashed] (triple1) at ($(s3d)!0.5!(s3c1)$) {};
    \end{scope}
    
    \begin{scope}[yscale=-1, yshift=-4.0cm]
        \draw [rounded corners=3mm, densely dashed] (0.0, 2.4)--(1.66, 2.4)--(1.66, 1.6)--(1.45, 0.1)--(0.65, 0.1)--cycle;
        \draw [rounded corners=3mm, densely dashed] (-0.2, 3.1)--(-0.2,3.9)--(10.7,3.9)--(11.3,2.4)--(11.3,1.6)--(10.5, 1.6)--(10.5, 2.4)--(10.0, 3.1)--cycle;
    \end{scope}
\end{scope}
\begin{scope}
    \foreach \from/\to in {aj2/aj1, aj1/ai, aj1/aj3, aj3/aj4, aj4/s4, s4/s5, s5/s6, s6/dots1, dots1/s3d2, s3d2/s3d1, s3d1/s3d, s3d/dots2, dots2/s3c2, s3c2/s3c1, s3c1/s3c, ay1/ai}
        \draw [thick] (\from) -- (\to);
\end{scope}
\end{tikzpicture}
    \caption{The structure of $M'$ in Case 4} 
    \label{fig:3d_sr_sas_bin_algorithm_7_cases_case_4}
\end{figure}

\begin{figure}
    \centering
    \begin{tikzpicture}
\begin{scope}[every node/.style={circle,draw, minimum size=2.4mm}, xscale=0.9]
    \node[thick, circle, label={[label distance=1.8mm]-45:$\alpha_{i\phantom{j_2}}$}] (ai) at (1,0.5) {};
    \node[thick, circle, label={[shift={(0.0, 0.1)}]:$\alpha_{j_2}$}] (aj2) at (1,3.5) {};
    \node[thick, circle, label={[shift={(-0.55, -1.2)}]:$\alpha_{j_1}$}] (aj1) at (0.5,2) {};
    \node[thick, circle, label={[shift={(-0.1, -1.3)}]:$\alpha_{j_3}$}] (aj3) at (1.3,2) {};
    \node[thick, circle, label={[shift={(0, -1.3)}]:$\alpha_{j_4}$}] (aj4) at (2.1,2) {};
    
    \node[thick, circle, label={[shift={(0.0, 0.1)}]:$\alpha_{y_2}$}] (ay2) at (0.2,3.5) {};
    
    \node[thick, circle, label={[shift={(0.4, -1.5)}]}] (s4) at (2.9,2) {};
    \node[thick, circle, label={[shift={(0.4, -1.5)}]}] (s5) at (3.7,2) {};
    \node[thick, circle, label={[shift={(0.4, -1.5)}]}] (s6) at (4.5,2) {};
    
    \node[draw=none, inner sep=0.5mm] (dots1) at (5.3,2) {$\dots$};
    
    \node[thick, circle, label={[shift={(0.1, -1.5)}]:$S_{3d-2}$}] (s3d2) at (6.1,2) {};
    \node[thick, circle, label={[shift={(0.3, -1.5)}]:$S_{3d-1}$}] (s3d1) at (6.9,2) {};
    \node[thick, circle, label={[shift={(0.5, -1.5)}]:$S_{3d\phantom{-3}}$}] (s3d) at (7.7,2) {};
    
    \node[draw=none, inner sep=0.5mm] (dots2) at (8.5,2) {$\dots$};
    
    \node[thick, circle, label={[shift={(0.1, -1.5)}]}] (s3c2) at (9.3,2) {};
    \node[thick, circle, label={[shift={(0.3, -1.5)}]}] (s3c1) at (10.1,2) {};
    \node[thick, circle, label={[shift={(0.5, -1.5)}]:$S_{3c\phantom{-1}}$}] (s3c) at (10.9,2) {};
    
    \node[draw=none, inner sep=0, minimum size=0] (s3by) at (10.4, 3.5]) {};
    
    \draw [thick, rounded corners=3mm] (aj2.east)--(s3by.center)--(s3c);

    \node[rectangle, inner sep=0, minimum height=8mm, minimum width=20.7mm, rounded corners=3mm, densely dashed] (triple1) at (s4) {};
    
    \begin{scope}
        \clip(0,1) rectangle (5.05, 4.0);
        \node[rectangle, inner sep=0, minimum height=8mm, minimum width=27.6mm, rounded corners=4mm, densely dashed] (triple1) at ($(s6)!0.5!(s3d1)$) {};
    \end{scope}
    \begin{scope}
            \clip(5.5, 0.0) rectangle (12, 4.0);
            \node[rectangle, inner sep=0, minimum height=8mm, minimum width=27.6mm, rounded corners=3mm, densely dashed] (triple1) at ($(s6)!0.5!(s3d1)$) {};
    \end{scope}
    
    \begin{scope}
        \clip(0,1) rectangle (8.25, 4.0);
        \node[rectangle, inner sep=0, minimum height=8mm, minimum width=27.6mm, rounded corners=3mm, densely dashed] (triple1) at ($(s3d)!0.5!(s3c1)$) {};
    \end{scope}
    \begin{scope}
            \clip(8.7, 0.0) rectangle (12, 4.0);
            \node[rectangle, inner sep=0, minimum height=8mm, minimum width=27.6mm, rounded corners=3mm, densely dashed] (triple1) at ($(s3d)!0.5!(s3c1)$) {};
    \end{scope}
    
    \draw [rounded corners=3mm, densely dashed] (0.0, 2.4)--(1.66, 2.4)--(1.66, 1.6)--(1.45, 0.1)--(0.65, 0.1)--cycle;
    \draw [rounded corners=3mm, densely dashed] (-0.2, 3.1)--(-0.2,3.9)--(10.7,3.9)--(11.3,2.4)--(11.3,1.6)--(10.5, 1.6)--(10.5, 2.4)--(10.0, 3.1)--cycle;
\end{scope}
\begin{scope}
    \foreach \from/\to in {aj2/aj1, aj1/ai, aj1/aj3, aj3/aj4, aj4/s4, s4/s5, s5/s6, s6/dots1, dots1/s3d2, s3d2/s3d1, s3d1/s3d, s3d/dots2, dots2/s3c2, s3c2/s3c1, s3c1/s3c, ay2/aj2}
        \draw [thick] (\from) -- (\to);
\end{scope}
\end{tikzpicture}
    \caption{The structure of $M'$ in Case 5} 
    \label{fig:3d_sr_sas_bin_algorithm_7_cases_case_5}
\end{figure}
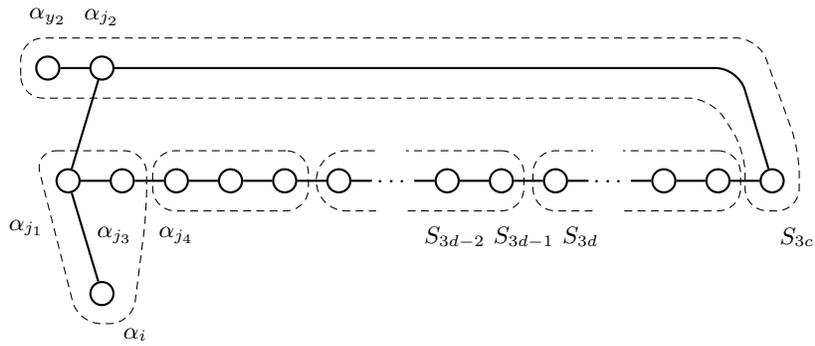

Consider $S_{3d-1}$ for $1\leq d \leq c$. Note that $u_{S_{3d-1}}(M)=2$ and $u_{S_{3d-1}}(M')=1$. Suppose for a contradiction that triple $\{ S_{3d-1}, \alpha_{k_1}, \alpha_{k_2} \}$ blocks $M'$ where $\alpha_{k_1}, \alpha_{k_2} \in N$. As before, since $u_{S_{3d-1}}(M')=1$ it follows that $u_{S_{3d-1}}(\{ \alpha_{k_1},\allowbreak \alpha_{k_2} \})=2$. Consider $\alpha_{k_1}$ and $\alpha_{k_2}$. Since $(N, V)$ is triangle-free, it must be that $u_{\alpha_{k_1}}(\{ S_{3d-1}, \alpha_{k_2} \}) = u_{\alpha_{k_2}}(\{ S_{3d-1}, \alpha_{k_1} \}) = 1$. It follows that $u_{\alpha_{k_1}}(M') = u_{\alpha_{k_2}}(M') = 0$. By construction of $M'$, no agent $\alpha_p \in N$ exists where $u_{\alpha_p}(M') = 0$ and $u_{\alpha_p}(M') < u_{\alpha_p}(M)$. It follows that $u_{\alpha_{k_1}}(M)=u_{\alpha_{k_2}}(M)=0$. Recall the $d\textsuperscript{th}$ iteration of the \texttt{while} loop. We have shown that two agents $\alpha_{k_1}, \alpha_{k_2}$ exist where $\mathit{val}_{S_{3d-1}}(\alpha_{k_1})=\mathit{val}_{S_{3d-1}}(\alpha_{k_2})=1$ and $u_{\alpha_{k_1}}(M)=u_{\alpha_{k_2}}(M)=0$. It follows that some $\alpha_{z_1} \in N\setminus \{\alpha_i\}$ exists where $\mathit{val}_{\alpha_{z_1}}(S_{3d-1})=1$ and $u_{\alpha_{z_1}}(M)=0$, since either $\alpha_{z_1}=\alpha_{k_1}$ or $\alpha_{i}=\alpha_{k_1}$ and $\alpha_{z_1}=\alpha_{k_2}$. In this iteration, since $\alpha_{z_1}\neq \bot$ the break condition held, the \texttt{while} loop terminated, and the condition for either Case 1 or Case 3 was true. This is a contradiction. It follows that no triple containing $S_{3d-1}$ blocks $M'$ for $1\leq d \leq c$. In summary, in Cases 2, 4, 5, and 6, no $\alpha_g\in N$ exists where $u_{\alpha_{g}}(M') < u_{\alpha_{g}}(M)$ and $\alpha_g$ belongs to a triple that blocks $M'$.
\end{proof}

\begin{lem}
\label{lem:algocase7noalphapexists}
In Case 7 of Algorithm~{\normalfont \texttt{repair}}, no agent $\alpha_{g}\in N$ exists where $u_{\alpha_{g}}(M') < u_{\alpha_{g}}(M)$ and $\alpha_g$ belongs to a triple that blocks $M'$.
\end{lem}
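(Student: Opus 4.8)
The plan is to first fix the structure of $M$ and $M'$ in Case~7, then separately rule out the two kinds of agent $\alpha_g$ that can satisfy $u_{\alpha_g}(M') < u_{\alpha_g}(M)$. Since $(N,V)$ is triangle-free and $M$ is a $P$-matching, each triple $\{S_{3d-2},S_{3d-1},S_{3d}\}\in M$ is a path whose centre $S_{3d-1}$ has utility $2$ and whose endpoints $S_{3d-2},S_{3d}$ have utility $1$, with consecutive members of $S$ adjacent and $S_{3d-2},S_{3d}$ non-adjacent (see Figure~\ref{fig:3d_sr_sas_bin_algorithm_7_cases_original_case}). Reading off the triples $\{\alpha_i,S_1,S_2\}$ and $\{S_{3d},S_{3d+1},S_{3d+2}\}$ ($1\le d<c$) of $M'$, I would check that $u_{\alpha_p}(M')\ge u_{\alpha_p}(M)$ for every $\alpha_p\in N\setminus S$, that $u_{\alpha_i}(M')\ge 1$, and that among the agents of $S$ only the centres $S_{3e-1}$ and the now-unmatched $S_{3c}$ can lose utility. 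Thus every candidate $\alpha_g$ lies in $\{S_{3e-1}:1\le e\le c\}\cup\{S_{3c}\}$, and the one extra fact I would record is that $S_{3c}$ is the unique agent that is matched in $M$ but unmatched in $M'$; consequently any agent other than $S_{3c}$ with $u(M')=0$ also has $u(M)=0$.

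For a centre $S_{3e-1}$ (with $u_{S_{3e-1}}(M')=1$) I would reuse the argument of Lemmas~\ref{lem:algocases1and3noalphapexists} and~\ref{lem:algocases245and6noalphapexists}: a blocking triple through $S_{3e-1}$ would need two neighbours $\alpha_{k_1},\alpha_{k_2}$ with $u(M')=0$, and triangle-freeness forces each to contribute exactly $1$; both differ from $\alpha_i$ because $u_{\alpha_i}(M')\ge 1$, and at least one differs from $S_{3c}$, so that one has $u(M)=0$ and would have been chosen as $\alpha_{z_1}$ in iteration~$e$, contradicting that the loop neither broke there nor ended in Case~1 or~3. The only new wrinkle over the earlier lemmas is that one neighbour may now be the unmatched $S_{3c}$, which the ``at least one is not $S_{3c}$'' step absorbs.

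The substantive work is ruling out $S_{3c}$. Suppose a triple $t=\{S_{3c},\alpha_{k_1},\alpha_{k_2}\}$ blocks $M'$; as $t$ cannot be a triangle it is a path, so two of its agents are endpoints with utility exactly $1$ in $t$ and hence $u(M')=0$. If $S_{3c}$ is the centre, both $\alpha_{k_1},\alpha_{k_2}$ have $u(M')=0$, hence $u(M)=0$, are distinct from $\alpha_i$, and are neighbours of $S_{3c}$; by $\alpha_{z_2}=\bot$ each must lie in $\{\alpha_i,\alpha_{j_2}\}$, forcing two distinct agents to equal $\alpha_{j_2}$, which is impossible. If $S_{3c}$ is an endpoint, the centre $\alpha_{k_1}$ satisfies $\mathit{val}_{S_{3c}}(\alpha_{k_1})=\mathit{val}_{\alpha_{k_1}}(\alpha_{k_2})=1$ with the endpoint $\alpha_{k_2}$ having $u(M)=u(M')=0$ and $u_{\alpha_{k_1}}(M')\le 1$. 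I would then case on $u_{\alpha_{k_1}}(M)$: value $0$ pushes $\alpha_{k_1}$ into $\{\alpha_i,\alpha_{j_2}\}$ via $\alpha_{z_2}=\bot$, after which $\alpha_{k_2}$ witnesses $\alpha_{y_1}\ne\bot$ or $\alpha_{y_2}\ne\bot$; value $1$ with $\alpha_{k_1}\notin S$ makes $\alpha_{k_2}$ a witness for the $\alpha_{z_3}$ clause, contradicting $\alpha_{w_1}=\bot$; and value $2$ forces $\alpha_{k_1}$ to be a centre lying in $S$ (an outside centre keeps utility $2$), so $t$ contains a centre and is excluded by the previous paragraph.

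I expect the delicate case to be value $1$ with $\alpha_{k_1}\in S$. Here I would first note that $u_{\alpha_{k_1}}(M')\le 1$ rules out the positions $S_{3e-2}$ (which have $u(M')=2$), so $\alpha_{k_1}=S_{3b}$ for some $b<c$ with $\mathit{val}_{S_{3c}}(S_{3b})=1$; then, because $\alpha_{z_2}=\bot$ held in the earlier iteration with $c=b$, the endpoint $\alpha_{k_2}$ (adjacent to $S_{3b}$, distinct from $\alpha_i$, with $u(M)=0$) is forced to equal $\alpha_{j_2}$, whence $\mathit{val}_{S_{3b}}(\alpha_{j_2})=\mathit{val}_{S_{3c}}(S_{3b})=1$ activates the $b$-clause in the final iteration, contradicting $b=0$. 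Pinning down exactly which failed stopping condition kills each neighbour type, and in particular chaining a condition from a previous iteration into the final-iteration condition $b=0$, is the main obstacle and is where the hierarchical dependence between the seven cases is genuinely needed; once it is settled, Lemma~\ref{lem:3dsrsasbinblockerimprovement} yields that $M'$ is stable.
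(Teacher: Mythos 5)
Your proposal is correct and follows essentially the same route as the paper's proof: restrict the candidates $\alpha_g$ to the centres $S_{3d-1}$ and to $S_{3c}$ using the fact that $S_{3c}$ is the unique agent with $u(M')=0$ and decreased utility, dispatch the centres via the $\alpha_{z_1}$ witness in iteration $d$ (contradicting that the loop did not break there or end in Case 1 or 3), and eliminate $S_{3c}$ by chaining each configuration of the putative blocking triple to one of the failed conditions $\alpha_{z_2}$, $\alpha_{y_1}$, $\alpha_{y_2}$, $b$, or $\alpha_{w_1}$. The only differences are cosmetic reparameterizations of the case split (you split on the path position of $S_{3c}$ and on $u_{\alpha_{k_1}}(M)\in\{0,1,2\}$ where the paper splits on the valuation pattern and on $u_{\alpha_{k_1}}(M')$, so a few branches land on a different case condition, e.g.\ Case 4 where the paper invokes Case 2), and all such branches reach valid contradictions.
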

\begin{proof}

Refer to Figure~\ref{fig:3d_sr_sas_bin_algorithm_7_cases_case_7}. Suppose for a contradiction that some $\alpha_g$ exists as above.

\begin{figure}[ht]
    \centering
    \begin{tikzpicture}
\begin{scope}[every node/.style={circle,draw, minimum size=2.4mm}, xscale=0.9]
    \node[thick, circle, label={[label distance=1.8mm]-45:$\alpha_{i\phantom{j_2}}$}] (ai) at (1,0.5) {};
    \node[thick, circle, label={[shift={(-0.4, 0)}]:$\alpha_{j_2}$}] (aj2) at (1,3.5) {};
    \node[circle, densely dashed, minimum size=8mm] at (aj2) {};
    \node[thick, circle, label={[shift={(-0.55, -1.2)}]:$\alpha_{j_1}$}] (aj1) at (0.5,2) {};
    \node[thick, circle, label={[shift={(-0.1, -1.3)}]:$\alpha_{j_3}$}] (aj3) at (1.3,2) {};
    \node[thick, circle, label={[shift={(0, -1.3)}]:$\alpha_{j_4}$}] (aj4) at (2.1,2) {};
    
    \node[thick, circle, label={[shift={(0.4, -1.5)}]}] (s4) at (2.9,2) {};
    \node[thick, circle, label={[shift={(0.4, -1.5)}]}] (s5) at (3.7,2) {};
    \node[thick, circle, label={[shift={(0.4, -1.5)}]}] (s6) at (4.5,2) {};
    
    \node[draw=none, inner sep=0.5mm] (dots1) at (5.3,2) {$\dots$};
    
    \node[thick, circle, label={[shift={(0.1, -1.5)}]:$S_{3d-2}$}] (s3d2) at (6.1,2) {};
    \node[thick, circle, label={[shift={(0.3, -1.5)}]:$S_{3d-1}$}] (s3d1) at (6.9,2) {};
    \node[thick, circle, label={[shift={(0.5, -1.5)}]:$S_{3d\phantom{-3}}$}] (s3d) at (7.7,2) {};
    
    \node[draw=none, inner sep=0.5mm] (dots2) at (8.5,2) {$\dots$};
    
    \node[thick, circle, label={[shift={(0.1, -1.5)}]}] (s3c2) at (9.3,2) {};
    \node[thick, circle, label={[shift={(0.3, -1.5)}]}] (s3c1) at (10.1,2) {};
    \node[thick, circle, label={[shift={(0.5, -1.5)}]:$S_{3c\phantom{-1}}$}] (s3c) at (11.0,2) {};

    \node[circle, densely dashed, minimum size=8mm] at (s3c) {};
    \node[rectangle, inner sep=0, minimum height=8mm, minimum width=20.7mm, rounded corners=3mm, densely dashed] (triple1) at (s4) {};
    
    \begin{scope}
        \clip(0,1) rectangle (5.05, 4.0);
        \node[rectangle, inner sep=0, minimum height=8mm, minimum width=27.6mm, rounded corners=3mm, densely dashed] (triple1) at ($(s6)!0.5!(s3d1)$) {};
    \end{scope}
    \begin{scope}
            \clip(5.5, 0.0) rectangle (12, 4.0);
            \node[rectangle, inner sep=0, minimum height=8mm, minimum width=27.6mm, rounded corners=3mm, densely dashed] (triple1) at ($(s6)!0.5!(s3d1)$) {};
    \end{scope}
    
    \begin{scope}
        \clip(0,1) rectangle (8.25, 4.0);
        \node[rectangle, inner sep=0, minimum height=8mm, minimum width=27.6mm, rounded corners=3mm, densely dashed] (triple1) at ($(s3d)!0.5!(s3c1)$) {};
    \end{scope}
    \begin{scope}
            \clip(8.7, 0.0) rectangle (12, 4.0);
            \node[rectangle, inner sep=0, minimum height=8mm, minimum width=27.6mm, rounded corners=3mm, densely dashed] (triple1) at ($(s3d)!0.5!(s3c1)$) {};
    \end{scope}
    
    \draw [rounded corners=3mm, densely dashed] (0.0, 2.4)--(1.66, 2.4)--(1.66, 1.6)--(1.45, 0.1)--(0.65, 0.1)--cycle;

\end{scope}
\begin{scope}
    \foreach \from/\to in {aj2/aj1, aj1/ai, aj1/aj3, aj3/aj4, aj4/s4, s4/s5, s5/s6, s6/dots1, dots1/s3d2, s3d2/s3d1, s3d1/s3d, s3d/dots2, dots2/s3c2, s3c2/s3c1, s3c1/s3c}
        \draw [thick] (\from) -- (\to);
\end{scope}
\end{tikzpicture}
    \caption{The structure of $M'$ in Case 7} 
    \label{fig:3d_sr_sas_bin_algorithm_7_cases_case_7}
\end{figure}
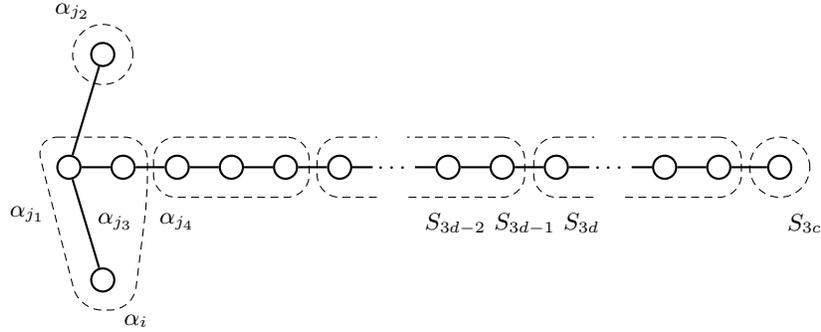

First, consider any agent $\alpha_p\in N$ where $\alpha_p\notin S \cup \{ \alpha_{j_2}, \alpha_{i} \}$. By the construction of $M'$, it can be seen that $M(\alpha_p)=M'(\alpha_p)$ so $u_{\alpha_p}(M)=u_{\alpha_p}(M')$ and hence $\alpha_g \notin S \cup \{ \alpha_{j_2}, \alpha_{i} \}$.

Now consider $\alpha_i$ and $\alpha_{j_2}$. Since $u_{\alpha_i}(M)=0$ and $u_{\alpha_i}(M')=1$ it follows that $\alpha_g \neq \alpha_i$.  Similarly, since $u_{\alpha_{j_2}}(M)=u_{\alpha_{j_2}}(M')=0$ it follows that $\alpha_g \neq \alpha_{j_2}$.

It remains that $\alpha_g \in S$.

Consider $S_{3d-2}$ for $1\leq d\leq c$. By construction of $M'$ it follows that $u_{S_{3d-2}}(M')=2$ so $\alpha_{p}\neq S_{3d-2}$.

Consider $S_{3d-1}$ for $1\leq d \leq c$. Suppose for a contradiction that triple $\{ S_{3d-1}, \alpha_{k_1}, \alpha_{k_2} \}$ blocks $M'$ where $\alpha_{k_1}, \alpha_{k_2} \in N$. Since $u_{S_{3d-1}}(M')=1$ it follows that $u_{S_{3d-1}}(\{ \alpha_{k_1}, \alpha_{k_2} \})=2$. Consider $\alpha_{k_1}$ and $\alpha_{k_2}$. Since $(N, V)$ is triangle-free, it must be that $u_{\alpha_{k_1}}(\{ S_{3d-1}, \alpha_{k_2} \})=u_{\alpha_{k_2}}(\{ S_{3d-1}, \alpha_{k_1} \})=1$. It follows that $u_{\alpha_{k_1}}(M')=u_{\alpha_{k_2}}(M')=0$. By construction of $M'$ it can be seen that $\alpha_p=S_{3c}$ is the only $\alpha_p\in N$ where $u_{\alpha_p}(M')=0$ and $u_{\alpha_p}(M') < u_{\alpha_p}(M)$. It follows that either $u_{\alpha_{k_1}}(M)=0$, $u_{\alpha_{k_2}}(M)=0$, or both. Suppose without loss of generality that $u_{\alpha_{k_1}}(M)=0$. Since $u_{\alpha_{k_1}}(M')=0$ it follows that $\alpha_{k_1}\neq \alpha_i$. Recall the $d\textsuperscript{th}$ iteration of the \texttt{while} loop. Since $\mathit{val}_{S_{3d-1}}(\alpha_{k_1})=1$, $u_{\alpha_{k_1}}(M)=0$, and $\alpha_{k_1}\neq \alpha_i$, it follows that there exists some $\alpha_{z_1} \in N\setminus \{\alpha_i\}$, namely $\alpha_{k_1}$, where $\mathit{val}_{\alpha_{z_1}}(S_{3d-1})=1$ and $u_{\alpha_{z_1}}(M)=0$. In this iteration, since $\alpha_{z_1}\neq \bot$ the break condition held, the \texttt{while} loop terminated, and either the condition for Case 1 was true or the condition for Case 3 was true. This is a contradiction. It follows that $S_{3d-1}$ does not belong to a triple that blocks $M'$ for any $1\leq d \leq c$ and hence that $\alpha_{g}\neq S_{3d-1}$. 

Consider $S_{3d}$ for $1\leq d < c$. By construction of $M'$ it follows that $u_{S_{3d}}(M')=u_{S_{3d}}(M)=1$ so $\alpha_{g}\neq S_{3d}$.

It remains to consider $S_{3c}$. As before, suppose for a contradiction that $\alpha_{k_1},\allowbreak \alpha_{k_2}\in N$ exist where $\{ S_{3c}, \alpha_{k_1}, \alpha_{k_2} \}$ blocks $M'$.  Since $\{ S_{3c}, \alpha_{k_1}, \alpha_{k_2} \}$ blocks $M'$ and $u_{S_{3c}}(M')=0$ it must be that either $\mathit{val}_{S_{3c}}(\alpha_{k_1})=1$ or $\mathit{val}_{S_{3c}}(\alpha_{k_2})=1$ or both.

Suppose that both $\mathit{val}_{S_{3c}}(\alpha_{k_1})=1$ and $\mathit{val}_{S_{3c}}(\alpha_{k_2})=1$ and hence $u_{S_{3c}}(\{ \alpha_{k_1},\allowbreak \alpha_{k_2} \})=2$. Since $(N, V)$ is triangle-free, it must be that $u_{\alpha_{k_1}}(\{ S_{3c}, \alpha_{k_2} \}) = u_{\alpha_{k_2}}(\{ S_{3c}, \alpha_{k_1} \}) = 1$. Since this triple blocks $M'$ it must be that $u_{\alpha_{k_1}}(M')=u_{\alpha_{k_2}}(M')=0$. By the construction of $M'$ it can be seen that $\alpha_p=S_{3c}$ is the only $\alpha_p\in N$ where $u_{\alpha_p}(M')=0$ and $u_{\alpha_p}(M') < u_{\alpha_p}(M)$. It follows that $u_{\alpha_{k_1}}(M)=u_{\alpha_{k_2}}(M)=0$. Note that since $u_{\alpha_i}(M')=1$ it follows that $\alpha_{k_1}\neq \alpha_i$ and $\alpha_{k_2}\neq \alpha_i$. It follows that either $\alpha_{k_1}\in N \setminus \{ \alpha_i, \alpha_{j_2} \}$, $\alpha_{k_2}\in N \setminus \{ \alpha_i, \alpha_{j_2} \}$, or both. Without loss of generality assume that $\alpha_{k_1}\in N \setminus \{ \alpha_i, \alpha_{j_2} \}$. In summary, there exists some $\alpha_{z_2} \in N\setminus \{\alpha_i, \alpha_{j_2} \}$, namely $\alpha_{k_1}$, where $\mathit{val}_{\alpha_{z_2}}(S_{3c})=1$ and $u_{\alpha_{z_2}}(M)=0$. In the algorithm, since $\alpha_{z_2}\neq \bot$ the condition of Case 2 holds. This is a contradiction.

The remaining possibility is that either $\mathit{val}_{S_{3c}}(\alpha_{k_1})=1$ or $\mathit{val}_{S_{3c}}(\alpha_{k_2})=1$ but not both. Suppose without loss of generality that $\mathit{val}_{S_{3c}}(\alpha_{k_1})=1$ and $\mathit{val}_{S_{3c}}(\alpha_{k_2})=0$. It follows that $u_{\alpha_{k_2}}(\{ S_{3c}, \alpha_{k_1} \})=1$ and hence $u_{\alpha_{k_2}}(M')=0$. Since $\alpha_p=S_{3c}$ is the only $\alpha_p\in N$ where $u_{\alpha_p}(M')=0$ and $u_{\alpha_p}(M') < u_{\alpha_p}(M)$, it follows that $u_{\alpha_{k_2}}(M)=0$. It must be that $\mathit{val}_{\alpha_{k_1}}(\alpha_{k_2})=1$ since $u_{\alpha_{k_2}}(\{ S_{3c}, \alpha_{k_1} \})=1$ and $\mathit{val}_{S_{3c}}(\alpha_{k_2})=0$. In summary, since $\mathit{val}_{S_{3c}}(\alpha_{k_1})=1$ and $\mathit{val}_{\alpha_{k_1}}(\alpha_{k_2})=1$ it follows that $u_{\alpha_{k_1}}(\{ S_{3c}, \alpha_{k_2} \})=2$.

We have shown that $u_{\alpha_{k_1}}(\{ S_{3c}, \alpha_{k_2} \})=2$. Either $u_{\alpha_{k_1}}(M')=1$ or $u_{\alpha_{k_1}}(M')=0$. Suppose for a contradiction that $u_{\alpha_{k_1}}(M')=0$. Since $\alpha_p=S_{3c}$ is the only $\alpha_p\in N$ where $u_{\alpha_p}(M')=0$ and $u_{\alpha_p}(M') < u_{\alpha_p}(M)$ it follows that $u_{\alpha_{k_1}}(M)=0$. Consider two further possibilities. First, that $\alpha_{k_1}=\alpha_{j_2}$. Second, that $\alpha_{k_1} \neq \alpha_{j_2}$. In the first, since $\alpha_{k_1}=\alpha_{j_2}$ then there exists some $\alpha_{y_2}\in N$, namely $\alpha_{k_2}$, where $\mathit{val}_{\alpha_{S_{3c}}}(\alpha_{j_2})=\mathit{val}_{\alpha_{y_2}}(\alpha_{j_2})=1$ and $u_{\alpha_{y_2}}(M)=0$. In the algorithm, since $\alpha_{y_2}\neq \bot$ the condition of Case 5 holds. This is a contradiction. Consider the second possibility that $\alpha_{k_1} \neq \alpha_{j_2}$. Since $u_{\alpha_i}(M')=1$ it follows that $\alpha_i\neq \alpha_{k_1}$ and hence there exists some $\alpha_{z_2} \in N\setminus \{\alpha_i, \alpha_{j_2} \}$, namely $\alpha_{k_1}$, where $\mathit{val}_{\alpha_{z_2}}(S_{3c})=1$ and $u_{\alpha_{z_2}}(M)=0$. In the algorithm, since $\alpha_{z_2}\neq \bot$ the condition of Case 2 holds. This is also a contradiction. It remains that $u_{\alpha_{k_1}}(M')=1$.

In summary, we supposed that a triple $\{ S_{3c}, \alpha_{k_1}, \alpha_{k_2} \}$ blocks $M'$. We showed that $\mathit{val}_{S_{3c}}(\alpha_{k_1})=\mathit{val}_{\alpha_{k_1}}(\alpha_{k_2})=1$, $u_{\alpha_{k_2}}(M')=u_{\alpha_{k_2}}(M)=0$, and $u_{\alpha_{k_1}}(M')=1$. This is illustrated in Figure~\ref{fig:3d_sr_sas_bin_algorithm_proof_case_7_explanation_1}.

\begin{figure}[ht]
    \centering
    \begin{tikzpicture}
\begin{scope}[every node/.style={circle,draw, minimum size=2.4mm}]
    
    \node[draw=none, inner sep=0.5mm] (dots1) at (0.5,2) {$\dots$};
    
    \node[thick, circle, label={[shift={(0.1, -1.7)}]:$S_{3c-2}$}] (s3c2) at (1.3,2) {};
    \node[thick, circle, label={[shift={(0.3, -1.7)}]:$S_{3c-1}$}] (s3c1) at (2.1,2) {};
    \node[thick, circle, label={[shift={(0.5, -1.7)}]:$S_{3c\phantom{-3}}$}] (s3c) at (2.9,2) {};
    
    \node[thick, circle, label={[shift={(-0.5, 0.1)}]:$\alpha_{k_1}$}] (ak1) at (4.6,2) {};
    
     \node[thick, circle] (ak12) at (5.232,1.367) {};
     
     \node[thick, circle, draw=none, rotate=-45] (ak13) at (5.864, 0.734) {$\dots$};
    
    \node[thick, circle, label={[shift={(0.5, 0.1)}]:$\alpha_{k_2}$}] (ak2) at (5.6,3.6) {};
    \node[circle, densely dashed, minimum size=8mm] at (s3c) {};
     \node[circle, densely dashed, minimum size=8mm] at (ak2) {};
    \begin{scope}[rotate=-45]
        \clip(0,1) rectangle (3.6, 6.0);
        \node[rectangle, inner sep=0, minimum height=8mm, minimum width=27mm, rounded corners=4mm, densely dashed, transform shape] (triple1) at (ak12) {};
    \end{scope}
    
    \begin{scope}
        \clip(0,1) rectangle (5.05, 4.0);
        \node[rectangle, inner sep=0, minimum height=8mm, minimum width=30.67mm, rounded corners=4mm, densely dashed] (triple2) at ($(dots1)!0.5!(s3c2)$) {};
    \end{scope}
    \begin{scope}
            \clip(5.5, 0.0) rectangle (5, 4.0);
            \node[rectangle, inner sep=0, minimum height=8mm, minimum width=30.67mm, rounded corners=4mm, densely dashed] (triple3) at ($(dots1)!0.5!(s3c2)$) {};
    \end{scope}
\end{scope}
\begin{scope}
    \foreach \from/\to in {dots1/s3c2, s3c2/s3c1, s3c1/s3c, s3c/ak1, ak1/ak2, ak1/ak12, ak12/ak13}
        \draw [thick] (\from) -- (\to);
\end{scope}
\end{tikzpicture}
    \vspace*{-42pt}
    \caption{In Lemma~\ref{lem:algocase7noalphapexists} we consider $M'$ in Case 7. We suppose for a contradiction that some triple $\{ S_{3c}, \alpha_{k_1}, \alpha_{k_2} \}$ blocks $M'$ where $\alpha_{k_1}, \alpha_{k_2}\in N$. We then show that $\mathit{val}_{S_{3c}}(\alpha_{k_1})=\mathit{val}_{\alpha_{k_1}}(\alpha_{k_2})=1$, $u_{\alpha_{k_2}}(M')=u_{\alpha_{k_2}}(M)=0$, and $u_{\alpha_{k_1}}(M')=1$. We then show that this is a contradiction, and conclude that no such $\alpha_{k_1}, \alpha_{k_2}$ exist. This shows that $S_{3c}$ does not belong to a triple that blocks $M'$.}
    \label{fig:3d_sr_sas_bin_algorithm_proof_case_7_explanation_1}
\end{figure}
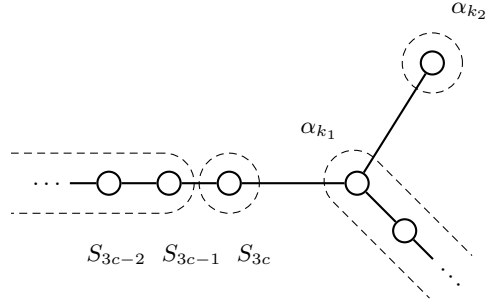

By the condition of Case 7, in the algorithm $\alpha_{w_1}=\bot$. This means that no $\alpha_{w_1}\in N$ exists where $\mathit{val}_{S_{3c}}(\alpha_{w_1})=1$, $u_{\alpha_{w_1}}(M)=1$,  $\alpha_{w_1}\notin S$, and there exists $\alpha_{z_3}\in N\setminus \{ \alpha_i \}$ where $\mathit{val}_{\alpha_{z_3}}(\alpha_{w_1})=1$ and $u_{\alpha_{z_3}}(M)=0$. If $u_{\alpha_{k_1}}(M)=1$ and $\alpha_{k_1}\notin S$ then some $\alpha_{w_1},\alpha_{z_3}$, namely $\alpha_{k_1}, \alpha_{k_2}$, exist, which is a contradiction. It must be that either $u_{\alpha_{k_1}}(M) \neq 1$ or $\alpha_{k_1}\in S$ or both.

Suppose that $\alpha_{k_1}\notin S$ and hence $u_{\alpha_{k_1}}(M)\neq 1$. Recall that $u_{\alpha_{k_1}}(M')=1$. By construction of $M'$ in Case 7, $u_{\alpha_p}(M')=u_{\alpha_p}(M)$ for any $\alpha_p \in N \setminus (S \cup \{ \alpha_i \})$. It follows that $\alpha_{k_1} \in S \cup \{ \alpha_i \}$. By assumption, $\alpha_{k_1}\notin S$, so it must be that $\alpha_{k_1}=\alpha_i$. In this case, there exists some $\alpha_{y_1}\in N$, namely $\alpha_{k_2}$, where $\mathit{val}_{S_{3c}}(\alpha_i)=\mathit{val}_{\alpha_i}(\alpha_{y_1})=1$ and $u_{\alpha_{y_1}}(M)=0$. It follows that, in the algorithm, the condition for Case 4 is true. This is a contradiction. It must be that $\alpha_{k_1}\in S$.

From above, $u_{\alpha_{k_1}}(M')=1$. Since $u_{S_{3d-2}}(M')=2$ for any $1\leq d\leq c$ it follows that $\alpha_{k_1}\neq S_{3d-2}$ for any $1\leq d\leq c$. It follows that either $\alpha_{k_1}=S_{3d-1}$ or $\alpha_{k_1}=S_{3d}$ for some $1\leq d\leq c$.

Suppose that $\alpha_{k_1}=S_{3d-1}$ for some $1\leq d\leq c$. Recall the $d\textsuperscript{th}$ iteration of the \texttt{while} loop. There exists some $\alpha_{z_1} \in N\setminus \{ \alpha_i \}$, namely $\alpha_{k_2}$, where $\mathit{val}_{\alpha_{z_1}}(S_{3d-1})=1$ and $u_{\alpha_{z_1}}(M)=0$. It follows that in the algorithm, after the $d\textsuperscript{th}$ iteration of the \texttt{while} loop, $\alpha_{z_1}\neq \bot$, hence $d=c$ and either the condition for Case 1 was true or the condition for Case 3 was true. Both cases are a contradiction. It follows that no such $\alpha_{k_1}\neq S_{3d-1}$ for any $1\leq d\leq c$.

Finally, suppose that $\alpha_{k_1}=S_{3d}$ for some $1\leq d \leq c$. From above, $S_{3c}\neq \alpha_{k_1}$ so $d < c$. Recall the $d\textsuperscript{th}$ iteration of the \texttt{while} loop. Since $\mathit{val}_{S_{3d}}(\alpha_{k_2})=1$, $u_{\alpha_{k_2}}(M)=0$, and $\alpha_{k_2}\neq \alpha_i$, since $u_{\alpha_i}(M')=1$, it follows that there exists some $\alpha_{z_2} \in N\setminus \{ \alpha_i \}$, namely $\alpha_{k_2}$, where $\mathit{val}_{\alpha_{z_2}}(S_{3d})=1$ and $u_{\alpha_{z_2}}(M)=0$. There are two possibilities. The first is that $\alpha_{k_2}\neq \alpha_{j_2}$. The second is that $\alpha_{k_2}=\alpha_{j_2}$. Suppose first that $\alpha_{k_2}\neq \alpha_{j_2}$. There exists some $\alpha_{z_2} \in N\setminus \{ \alpha_i, \alpha_{j_2} \}$, namely $\alpha_{k_2}$, where $\mathit{val}_{\alpha_{z_2}}(S_{3d})=1$ and $u_{\alpha_{z_2}}(M)=0$. It follows that in the algorithm $\alpha_{z_2}\neq \bot$, the break condition held, and the \texttt{while} loop terminated after this iteration. This is a contradiction since $d < c$. Suppose then that $\alpha_{k_2}=\alpha_{j_2}$. It follows that there exists some $b=d$ where $1\leq b < c$ and $\mathit{val}_{S_{3b}}(\alpha_{j_2})=\mathit{val}_{S_{3c}}(S_{3b})=1$. It follows that, after the final iteration of the \texttt{while} loop, the condition for Case 6 is true, which is a contradiction. In summary, we have shown that $\alpha_{k_1}\neq S_{3d}$ for every $1\leq d\leq c$.

To recap, we supposed that some $\alpha_g$ exists where $u_{\alpha_{g}}(M') < u_{\alpha_{g}}(M)$ and $\alpha_g$ belongs to a triple that blocks $M'$. We first showed that $\alpha_g\in S$. We then showed that $\alpha_g \neq S_{3d-2}$ and $\alpha_g \neq S_{3d-1}$ for any $1\leq d\leq c$. We then showed that $\alpha_g \neq S_{3d}$ for any $1\leq d < c$. We concluded that $\alpha_g=S_{3c}$. We supposed that some $\alpha_{k_1}, \alpha_{k_2}\in N$ exist where $\{ S_{3c}, \alpha_{k_1}, \alpha_{k_2} \}$ blocks $M'$. We then showed that $\alpha_{k_1}\in S$. Finally we showed that $\alpha_{k_1}\neq S_{3d-2}$ for any $1\leq d\leq c$, that $\alpha_{k_1}\neq S_{3d-1}$ for any $1\leq d\leq c$, and that $\alpha_{k_1}\neq S_{3d}$ for any $1\leq d < c$. This contradicts $\alpha_{k_1}\in S$ and it follows that no such $\alpha_{k_1}$ exists where $\{ S_{3c}, \alpha_{k_1}, \alpha_{k_2} \}$ blocks $M'$. This shows that $S_{3c}$ does not belong to a triple that blocks $M'$ and hence $\alpha_g \neq S_{3c}$.
\end{proof}

\begin{lem}
\label{lem:algoreturnsstablematching_notimecomplex}
Algorithm~{\normalfont \texttt{repair}} returns a stable $P$\nobreakdash-matching $M'$.
\end{lem}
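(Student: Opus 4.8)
The plan is to read off stability from the three case lemmas just proved, together with Lemma~\ref{lem:3dsrsasbinblockerimprovement} and the repairability of the input $P$\nobreakdash-matching $M$. First I would record, via Lemma~\ref{lem:algreturnspmatching}, that $M'$ is a $P$\nobreakdash-matching, so that only stability remains to be established. I would also note that by Lemma~\ref{lem:algalwaysterminates} the \texttt{while} loop terminates, and that its break condition together with the subsequent \texttt{if}/\texttt{else} cascade is exhaustive, so the returned $M'$ is produced by exactly one of the constructions labelled Case~1 through Case~7.

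The argument itself I would run by contradiction: assume some triple $t$ blocks $M'$, and split according to whether $t$ also blocks $M$. If $t$ does \emph{not} block $M$, then Lemma~\ref{lem:3dsrsasbinblockerimprovement} forces $t$ to contain an agent $\alpha_g$ with $u_{\alpha_g}(M') < u_{\alpha_g}(M)$. But whichever case produced $M'$, the relevant one of Lemmas~\ref{lem:algocases1and3noalphapexists}, \ref{lem:algocases245and6noalphapexists}, and~\ref{lem:algocase7noalphapexists} asserts precisely that no such $\alpha_g$ can lie in a triple that blocks $M'$. This is the desired contradiction, so no triple can block $M'$ without also blocking $M$.

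The only remaining possibility is that $t$ blocks both $M$ and $M'$, and this is the step not already discharged by the preceding lemmas; it is also where I expect the main obstacle, since it requires the \emph{full strength} of repairability rather than the single blocking triple chosen in the first line of the algorithm. Here every triple blocking $M$ passes through the distinguished agent $\alpha_i$ and has the form $\{\alpha_i, \beta, \gamma\}$ with $\mathit{val}_{\alpha_i}(\beta)=\mathit{val}_{\beta}(\gamma)=1$ and $u_{\gamma}(M)=0$. Triangle-freeness forces $\mathit{val}_{\alpha_i}(\gamma)=0$, for otherwise $\{\alpha_i, \beta, \gamma\}$ would be a triangle; hence $u_{\alpha_i}(\{\beta,\gamma\})=1$. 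For $t$ to block $M'$ the agent $\alpha_i$ must strictly prefer $t$ to its partners in $M'$, which would require $u_{\alpha_i}(M')=0$. However, inspection of all seven constructions shows that $\alpha_i$ is always placed in a triple containing a partner it values at $1$ (namely $\alpha_{j_1}$ in Cases~1--3 and 5--7, and $\alpha_{y_1}$ in Case~4), so $u_{\alpha_i}(M')\geq 1$, a contradiction. Concluding that no triple blocks $M'$, and combining this with Lemma~\ref{lem:algreturnspmatching}, gives that $M'$ is a stable $P$\nobreakdash-matching, as required.
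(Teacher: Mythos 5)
Your proof is correct and takes essentially the same approach as the paper's: both establish that $M'$ is a $P$\nobreakdash-matching via Lemmas~\ref{lem:algalwaysterminates} and~\ref{lem:algreturnspmatching}, use Lemma~\ref{lem:3dsrsasbinblockerimprovement} together with the case Lemmas~\ref{lem:algocases1and3noalphapexists}, \ref{lem:algocases245and6noalphapexists}, and~\ref{lem:algocase7noalphapexists} to reduce to a triple that blocks both $M$ and $M'$, and then kill that triple using repairability and triangle-freeness. The only (valid) variation is your finishing move: you bound $\alpha_i$'s gain by $1$ uniformly across all seven cases (triangle-freeness gives $\mathit{val}_{\alpha_i}(\gamma)=0$ in the repairable form, and every construction yields $u_{\alpha_i}(M')\geq 1$), whereas the paper treats Case~4 separately and instead derives that both non-$\alpha_i$ agents of the blocking triple would have utility $0$ in $M$, contradicting the repairable form --- the same ingredients arranged in a slightly different order.
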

\begin{proof}
By Lemma~\ref{lem:algalwaysterminates} the algorithm terminates after at most $\lfloor {(n-2)}/{3} \rfloor$ iterations of the main loop. By Lemma~\ref{lem:algreturnspmatching} the algorithm returns a $P$\nobreakdash-matching.

Suppose $M'$ is a $P$\nobreakdash-matching returned by the algorithm. By Lemmas~\ref{lem:algocases1and3noalphapexists}, \ref{lem:algocases245and6noalphapexists}, and~\ref{lem:algocase7noalphapexists}, in Cases 1, 2, 3, 4, 5, 6, and 7, no $\alpha_g \in N$ exists where $u_{\alpha_{g}}(M') < u_{\alpha_{g}}(M)$ and $\alpha_g$ belongs to a triple that blocks $M'$.

Suppose for a contradiction that $M'$ is not stable and some triple $\{ \alpha_{k_1}, \alpha_{k_2},\allowbreak \alpha_{k_3} \}$ blocks $M'$. It follows that $u_{\alpha_{k_r}}(M') \geq u_{\alpha_{k_r}}(M)$ for $1 \leq r \leq 3$, otherwise $\alpha_g$ exists as described above. By Lemma~\ref{lem:3dsrsasbinblockerimprovement}, it follows that $\{ \alpha_{k_1}, \alpha_{k_2}, \alpha_{k_3} \}$ also blocks $M$. By assumption, any triple that blocks $M$ contains $\alpha_i$ so assume without loss of generality that $\alpha_{k_1}=\alpha_i$.

In Case 4, $u_{\alpha_i}(M')=2$ and hence $\alpha_i$ does not belong to a triple that blocks $M'$. This is a contradiction. It follows that no triple blocks $M'$ and that $M'$ is stable in Case 4.

In Cases 1, 2, 3, 5, 6, and 7, $u_{\alpha_i}(M')=1$. It follows that $u_{\alpha_i}(\{ \alpha_{k_2}, \alpha_{k_3} \})=2$ so $\mathit{val}_{\alpha_i}(\alpha_{k_2})=\mathit{val}_{\alpha_i}(\alpha_{k_3})=1$. Since $(N, V)$ is triangle-free, it must be that $\mathit{val}_{\alpha_{k_2}}(\alpha_{k_3})=0$ and hence $u_{\alpha_{k_2}}(\{ \alpha_{i}, \alpha_{k_3} \})=u_{\alpha_{k_3}}(\{ \alpha_{i}, \alpha_{k_2} \})=1$. Since $\{ \alpha_{i}, \alpha_{k_2}, \alpha_{k_3} \}$ blocks $M$, It follows that $u_{\alpha_{k_2}}(M)=u_{\alpha_{k_3}}(M)=0$, and thus that $\{ \alpha_i, \alpha_{k_1}, \alpha_{k_2} \}$ blocks $M$. This contradicts our original assumption that any triple that blocks $M$ contains $\{ \alpha_i, \alpha_{j_1}, \alpha_{j_2}\}$ where $\alpha_{j_1}, \alpha_{j_2}\in N$, $u_{\alpha_{j_1}}(M)=1$, $u_{\alpha_{j_2}}(M)=0$, and $\mathit{val}_{\alpha_i}(\alpha_{j_1})=\mathit{val}_{\alpha_i}(\alpha_{j_2})=1$. It follows that no triple blocks $M'$.
\end{proof}

\begin{lem}
\label{lem:3dsrsasbin_almosttherealgo_runningtime}
Algorithm~{\normalfont \texttt{repair}} has running time $O(|N|^2)$.
\end{lem}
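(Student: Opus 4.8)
The plan is to bound the two parts of Algorithm~\texttt{repair} separately. By Lemma~\ref{lem:algalwaysterminates} the \texttt{while} loop performs at most $\lfloor(|N|-2)/3\rfloor = O(|N|)$ iterations, so it suffices to show that each iteration runs in $O(|N|)$ time and that the subsequent construction of $M'$ also runs in $O(|N|)$ time. First I would observe that the matching $M$ is never modified inside the \texttt{while} loop --- only the list $S$, the counters $c$ and $b$, and the labelled agents change --- so every quantity of the form $u_{\alpha}(M)$ is constant throughout. Hence I would precompute $u_{\alpha}(M)$ for all $\alpha\in N$ once at the start; since each agent has at most two partners in $M$, this takes $O(|N|)$ time, after which any utility lookup is $O(1)$. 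Assuming each valuation $\mathit{val}_{\alpha}(\beta)$ is available in $O(1)$ time (e.g.\ from the adjacency matrix of the underlying graph), the searches for $\alpha_{z_1}$, $\alpha_{z_2}$, $\alpha_{y_1}$, and $\alpha_{y_2}$ each reduce to a single scan over $N$ testing $O(1)$ conditions per agent, and the search for $b$ is a scan over the indices $1\le b<c = O(|N|)$; all of these are $O(|N|)$ per iteration.

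The hard part will be the search for $\alpha_{w_1}$, because its defining condition contains a nested existential: for a candidate $\alpha_{w_1}$ one must decide whether there is some $\alpha_{z_3}\in N\setminus\{\alpha_i\}$ with $\mathit{val}_{\alpha_{w_1}}(\alpha_{z_3})=1$ and $u_{\alpha_{z_3}}(M)=0$. Evaluated naively this costs $O(|N|)$ per candidate and $O(|N|^2)$ per iteration, which would give an $O(|N|^3)$ bound overall. To avoid this I would again exploit that $M$ and $\alpha_i$ are fixed throughout the loop, and precompute once, for every agent $\alpha$, the quantity $c_0(\alpha) = |\{\beta\in N : \mathit{val}_{\alpha}(\beta)=1 \text{ and } u_{\beta}(M)=0\}|$. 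This precomputation is a single pass over all ordered pairs and costs $O(|N|^2)$. Since $u_{\alpha_i}(M)=0$ by the definition of a repairable $P$-matching, the required $\alpha_{z_3}$ exists for a given $\alpha_{w_1}$ exactly when $c_0(\alpha_{w_1}) - [\mathit{val}_{\alpha_{w_1}}(\alpha_i)=1] \ge 1$, which is an $O(1)$ test. Maintaining a boolean membership array for $S$ (updated in $O(1)$ per agent appended, hence $O(|N|)$ in total) lets me also test $\alpha_{w_1}\notin S$ in $O(1)$. Consequently the $\alpha_{w_1}$ search becomes a single $O(|N|)$ scan, each iteration of the \texttt{while} loop is $O(|N|)$, and the whole loop is $O(|N|^2)$.

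Finally I would bound the construction of $M'$ in the second part of the algorithm. The set $M_{\mathrm{S}}$ consists of $O(c)=O(|N|)$ triples built by iterating over $S$ (and, in Cases~3 and~6, by one further $O(|N|)$ scan to locate $\alpha_{z_4}$ or $\alpha_{z_5}$), and the remaining triples $\{r\in M : r\cap S=\varnothing\}$ can be collected in $O(|N|)$ time by testing each of the $O(|N|)$ triples of $M$ against the membership array for $S$. Thus the construction is $O(|N|)$. Adding the $O(|N|^2)$ precomputation, the $O(|N|^2)$ loop, and the $O(|N|)$ construction yields the claimed $O(|N|^2)$ running time. I expect the only real subtlety to be the argument that the nested $\alpha_{z_3}$-test can be answered in $O(1)$ after the $O(|N|^2)$ precomputation; the rest is a routine accounting of scans over $N$.
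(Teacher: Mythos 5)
Your proposal is correct and follows essentially the same route as the paper's proof: bound the number of iterations via Lemma~\ref{lem:algalwaysterminates}, spend $O(|N|^2)$ once on per-agent lookup structures (the utilities $u_{\alpha}(M)$, which are indeed invariant during the loop, plus a structure answering the nested $\alpha_{z_3}$ existential in $O(1)$ --- the paper's table $L_2$ stores a witness agent where you store the count $c_0(\alpha)$ corrected by the indicator for $\alpha_i$, which is equally valid since $u_{\alpha_i}(M)=0$), maintain constant-time membership for $S$, and conclude with $O(|N|)$ per iteration and an $O(|N|)$ construction of $M'$. The one step you leave implicit --- the initial identification of $\alpha_{j_1},\alpha_{j_2},\alpha_{j_3},\alpha_{j_4}$ before the loop --- is harmless, since even a naive scan stays within the $O(|N|^2)$ budget (the paper dispatches it in $O(|N|)$ using its tables).
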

\begin{proof}
The pseudocode above outlines the algorithm at a high level. To analyse the worst-case time complexity we describe a suitable system of data structures, which we combine with a preprocessing step. Relying on the unit cost of standard operations in these data structures, we analyse the worst case time complexity of Algorithm~{\normalfont \texttt{repair}} in terms of $|N|$.

Suppose that $(N, V)$ is stored such that, for a given $\alpha_{p}\in N$, the algorithm can iterate through the set $\{ \alpha_{q} \in N : \mathit{val}_{\alpha_{p}}(\alpha_{q})=1 \}$ in $O(|N|)$ time. Suppose that $M$ is stored such that the algorithm can iterate through each triple in $O(|N|)$ time. For example, $(N, V)$ could be stored graphically using adjacency lists. It follows that, given three agents $\alpha_{h_1}, \alpha_{h_2}, \alpha_{h_3}\in N$ the algorithm can compute $u_{\alpha_{h_1}}(\{ \alpha_{h_2}, \alpha_{h_3} \}), u_{\alpha_{h_2}}(\{ \alpha_{h_1}, \alpha_{h_3} \}),$ and $u_{\alpha_{h_3}}(\{ \alpha_{h_1}, \alpha_{h_2} \})$ in $O(|N|)$ time.

The preprocessing step involves constructing two lookup tables. Each lookup table contains exactly $|N|$ entries and is indexed by some $\alpha_{p}\in N$. Each entry in each table contains some integer less than or equal to $|N|$. It follows that finding an entry given its index requires constant time.  
Each entry in $L_1$ will contain either zero, one, or two. For each agent $\alpha_{p}\in N$, the algorithm constructs $L_1$ so that the ${p}\textsuperscript{th}$ entry contains $u_{\alpha_p}(M)$. By assumption, the algorithm can compute $u_{\alpha_p}(M)$ for any $\alpha_p\in N$ in $O(|N|)$ time. It follows that $L_1$ can be constructed in $O(|N|^2)$ time by iterating through $M$ and computing $u_{\alpha_{h_1}}(M), u_{\alpha_{h_2}}(M), u_{\alpha_{h_3}}(M)$ for each $\{ \alpha_{h_1}, \alpha_{h_2}, \alpha_{h_3} \} \in M$. Since $|M|=O(|N|)$ in total this step takes $O(|N|^2)$ time. It follows that we can use $L_1$ to look up $u_{\alpha_{p}}(M)$ for any $\alpha_{p}\in N$ in constant time. Each entry in $L_2$ contains either the label of some agent or $\bot$. Construct $L_2$ such that for any $\alpha_p\in N$, the ${p}\textsuperscript{th}$ entry either contains some $\alpha_{q}\in N \setminus \{ \alpha_i \}$ where $\mathit{val}_{\alpha_{p}}(\alpha_{q})=1$ and $u_{\alpha_{q}}(M)=0$ if it exists and otherwise $\bot$. The algorithm will use $L_2$ primarily in the body of the loop to identify $\alpha_{w_1}$, if it exists, using $S_{3c}$. The lookup table $L_2$ can be constructed in $O(|N|^2)$ time, as follows. For each $\alpha_{p}\in N$, look up $u_{\alpha_{p}}(M)$ in $L_1$. If $u_{\alpha_{p}}(M)=0$ then consider each $\alpha_{q}\in N$ where $\mathit{val}_{\alpha_{p}}(\alpha_{q})=1$ and $\alpha_{q}\neq \alpha_i$. If the $q\textsuperscript{th}$ entry of $L_2$ is currently $\bot$ then set that entry to $\alpha_{p}$.

The list $S$ can be stored using a linked list or any data structure in which a new element can be appended to the end of $S$ in constant time and the iteration through $S$ takes $O(|N|)$ time. The list $S$ will be supplemented with a lookup table $L_S$. For any $\alpha_p \in N$, the table $L_S$ can be used to test membership in $S$ and look up the position of any agent in $S$ in constant time. This is possible because the only modification that the algorithm makes to $S$ is appending a single agent to the end of $S$ in each iteration. As noted in Lemma~\ref{lem:algalwaysterminates}, any agent is added to $S$ at most than once. Like the tables $L_1$ and $L_2$, the table $L_S$ contains exactly $|N|$ entries and is indexed by each $\alpha_{p}\in N$. Each entry in $L_S$ contains some integer position less than or equal to $|S|$. Before the algorithm appends an element $\alpha_p\in N$ to the end of $S$, it can maintain $L_S$ in constant time by setting the $p\textsuperscript{th}$ entry to $|S|$.

The first step in the algorithm involves identifying agents $\alpha_{j_1}, \alpha_{j_2}$ where $\{ \alpha_i, \alpha_{j_1}, \alpha_{j_2}\}$ blocks $M$ in $(N, V)$ and $u_{\alpha_{j_1}}(M)=1$ as follows. Given any $\alpha_{j_1}, \alpha_{j_2}\in N$ where $u_{\alpha_{j_1}}(M)=1$, $u_{\alpha_{j_2}}(M)=0$ and $\mathit{val}_{\alpha_i}(\alpha_{j_1})=\mathit{val}_{\alpha_i}(\alpha_{j_2})=1$, the triple $\{ \alpha_i, \alpha_{j_1}, \alpha_{j_2}\}$ blocks $M$ in $(N, V)$. It follows that some $\alpha_{j_1}, \alpha_{j_2}\in N$ can be found in $O(|N|)$ time, as follows. Consider each agent $\alpha_{p}$ for which $\mathit{val}_{\alpha_i}(\alpha_{p})=1$, and look up $u_{\alpha_{p}}(M)$ in $L_1$. If $u_{\alpha_{p}}(M)=1$ then look up the ${p}\textsuperscript{th}$ entry of $L_2$. By the construction of $L_2$, if this entry is not equal to $\bot$ then it contains some $\alpha_{q}\in N\setminus \{ \alpha_i \}$ where $\mathit{val}_{\alpha_{p}}(\alpha_{q})=1$ and $u_{\alpha_{q}}(M)=0$. In this case let $\alpha_{j_1}=\alpha_{p}$ and $\alpha_{j_2}=\alpha_{q}$. Since $M$ is not stable in $(N, V)$, by the condition of $M$ there must exist some such $\alpha_{j_1}, \alpha_{j_2}$.

 The second step in the algorithm involves identifying agents $\alpha_{j_3}, \alpha_{j_4}$ where $\alpha_{j_3}, \alpha_{j_4} \in M(\alpha_{j_1}) \setminus \{ \alpha_{j_1} \}$ and $u_{\alpha_{j_3}}(M)=2$. This can be done in $O(|N|)$ time, as follows. Consider each triple in $M$ until $M(\alpha_{j_1})$ is found. This takes $O(|N|)$ time. Use $L_1$ to identify $\alpha_{j_3}$ and $\alpha_{j_4}$.
 
 The initialisation of $S, c, \alpha_{z_1}, \alpha_{z_2}, \alpha_{y_1}, \alpha_{y_2}$ and $\alpha_{w_1}$ in the algorithm takes constant time.
 
 Consider the \texttt{while} loop. By Lemma~\ref{lem:algalwaysterminates}, there are at most $\lfloor (|N|-2) \mathbin{/} 3 \rfloor = O(|N|)$ iterations. Setting up the lookup tables allows us to ensure that each iteration takes $O(|N|)$ time. It follows that the loop terminates in $O(|N|^2)$ time.
 
 To identify $\alpha_{z_1}$ as described, first identify $S_{3c-1}$, in constant time. Consider each $\alpha_{p}\in N$ for which $\mathit{val}_{S_{3c-1}}(\alpha_{p})=1$. This takes $O(|N|)$ time. For each such $\alpha_{p}$, if $\alpha_{p}=\alpha_i$ then continue. If $\alpha_{p}\neq \alpha_i$ then look up $u_{\alpha_{p}}(M)$ in $L_1$. If $u_{\alpha_{p}}(M)=0$ then set $\alpha_{z_1}=\alpha_{p}$. If no such $\alpha_{p}$ is found then no such $\alpha_{z_1}$ exists so set $\alpha_{z_1}=\bot$.
 
 Similarly, to identify some $\alpha_{z_2}$ as described, first identify $S_{3c}$. Consider each $\alpha_{l_1}\in N$ for which $\mathit{val}_{S_{3c}}(\alpha_{l_1})=1$. This takes $O(|N|)$ time. For each such $\alpha_{l_1}$, if $\alpha_{l_1}=\alpha_i$ or $\alpha_{l_1}=\alpha_{j_2}$ then continue. If not, look up $u_{\alpha_{l_1}}(M)$ in $L_1$. If $u_{\alpha_{l_1}}(M)=0$ then set $\alpha_{z_2}=\alpha_{l_1}$. If no such $\alpha_{p}$ is found then no such $\alpha_{z_2}$ exists so set $\alpha_{z_2}=\bot$.
 
 To identify $\alpha_{y_1}$ as described, test if $\mathit{val}_{S_{3c}}(\alpha_{i})=1$. This takes $O(|N|)$ time. If $\mathit{val}_{S_{3c}}(\alpha_{i})=0$ then no such $\alpha_{y_1}$ exists. If $\mathit{val}_{S_{3c}}(\alpha_{i})=1$ then consider each $\alpha_{p}\in N$ for which $\mathit{val}_{\alpha_{i}}(\alpha_{p})=1$. Note that $\alpha_p\neq \alpha_{j_2}$ since otherwise $\mathit{val}_{\alpha_{j_2}}(\alpha_{i})=1$, from which it follows that $\{ \alpha_i, \alpha_{j_1}, \alpha_{j_2} \}$ is a triangle in $(N, V)$. Look up $u_{\alpha_{p}}(M)$ in $L_1$. If $u_{\alpha_{p}}(M)=0$ then set $\alpha_{y_1}=\alpha_{p}$. If no such $\alpha_{p}$ where $u_{\alpha_{l_1}}(M)=0$ is found then no such $\alpha_{y_1}$ exists so set $\alpha_{y_1}=\bot$. The identification of $\alpha_{y_2}$, if it exists, can be performed similarly in $O(|N|)$ time.
 
 To compute $1 \leq b < c$ as described, if there exists some such $S_{3b}$ where $\mathit{val}_{S_{3b}}(\alpha_{j_2})=\mathit{val}_{S_{3c}}(S_{3b})=1$, consider each $\alpha_{p}\in N$ for which $\mathit{val}_{S_{3c}}(\alpha_{p})=1$. This takes $O(|N|)$ time. For each such $\alpha_{p}$, determine its position $b'$ in $S$ if it belongs to $S$. If $\alpha_{p}$ belongs to $S$ and $b'$ is divisible by three and less than $c$ then set $b=b'$. Otherwise, no such $S_{3b}$ exists so set $b=0$.
 
To identify some $\alpha_{w_1}$ as described, first identify $S_{3c}$ in constant time. Consider each $\alpha_{p}\in N$ for which $\mathit{val}_{S_{3c}}(\alpha_{p})=1$. This takes $O(|N|)$ time. For each such $\alpha_{p}$, test if $\alpha_{p}$ belongs to $S$ using $L_S$. If so, then continue. If not, then look up the $p\textsuperscript{th}$ entry in $L_2$. If this entry is $\bot$ then continue. If not, then suppose this entry is $\alpha_{q}$. By the construction of $L_2$, it follows that $\alpha_{q}\in N \setminus \{ \alpha_i \}$, $\mathit{val}_{\alpha_{p}}(\alpha_{q})=1$ and $u_{\alpha_{q}}(M)=0$. Accordingly, set $\alpha_{w_1}$ to $\alpha_{p}$ since the algorithm has identified $\alpha_{z_3}=\alpha_{q}\in N \setminus \{ \alpha_i \}$ for which $\mathit{val}_{\alpha_{w_1}}(\alpha_{z_3})=1$ and $u_{\alpha_{z_3}}(M)=0$.

Evaluating the \texttt{break} condition in the loop can be performed in constant time. If the break condition is true then $\alpha_{w_1}$ exists. The identification of $\alpha_{w_2}$ and $\alpha_{w_3}$ can be accomplished in $O(|N|)$ time, using the same process as for $\alpha_{j_3}$ and $\alpha_{j_4}$. From above, adding three elements to $S$ requires constant time.

Now consider the final \texttt{if/else} statement and the seven possible constructions of $M'$. In each of the seven cases, $M'$ contains each triple in $\{ r \in M | r \cap S = \varnothing \}$. This set can be constructed in $O(|N|)$ time by considering each triple in $M$ and the three corresponding entries in $L_S$. In Cases 3 and 6, the agents $\alpha_{z_4}$ and $\alpha_{z_5}$ can each be identified in $O(|N|)$ time, using a similar process as for $\alpha_{z_1}$ in the loop body as described above. The remaining triples in $M'$ can be constructed after one scan of $S$ in $O(|N|)$ time.
\end{proof}

\begin{restatable}{lem}{algoreturnsstablematching}
\label{lem:algoreturnsstablematching}
Algorithm~{\normalfont \texttt{repair}} returns a stable $P$\nobreakdash-matching in $O(|N|^2)$ time.
\end{restatable}
\begin{proof}
By Lemmas~\ref{lem:algoreturnsstablematching_notimecomplex} and~\ref{lem:3dsrsasbin_almosttherealgo_runningtime}.
\end{proof}

\else

\begin{restatable}{lem}{algoreturnsstablematching}
\label{lem:algoreturnsstablematching}
Algorithm~{\normalfont \texttt{repair}} returns a stable $P$\nobreakdash-matching in $O(|N|^2)$ time.
\end{restatable}

\fi

\subsection{Finding a stable \texorpdfstring{$P$}{P}-matching in a triangle-free instance}
\label{sec:3dsrsasbin_trianglefree_section}
In the previous section we supposed that $(N, V)$ was a triangle-free instance of 3D-SR-SAS-BIN and considered a $P$\nobreakdash-matching $M$ that was repairable (Section~\ref{sec:3dsrsasbin_specialcase_algorithmsection}). We presented Algorithm~\texttt{repair}, which can be used to construct a stable $P$\nobreakdash-matching $M'$ in $O(|N|^2)$ time (Lemma~\ref{lem:algoreturnsstablematching}). In this section we present Algorithm~\texttt{findStableInTriangleFree} (Algorithm~\ref{alg:3dsrsasbin_find_stable_pmatching_in_triangle_free_instance}), which, given a triangle-free instance $(N, V)$, constructs a $P$\nobreakdash-matching $M'$ that is stable in $(N, V)$. Algorithm~\texttt{findStableInTriangleFree} is recursive. The algorithm first removes an arbitrary agent $\alpha_i$ to construct a smaller instance $(N', V')$. It then uses a rec ursive call to construct a $P$\nobreakdash-matching $M$ that is stable in $(N', V')$. By Lemma~\ref{lem:3dsrsasbinblockerimprovement}, any triple that blocks $M$ in the larger instance $(N, V)$ must contain $\alpha_i$ or block $M$ in $(N', V')$. There are then three cases involving types of triple that block $M$ in $(N', V')$. In two out of three cases, $M'$ can be constructed by adding to $M$ a new triple containing $\alpha_i$ and two players unmatched in $M$. In the third case, $M$ is not stable in $(N, V)$ but, by design, is repairable (see Section~\ref{sec:3dsrsasbin_specialcase_algorithmsection}). It follows that Algorithm~\texttt{repair} can be used to construct a $P$\nobreakdash-matching that is stable in $(N, V)$ (Lemma~\ref{lem:algoreturnsstablematching}). It is relatively straightforward to show that the running time of Algorithm~\texttt{findStableInTriangleFree} is $O(|N|^3)$.

\newcommand\contentsOfFindStableInTriangleFreeAlgorithm{
\textbf{Input:} an instance $(N,V)$ of 3D-SR-SAS-BIN\\
\textbf{Output:} stable $P$-matching $M'$ in $(N,V)$
\smallskip 
\begin{algorithmic}
\caption{Algorithm \texttt{findStableInTriangleFree} \label{alg:3dsrsasbin_find_stable_pmatching_in_triangle_free_instance}} 

\If{$|N|=2$} \Return $\varnothing$
\EndIf
\smallskip

\State $\alpha_i \gets$ an arbitrary agent in $N$
\State $(N', V') \gets (N \setminus \{ \alpha_i \}, V \setminus \{ \mathit{val}_{\alpha_i} \} )$
\State $M \gets \texttt{findStableInTriangleFree}((N', V'))$

\smallskip

\If{some $\alpha_{l_1}, \alpha_{l_2}\in N$ exist where $u_{\alpha_{l_1}}(M)=u_{\alpha_{l_2}}(M)=0$
\State and $\mathit{val}_{\alpha_i}(\alpha_{l_1})=\mathit{val}_{\alpha_i}(\alpha_{l_2})=1$}
\smallskip

    \State \Return $M \cup \{ \{ \alpha_{i}, \alpha_{l_1}, \alpha_{l_2} \} \}$
    
\ElsIf{some $\alpha_{l_3}, \alpha_{l_4}\in N$ exist where $u_{\alpha_{l_3}}(M)=u_{\alpha_{l_4}}(M)=0$
\State and $\mathit{val}_{\alpha_i}(\alpha_{l_3})=\mathit{val}_{\alpha_{l_3}}(\alpha_{l_4})=1$}
    
    \smallskip

    \State \Return $M \cup \{ \{ \alpha_{i}, \alpha_{l_3}, \alpha_{l_4} \} \}$
    
\ElsIf{some $\alpha_{l_5}, \alpha_{l_6}\in N$ exist where $u_{\alpha_{l_5}}(M)=1$, $u_{\alpha_{l_6}}(M)=0$ 
\State and $\mathit{val}_{\alpha_i}(\alpha_{l_5})=\mathit{val}_{\alpha_{l_5}}(\alpha_{l_6})=1$}

    \smallskip
    \LineComment{$M$ is repairable in $(N, V)$ (see Section~\ref{sec:3dsrsasbin_specialcase_algorithmsection}). Note that $\alpha_{j_1}=\alpha_{l_5}$ and $\alpha_{j_2}=\alpha_{l_6}$.}
    \State \Return $\texttt{repair}((N, V), M, \alpha_i)$
    
\Else

    \State \Return $M$

\EndIf
\State \textbf{end if}


\medskip
\end{algorithmic}
}

\ifdefined \fullversion
\begin{algorithm}[b!]
\contentsOfFindStableInTriangleFreeAlgorithm
\end{algorithm}
\else
\begin{algorithm}[t!]
\contentsOfFindStableInTriangleFreeAlgorithm
\end{algorithm}
\fi

\ifdefined \fullversion

\begin{lem}
\label{lem:3dsrsasbin_algfindsstablepmatching_notimecomplex}
Given a triangle-free instance $(N, V)$, Algorithm~{\normalfont \texttt{findStableInTriangleFree}} returns a stable $P$\nobreakdash-matching in $(N, V)$.
\end{lem}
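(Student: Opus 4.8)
The plan is to prove the statement by induction on $|N|$, following the recursive structure of Algorithm~\texttt{findStableInTriangleFree}. The base case is $|N|=2$, where the algorithm returns $\varnothing$; since no triple can be formed, $\varnothing$ is vacuously a stable $P$\nobreakdash-matching. For the inductive step, note first that removing $\alpha_i$ yields a smaller \emph{triangle-free} instance $(N',V')$ (any triangle in $(N',V')$ would also be a triangle in $(N,V)$), so the inductive hypothesis applies and the recursive call returns a stable $P$\nobreakdash-matching $M$ in $(N',V')$. Since $\alpha_i \notin \bigcup M$ we have $u_{\alpha_i}(M)=0$, and $M$ is also a $P$\nobreakdash-matching in $(N,V)$. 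As observed before the algorithm, Lemma~\ref{lem:3dsrsasbinblockerimprovement} implies that any triple blocking $M$ in $(N,V)$ must contain $\alpha_i$, because a blocking triple contained in $N'$ would block $M$ in $(N',V')$, contradicting its stability.

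The central step is a short classification of the blocking triples of $M$ in $(N,V)$, showing that the existence of any such triple forces exactly one of the three guard conditions in the algorithm. Let $\{\alpha_i,\alpha_a,\alpha_b\}$ block $M$. Since $u_{\alpha_i}(M)=0$, at least one of $\alpha_a,\alpha_b$ is a neighbour of $\alpha_i$; say $\mathit{val}_{\alpha_i}(\alpha_a)=1$, so by symmetry $u_{\alpha_a}(\{\alpha_i,\alpha_b\}) = 1+\mathit{val}_{\alpha_a}(\alpha_b) \le 2$, forcing $u_{\alpha_a}(M)\in\{0,1\}$. Triangle-freeness forbids $\{\alpha_i,\alpha_a,\alpha_b\}$ from being a triangle, which eliminates the degenerate utility combinations. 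A brief enumeration over $u_{\alpha_a}(M),u_{\alpha_b}(M)$ then yields three possibilities: if both $\alpha_a,\alpha_b$ are adjacent to $\alpha_i$ then both have $M$-utility $0$ (the first \texttt{if}); if $u_{\alpha_a}(M)=0$ with $\alpha_b$ adjacent only to $\alpha_a$ and of utility $0$ (the first \texttt{elseif}); and if $u_{\alpha_a}(M)=1$ then necessarily $\mathit{val}_{\alpha_a}(\alpha_b)=1$, $\mathit{val}_{\alpha_i}(\alpha_b)=0$, and $u_{\alpha_b}(M)=0$ (the second \texttt{elseif}). This immediately disposes of the \texttt{else} branch: if no guard holds then $M$ has no blocking triple in $(N,V)$, so returning $M$ is correct, and $M$ is a $P$\nobreakdash-matching since $\alpha_i$ is unmatched.

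For the first two branches, the added triple raises $\alpha_i$ and its two new partners from utility $0$ while leaving all other agents' triples unchanged, so $u_{\alpha_p}(M')\ge u_{\alpha_p}(M)$ for every $\alpha_p\in N$, and a direct check of positivity confirms $M'$ is a $P$\nobreakdash-matching. By Lemma~\ref{lem:3dsrsasbinblockerimprovement} any triple blocking $M'$ must then also block $M$ and hence contain $\alpha_i$. In the first branch $u_{\alpha_i}(M')=2$ is maximal, so no such triple exists. In the second branch $u_{\alpha_i}(M')=1$, and any $\alpha_i$-blocking triple of $M'$ would need two utility-$0$ neighbours of $\alpha_i$ (triangle-freeness forcing those two to be non-adjacent), which re-triggers the first \texttt{if} condition and contradicts our being in the \texttt{elseif}. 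For the third branch, I will verify that $M$ is \emph{repairable} in $(N,V)$ with distinguished agent $\alpha_i$: we have $u_{\alpha_i}(M)=0$, the triple $\{\alpha_i,\alpha_{l_5},\alpha_{l_6}\}$ blocks $M$ (so $M$ is unstable), and — because the first two guards failed — the enumeration above shows every blocking triple of $M$ has exactly the prescribed form $\{\alpha_i,\alpha_{j_1},\alpha_{j_2}\}$ with $u_{\alpha_{j_1}}(M)=1$, $u_{\alpha_{j_2}}(M)=0$, and $\mathit{val}_{\alpha_i}(\alpha_{j_1})=\mathit{val}_{\alpha_{j_1}}(\alpha_{j_2})=1$. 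Lemma~\ref{lem:algoreturnsstablematching} then guarantees that the call to \texttt{repair} returns a stable $P$\nobreakdash-matching in $(N,V)$.

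I expect the main obstacle to be the enumeration step and its careful reuse across branches, rather than any single hard argument: the delicate points are ensuring in the second branch that the residual $\alpha_i$-blocking triples collapse back into the first guard, and in the third branch that \emph{every} blocking triple has precisely the repairable form (not merely some). Both hinge on triangle-freeness, which rules out $\{\alpha_i,\alpha_a,\alpha_b\}$ being a triangle, together with the binary and symmetric valuations that pin the utility contributions to $\{0,1,2\}$; keeping this bookkeeping exhaustive and consistent with the order in which the guards are tested is where the proof must be most precise.
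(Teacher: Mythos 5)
Your proof is correct and follows essentially the same route as the paper's: induction on $|N|$, using Lemma~\ref{lem:3dsrsasbinblockerimprovement} to reduce blocking triples of the modified matching to blocking triples of $M$ containing $\alpha_i$, exploiting triangle-freeness to pin down utility patterns in each branch, and delegating the third branch to Lemma~\ref{lem:algoreturnsstablematching} after verifying repairability. The only (cosmetic) difference is that you front-load a single classification of all $\alpha_i$-blocking triples into the three guard forms and reuse it across branches, whereas the paper repeats the corresponding case analysis inside each branch; your version is, if anything, slightly tidier and makes the ``every blocking triple has the repairable form'' requirement of the third branch explicit.
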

\begin{proof}
The proof is by induction on $|N|$. When $|N|=2$, the returned matching $\varnothing$ is trivially stable in $(N, V)$. Suppose then that Algorithm~{\normalfont \texttt{findStableInTriangleFree}} returns a stable $P$\nobreakdash-matching $M$ given $(N', V')$ where $|N'|<|N|$. It follows that the recursive call to Algorithm~{\normalfont \texttt{findStableInTriangleFree}} returns a $P$\nobreakdash-matching $M$ that is stable in $(N', V')$.

Consider the first branch of the \texttt{if/else} statement. By construction, $u_{\alpha_{i}}(M')\allowbreak =2$ and $u_{\alpha_{l_1}}(M')=u_{\alpha_{l_2}}(M')=1$. Since $M$ is a $P$\nobreakdash-matching, it follows that $M'$ is also a $P$\nobreakdash-matching. Suppose for a contradiction that some triple blocks the returned $P$\nobreakdash-matching $M'$ in $(N, V)$. Since $u_{\alpha_i}(M')=2$, such a triple does not contain $\alpha_i$. By construction, $u_{\alpha_p}(M')\geq u_{\alpha_p}(M)$ for any $\alpha_p\in N$, so it follows that such a triple also blocks $M$ in $(N', V')$, a contradiction.

Consider the second branch of the \texttt{if/else} statement. By construction, $u_{\alpha_{l_3}}(M')=2$ and $u_{\alpha_{i}}(M')=u_{\alpha_{l_3}}(M')=1$. Since $M$ is a $P$\nobreakdash-matching, it follows that $M'$ is also a $P$\nobreakdash-matching. Suppose for a contradiction that some triple blocks $M'$ in $(N, V)$. By construction, $u_{\alpha_p}(M')\geq u_{\alpha_p}(M)$ for any $\alpha_p\in N$. It follows that any such triple that blocks $M'$ in $(N, V)$ contains $\alpha_i$, otherwise that triple blocks $M$ in $(N', V')$, a contradiction. Suppose that some triple $\{ \alpha_i, \alpha_{k_1}, \alpha_{k_2} \}$ blocks $M'$ in $(N, V)$. By construction, $u_{\alpha_i}(M')=1$ so it must be that $u_{\alpha_i}(\{ \alpha_{k_1}, \alpha_{k_2} \})=2$ and hence $\mathit{val}_{\alpha_i}(\alpha_{k_1})=\mathit{val}_{\alpha_i}(\alpha_{k_2})=1$. Since $(N, V)$ is triangle-free, it follows that $u_{\alpha_{k_1}}(\{ \alpha_{i}, \alpha_{k_2} \})=u_{\alpha_{k_2}}(\{ \alpha_{i}, \alpha_{k_1} \})=1$. It follows that $u_{\alpha_{k_1}}(M')=u_{\alpha_{k_2}}(M')=0$. Since $u_{\alpha_p}(M')\geq u_{\alpha_p}(M)$ for any $\alpha_p\in N$, it must be that $u_{\alpha_{k_1}}(M)=u_{\alpha_{k_2}}(M)=0$. This contradicts the condition of the first branch of the \texttt{if/else} statement, since two agents $\alpha_{l_1}, \alpha_{l_2}$, namely $\alpha_{k_1}, \alpha_{k_2}$, exist where $u_{\alpha_{l_1}}(M)=u_{\alpha_{l_2}}(M)=0$ and $\mathit{val}_{\alpha_i}(\alpha_{l_1})=\mathit{val}_{\alpha_i}(\alpha_{l_2})=1$.

Consider the third branch of the \texttt{if/else} statement. It must be that the conditional expressions in the first and second branches of the \texttt{if/else} statement do not hold. It follows from this that every triple that blocks $M$ in $(N', V')$ comprises $\{ \alpha_i, \alpha_{l_5}, \alpha_{l_6}\}$ where $\alpha_{l_5}, \alpha_{l_6}\in N$, $u_{\alpha_{l_5}}(M)=1$, $u_{\alpha_{l_6}}(M)=0$, and $\mathit{val}_{\alpha_i}(\alpha_{l_5})=\mathit{val}_{\alpha_{l_5}}(\alpha_{l_6})=1$. Note that $u_{\alpha_i}(M)=0$ and hence this is exactly the condition required by Algorithm~{\normalfont \texttt{repair}} (see Section~\ref{sec:3dsrsasbin_specialcase_algorithmsection}). By Lemma~\ref{lem:algoreturnsstablematching}, Algorithm~{\normalfont \texttt{repair}} returns a $P$\nobreakdash-matching $M'$ that is stable in $(N, V)$.

Consider the fourth branch of the \texttt{if/else} statement. It must be that the conditional expressions in the first, second, and third branches of the \texttt{if/else} statement do not hold. Suppose for a contradiction that some triple blocks $M'=M$ in $(N, V)$. By construction, $u_{\alpha_p}(M')=u_{\alpha_p}(M)$ for any $\alpha_p\in N$. It follows that any such triple that blocks $M'$ in $(N, V)$ contains $\alpha_i$, otherwise that triple blocks $M$ in $(N', V')$, a contradiction. Suppose that some triple $\{ \alpha_i, \alpha_{k_1}, \alpha_{k_2} \}$ blocks $M'$ in $(N, V)$.

Suppose first that $u_{\alpha_i}(\{ \alpha_{k_1}, \alpha_{k_2} \})=2$. Since $(N, V)$ is triangle-free, it follows that $u_{\alpha_{k_1}}(\{ \alpha_{i}, \alpha_{k_2} \})=u_{\alpha_{k_2}}(\{ \alpha_{i}, \alpha_{k_1} \})=1$. It follows that $u_{\alpha_{k_1}}(M')=u_{\alpha_{k_2}}(M')=0$. Since $u_{\alpha_p}(M')\geq u_{\alpha_p}(M)$ for any $\alpha_p\in N$, it must be that $u_{\alpha_{k_1}}(M)=u_{\alpha_{k_2}}(M)=0$. This contradicts the condition of the first branch of the \texttt{if/else} statement.

Suppose then that $u_{\alpha_i}(\{ \alpha_{k_1}, \alpha_{k_2} \})=1$. It must be that either $u_{\alpha_{k_1}}(\{ \alpha_{i}, \alpha_{k_2} \})=2$ or $u_{\alpha_{k_2}}(\{ \alpha_{i}, \alpha_{k_1} \})=2$. Suppose without loss of generality that $u_{\alpha_{k_1}}(\{ \alpha_{i}, \alpha_{k_2} \})=2$. It follows that $\mathit{val}_{\alpha_{k_1}}(\alpha_{i})=\mathit{val}_{\alpha_{k_1}}(\alpha_{k_2})=1$. There are two possibilities: either $u_{\alpha_{k_1}}(M)=1$ or $u_{\alpha_{k_1}}(M)=0$. The first possibility implies that the conditional expression of the second \texttt{if/else} branch holds, a contradiction. The second possibility implies that the conditional expression of the third \texttt{if/else} branch holds, also a contradiction. It follows that no such triple $\{ \alpha_i, \alpha_{k_1}, \alpha_{k_2} \}$ blocks $M'$ in $(N, V)$.\end{proof}

Algorithm~{\normalfont \texttt{findStableInTriangleFree}} is recursive. We consider its asymptotic time complexity and prove that it has running time $O(|N|^3)$.

\begin{lem}
\label{lem:3dsrsasbin_algfindstablepmatchingrunningtime}
Algorithm~{\normalfont \texttt{findStableInTriangleFree}} has running time $O(|N|^3)$.
\end{lem}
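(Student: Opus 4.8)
The plan is to set up a recurrence for the running time and solve it. Write $T(n)$ for the worst-case running time of Algorithm~\texttt{findStableInTriangleFree} on an instance with $n = |N|$ agents. Each invocation makes exactly one recursive call, on the instance $(N', V')$ with $n-1$ agents, so $T(n) = T(n-1) + f(n)$, where $f(n)$ denotes the work performed by a single invocation excluding the recursive call. The base case $|N|=2$ takes constant time. The whole argument reduces to showing $f(n) = O(n^2)$, after which the recurrence unrolls to $T(n) = \sum_{k} O(k^2) = O(n^3)$, since the recursion depth is $n-2 = O(n)$.

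To bound $f(n)$, I would first fix a representation of the instance (for example via adjacency lists), as used inside Algorithm~\texttt{repair} in Lemma~\ref{lem:3dsrsasbin_almosttherealgo_runningtime}, together with a representation of $M$ in which each triple can be scanned in $O(n)$ time. With these in place, the utility $u_{\alpha_p}(M)$ of every agent can be precomputed in $O(n^2)$ time by one pass over the $O(n)$ triples of $M$, and forming $(N', V')$ by deleting $\alpha_i$ costs at most $O(n^2)$, which does not affect the asymptotics. Each of the first three branch guards is then evaluated by a bounded neighbourhood search: the first branch scans the neighbours of $\alpha_i$ for two agents of utility $0$ in $M$, in $O(n)$ time; the second and third branches seek a neighbour $\alpha_{l_3}$ (resp.\ $\alpha_{l_5}$) of $\alpha_i$ of the appropriate utility that itself has a neighbour $\alpha_{l_4}$ (resp.\ $\alpha_{l_6}$) of utility $0$, an $O(n^2)$ neighbour-of-neighbour search. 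Hence evaluating all the branch guards costs $O(n^2)$.

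Finally, the only branch that performs more than local work is the third, which invokes Algorithm~\texttt{repair}. By Lemma~\ref{lem:algoreturnsstablematching} this call runs in $O(n^2)$ time, and the precondition needed by \texttt{repair} is exactly the situation established when the third branch is taken, as argued in the correctness proof (Lemma~\ref{lem:3dsrsasbin_algfindsstablepmatching_notimecomplex}). Thus every branch contributes at most $O(n^2)$, giving $f(n)=O(n^2)$. Combining, $T(n) = T(n-1) + O(n^2)$ with $T(2)=O(1)$ yields $T(n)=O(n^3)$.

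The main thing requiring care is the guard-checking in the second and third branches: a naive reading suggests a neighbour-of-neighbour scan whose cost must be controlled. I would make explicit that this is $O(n^2)$ per level (using the same kind of precomputation of utilities employed within \texttt{repair}), so that it, like the \texttt{repair} call itself, is absorbed into the $O(n^2)$ bound on $f(n)$ and does not inflate the final cubic bound.
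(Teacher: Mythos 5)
Your proposal is correct and matches the paper's own argument in all essentials: the paper likewise fixes an adjacency-list representation, precomputes all utilities $u_{\alpha_p}(M)$ in a lookup table in $O(|N|^2)$ time after the single recursive call, evaluates each branch guard in $O(|N|^2)$ time, invokes \texttt{repair} in $O(|N|^2)$ time via Lemma~\ref{lem:3dsrsasbin_almosttherealgo_runningtime}, and resolves the resulting recursion (phrased there as an induction rather than your explicit recurrence $T(n)=T(n-1)+O(n^2)$) to obtain the cubic bound. Your $O(n)$ neighbourhood scan for the first guard is a minor optimisation over the paper's pairwise check, but it does not change the argument.
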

\begin{proof}
The pseudocode for Algorithm~{\normalfont \texttt{findStableInTriangleFree}} gives an outline of the algorithm at a high level. As before, to analyse the worst-case time complexity we provide a more detailed description of certain steps in terms of the unit cost of operations in standard data structures. This description suffices to show that the running time of the algorithm is $O(|N|^3)$. Let $T(|N|)$ be the running time of the algorithm given an instance $(N, V)$. We will prove inductively that $T(|N|)=O(|N|^3)$.

Suppose that the input $(N, V)$ is given such that, for a given $\alpha_{p}\in N$, the algorithm can iterate through the set $\{ \alpha_{q} \in N : \mathit{val}_{\alpha_{p}}(\alpha_{q})=1 \}$ in $O(|N|)$ time. For example, $(N, V)$ could be stored graphically using adjacency lists. It follows that, given three agents $\alpha_{h_1}, \alpha_{h_2}, \alpha_{h_3}\in N$ the algorithm can compute $u_{\alpha_{h_1}}(\{ \alpha_{h_2}, \alpha_{h_3} \}), u_{\alpha_{h_2}}(\{ \alpha_{h_1}, \alpha_{h_3} \}),$ and $u_{\alpha_{h_3}}(\{ \alpha_{h_1}, \alpha_{h_2} \})$ in $O(|N|)$ time. In any case, algorithm will return a $P$\nobreakdash-matching $M'$ stored as a linked list or similar data structure that allows a triple to be appended to the end of list in constant time.

By inspection, when $|N|=2$ the algorithm returns immediately and hence $T(2)=O(1)$. In this case the algorithm will return an empty linked list or similar data structure.

The constructed instance $(N',V')$ can be stored using adjacency lists or an equivalent data structure. A straightforward procedure to select $\alpha_i$ and construct $(N', V')$ takes $O(|N|)$ time. By assumption, the recursive call to construct $M'$ takes $T(|N|-1)$ time.

After this call, the algorithm constructs a supplementary lookup table $L_1$, with exactly $|N|-1$ entries indexed by each $\alpha_p\in N'$. Each entry will contain either zero, one, or two. For each agent $\alpha_p\in N$, the algorithm constructs $L_1$ so that the $p\textsuperscript{th}$ entry contains $u_{\alpha_p}(M)$. By assumption, the algorithm can compute $u_{\alpha_p}(M)$ for any $\alpha_p\in N$ in $O(|N|)$ time. It follows that $L_1$ can be constructed in $O(|N|^2)$ time by iterating through $M$ and computing $u_{\alpha_{h_1}}(M), u_{\alpha_{h_2}}(M), u_{\alpha_{h_3}}(M)$ for each $\{ \alpha_{h_1}, \alpha_{h_2}, \alpha_{h_3} \} \in M$. Since $|M|=O(|N|)$ in total this step takes $O(|N|^2)$ time. It follows that we can use $L_1$ to look up $u_{\alpha_{p}}(M)$ for any $\alpha_{p}\in N$ in constant time.

The construction of $L_1$ allows the algorithm to identify some $\alpha_{l_1}, \alpha_{l_2}\in N$ exist where $u_{\alpha_{l_1}}(M)=u_{\alpha_{l_2}}(M)=0$ and $\mathit{val}_{\alpha_i}(\alpha_{l_1})=\mathit{val}_{\alpha_i}(\alpha_{l_2})=1$, if two such agents exist, in $O(|N|^2)$ time. One way to do this is to consider each pair $(\alpha_{l_1}, \alpha_{l_2})\in N^2$ and look up $u_{\alpha_{l_1}}(M)$ and $u_{\alpha_{l_2}}(M)$ in $L_1$. Since $M$ is stored using a linked list or similar data structure, if such $\alpha_{l_1}, \alpha_{l_2}\in N$ exist then $M'$ can be constructed by adding the triple $\{ \alpha_i, \alpha_{l_1}, \alpha_{l_2} \}$ to $M$ in constant time. Similarly, the identification of $\alpha_{l_3}, \alpha_{l_4}\in N$ where $u_{\alpha_{l_3}}(M)=u_{\alpha_{l_4}}(M)=0$ and $\mathit{val}_{\alpha_i}(\alpha_{l_3})=\mathit{val}_{\alpha_{l_3}}(\alpha_{l_4})=1$ can be performed in $O(|N|^2)$ time and the corresponding construction of $M'$ in constant time. In the third branch of the \texttt{if/else} statement, the identification of $\alpha_{l_5}, \alpha_{l_6}\in N$ where $u_{\alpha_{l_3}}(M)=1$, $u_{\alpha_{l_4}}(M)=0$ and $\mathit{val}_{\alpha_i}(\alpha_{l_3})=\mathit{val}_{\alpha_{l_3}}(\alpha_{l_4})=1$ can be similarly performed in $O(|N|^2)$ time. By Lemma~\ref{lem:3dsrsasbin_almosttherealgo_runningtime}, the call to Algorithm~{\normalfont \texttt{repair}} also takes $O(|N|^2)$ time. It follows that the overall running time of Algorithm~{\normalfont \texttt{findStableInTriangleFree}} is $O(|N|^3)$.
\end{proof}

\begin{restatable}{lem}{threedsrsasbinalgfindsstablepmatching}
\label{lem:threedsrsasbin_algfindsstablepmatching}
Algorithm~{\normalfont \texttt{findStableInTriangleFree}} returns a stable $P$\nobreakdash-matching in $(N, V)$ in $O(|N|^3)$ time.
\end{restatable}
\begin{proof}
By Lemmas~\ref{lem:3dsrsasbin_algfindsstablepmatching_notimecomplex} and~\ref{lem:3dsrsasbin_algfindstablepmatchingrunningtime}.
\end{proof}

\else

\begin{restatable}{lem}{threedsrsasbinalgfindsstablepmatching}
\label{lem:threedsrsasbin_algfindsstablepmatching}
Algorithm~{\normalfont \texttt{findStableInTriangleFree}} returns a stable $P$\nobreakdash-matching in $(N, V)$ in $O(|N|^3)$ time.
\end{restatable}


\fi
\subsection{Finding a stable \texorpdfstring{$P$}{P}-matching in an arbitrary instance}
\label{sec:3dsrsasbin_findingarbitrarystablematching}

\ifdefined \fullversion

In the previous section we considered instances of 3D-SR-SAS-BIN that are triangle-free. We showed that, given such an instance, Algorithm~\texttt{findStableInTriangleFree} can be used to find a stable $P$\nobreakdash-matching in $O(|N|^3)$ time (Lemma~\ref{lem:threedsrsasbin_algfindsstablepmatching}). In Section~\ref{sec:3dsrsasbin_prelims}, we showed that an arbitrary instance can be reduced in $O(|N|^3)$ time to construct a corresponding triangle-free instance (Lemma~\ref{lem:threedsrsasbintrianglefree}). We define a subroutine, \texttt{eliminateTriangles}, which reduces an arbitrary instance in this way, and returns a pair containing the reduced instance and a set of triangles $M_{\triangle}$. Algorithm~\texttt{findStable} therefore comprises two steps. First, the instance is reduced with a call to \texttt{eliminateTriangles}. Then, Algorithm~\texttt{findStableInTriangleFree} is called to construct a $P$\nobreakdash-matching $M'$ that is stable in the reduced, triangle-free instance.

\begin{algorithm}[H]
\textbf{Input:} an instance $(N,V)$ of 3D-SR-SAS-BIN\\
\textbf{Output:} stable $P$-matching $M'$ in $(N,V)$
\smallskip 
\begin{algorithmic}
\caption{Algorithm~\texttt{findStable} \label{alg:3dsrsasbin_overall_algo}} 

\State $(N', V'), M_{\triangle} \gets \texttt{eliminateTriangles}((N, V))$
\State $M' \gets \texttt{findStableInTriangleFree}((N', V'))$

\smallskip

\State \Return $M' \cup M_{\triangle}$

\medskip
\end{algorithmic}
\end{algorithm}

\begin{lem}
\label{lem:3dsrsasbinconstruction_norunningtime}
Given an instance $(N, V)$ of 3D-SR-SAS-BIN, Algorithm~{\normalfont \texttt{findStable}} returns a stable $P$\nobreakdash-matching.
\end{lem}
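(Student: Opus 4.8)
The plan is to chain together the two principal results already established for the preprocessing step and for triangle-free instances. The essential observation is that Algorithm~\texttt{findStable} is, by construction, nothing more than the composition of the triangle-elimination procedure with Algorithm~\texttt{findStableInTriangleFree}. Consequently the correctness argument reduces to checking that the output produced by the first stage exactly meets the hypothesis demanded by the second, and that the resulting matching is reconstituted correctly.

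First I would observe that the subroutine \texttt{eliminateTriangles} is precisely the procedure whose existence is guaranteed by Lemma~\ref{lem:threedsrsasbintrianglefree}. Hence the returned pair $(N', V')$ is a triangle-free instance of 3D-SR-SAS-BIN with $|N'| \leq |N|$, and the returned set of triples $M_{\triangle}$ carries the key property that whenever $M$ is a stable $P$-matching in $(N', V')$, the set $M \cup M_{\triangle}$ is a stable $P$-matching in $(N, V)$. Next I would invoke Lemma~\ref{lem:threedsrsasbin_algfindsstablepmatching}: because $(N', V')$ is triangle-free, the call $\texttt{findStableInTriangleFree}((N', V'))$ returns a $P$-matching $M'$ that is stable in $(N', V')$. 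This $M'$ therefore satisfies exactly the antecedent of the conditional statement supplied by Lemma~\ref{lem:threedsrsasbintrianglefree}.

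Finally, I would apply that conditional with $M = M'$ to conclude that $M' \cup M_{\triangle}$, which is precisely the matching returned by Algorithm~\texttt{findStable}, is a stable $P$-matching in $(N, V)$. This completes the argument, and the running time claim (for the companion lemma, if stated) follows by adding the $O(|N|^3)$ bounds from each of the two stages.

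As for the main obstacle: there is essentially no difficult step, since all of the substantive work is carried out in Lemmas~\ref{lem:threedsrsasbintrianglefree} and~\ref{lem:threedsrsasbin_algfindsstablepmatching}. The only points that require any care are confirming that \texttt{eliminateTriangles} genuinely realises the guarantees of Lemma~\ref{lem:threedsrsasbintrianglefree} (in particular, that the instance it outputs really is triangle-free, so that the hypothesis of Lemma~\ref{lem:threedsrsasbin_algfindsstablepmatching} applies), and handling the degenerate boundary case in which the reduced instance satisfies $|N'| = 2$, where \texttt{findStableInTriangleFree} returns the empty matching — which is trivially a stable $P$-matching, so the composition still yields $M_{\triangle}$ as the stable output.
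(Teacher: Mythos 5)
Your proposal is correct and follows essentially the same route as the paper: the paper's proof likewise just composes Lemma~\ref{lem:threedsrsasbintrianglefree} (the guarantee of \texttt{eliminateTriangles}) with Lemma~\ref{lem:threedsrsasbin_algfindsstablepmatching} (correctness of \texttt{findStableInTriangleFree} on the triangle-free instance) and concludes that $M' \cup M_{\triangle}$ is a stable $P$\nobreakdash-matching in $(N, V)$. Your extra remarks about verifying triangle-freeness and the $|N'|=2$ base case are already absorbed in those cited lemmas, so nothing further is needed.
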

\begin{proof} 
A suitable implementation of the subroutine {\normalfont \texttt{eliminateTriangles}} returns a pair $((N', V'), M_{\triangle})$ where $|N'|\leq |N|$ and if $M$ is a stable $P$\nobreakdash-matching in $(N', V')$ then $M' = M \cup M_{\triangle}$ is a stable $P$\nobreakdash-matching in $(N, V)$ (Lemma~\ref{lem:threedsrsasbintrianglefree}). By Lemma~\ref{lem:threedsrsasbin_algfindsstablepmatching}, Algorithm~{\normalfont \texttt{findStableInTriangleFree}} returns $P$\nobreakdash-matching $M'$ that is stable in in $(N', V')$. It follows that $M' \cup M_{\triangle}$ is a $P$\nobreakdash-matching that is stable in $(N, V)$.
\end{proof}

\begin{lem}
\label{lem:3dsrsasbinconstruction_runningtime}
Algorithm~{\normalfont \texttt{findStable}} has running time $O(|N|^3)$.
\end{lem}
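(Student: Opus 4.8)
The plan is to decompose the running time of Algorithm~\texttt{findStable} according to its two constituent steps and then observe that each step has already been shown to run in $O(|N|^3)$ time. Algorithm~\texttt{findStable} first invokes the subroutine \texttt{eliminateTriangles} on $(N, V)$, and then invokes Algorithm~\texttt{findStableInTriangleFree} on the reduced instance $(N', V')$, finally returning the union of the two resulting sets of triples. Since the algorithm is a straight-line composition of these two calls (with no recursion of its own), the total running time is simply the sum of the costs of the two subroutine calls plus the cost of forming the final union.

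First I would bound the cost of \texttt{eliminateTriangles}. By Lemma~\ref{lem:threedsrsasbintrianglefree}, a suitable implementation of this subroutine identifies the triangle-free instance $(N', V')$ and the set of triples $M_{\triangle}$ in $O(|N|^3)$ time, and moreover guarantees $|N'|\leq |N|$. Next I would bound the cost of the call to Algorithm~\texttt{findStableInTriangleFree}. By Lemma~\ref{lem:threedsrsasbin_algfindsstablepmatching}, this call runs in $O(|N'|^3)$ time, and since $|N'|\leq |N|$ this is $O(|N|^3)$.

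It remains to account for the final step, in which the algorithm returns $M' \cup M_{\triangle}$. Both $M'$ and $M_{\triangle}$ are sets of pairwise disjoint triples over subsets of $N$, so each contains at most $\lfloor |N|/3 \rfloor = O(|N|)$ triples. Forming their union therefore takes $O(|N|)$ time, assuming both are stored using a linked list or a comparable data structure permitting concatenation in time linear in the number of triples. Summing the three contributions, the total running time is $O(|N|^3) + O(|N|^3) + O(|N|) = O(|N|^3)$.

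There is no substantive obstacle in this argument: it is a routine composition of two previously established $O(|N|^3)$ bounds, and the only minor point requiring care is the inequality $|N'|\leq |N|$, which is needed to convert the $O(|N'|^3)$ bound from Lemma~\ref{lem:threedsrsasbin_algfindsstablepmatching} into a bound in terms of the original instance size, and which is supplied directly by Lemma~\ref{lem:threedsrsasbintrianglefree}.
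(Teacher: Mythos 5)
Your proof is correct and follows essentially the same route as the paper's: both decompose the running time into the $O(|N|^3)$ cost of \texttt{eliminateTriangles} (Lemma~\ref{lem:threedsrsasbintrianglefree}) and the $O(|N|^3)$ cost of \texttt{findStableInTriangleFree}, and sum them. Your additional care with $|N'|\leq |N|$ and the $O(|N|)$ cost of forming the final union are harmless refinements that the paper leaves implicit.
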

\begin{proof}
By definition, Algorithm~{\normalfont \texttt{eliminateTriangles}} has running time $O(|N|^3)$ (Lemma~\ref{lem:threedsrsasbintrianglefree}). By Lemma~\ref{lem:3dsrsasbin_algfindstablepmatchingrunningtime}, Algorithm~{\normalfont \texttt{findStableInTriangleFree}} also has running time $O(|N|^3)$. It follows that Algorithm~{\normalfont \texttt{findStable}} has total running time $O(|N|^3)$.
\end{proof}

\begin{restatable}{theorem}{threedsrsasbinconstruction}
\label{lem:3dsrsasbinconstruction}
Given an instance $(N, V)$ of 3D-SR-SAS-BIN, a stable $P$\nobreakdash-matching, and hence a stable matching, must exist and can be found in $O(|N|^3)$ time. Moreover, if $|N|$ is a multiple of three then, if required, every agent can be matched in the returned stable matching.
\end{restatable}
\begin{proof}
By Lemmas~\ref{lem:3dsrsasbinconstruction_norunningtime} and~\ref{lem:3dsrsasbinconstruction_runningtime}. If $|N|$ is a multiple of three, then if required the agents unmatched in $M' \cup M_{\triangle}$ can be arbitrarily matched into triples. By Lemma~\ref{lem:3dsrsasbinblockerimprovement}, the resulting matching is still stable in $(N, V)$. 
\end{proof}

\else

In the previous section we considered instances of 3D-SR-SAS-BIN that are triangle-free. We showed that, given such an instance, Algorithm~\texttt{findStableInTriangleFree} can be used to find a stable $P$\nobreakdash-matching in $O(|N|^3)$ time (Lemma~\ref{lem:threedsrsasbin_algfindsstablepmatching}). In Section~\ref{sec:3dsrsasbin_prelims}, we showed that an arbitrary instance can be reduced in $O(|N|^3)$ time to construct a corresponding triangle-free instance (Lemma~\ref{lem:threedsrsasbintrianglefree}). Algorithm \texttt{findStable} therefore comprises two steps. First, the instance is reduced by removing a maximal set of triangles. Call this set $M_{\triangle}$. Then, Algorithm~\texttt{findStableInTriangleFree} is called to construct a $P$\nobreakdash-matching $M'$ that is stable in the reduced, triangle-free instance. It is straightforward to show that $M_{\triangle} \cup M'$ is a stable $P$\nobreakdash-matching. The running time of Algorithm~\texttt{findStable} is thus $O(|N|^3)$. A pseudocode description of Algorithm~\texttt{findStable} can be found in the full version of this paper \cite{fullversion3dsraspaper}. We arrive at the following result.

\begin{restatable}{theorem}{threedsrsasbinconstruction}
\label{lem:3dsrsasbinconstruction}
Given an instance $(N, V)$ of 3D-SR-SAS-BIN, a stable $P$\nobreakdash-matching, and hence a stable matching, must exist and can be found in $O(|N|^3)$ time. Moreover, if $|N|$ is a multiple of three then, if required, every agent can be matched in the returned stable matching.
\end{restatable}

\fi
\subsection{Stability and utilitarian welfare}
\label{sec:3dsrsasbin_utilitarianwelfare}

Given an instance $(N,V)$ of 3D-SR-SAS-BIN and matching $M$, let the \emph{utilitarian welfare} \cite{10.5555/2832249.2832313,10.5555/3398761.3398791} of a set $S\subseteq N$, denoted $u_{S}(M)$, be $\sum\limits_{\alpha_i\in S} u_{\alpha_i}(M)$. Let $u(M)$ be short for $u_{N}(M)$. Given a matching $M$ in an arbitrary instance $(N, V)$ of 3D-SR-SAS-BIN, it follows that $0 \leq u(M) \leq 2|N|$. It is natural to then consider the optimisation problem of finding a stable matching with maximum utilitarian welfare, which we refer to as \emph{3D-SR-SAS-BIN-MAXUW}. This problem is closely related to Partition Into Triangles (PIT, defined in Section~\ref{sec:generalbinary}), which we reduce from in the proof that 3D-SR-SAS-BIN-MAXUW is $\NP$-hard.

\ifdefined \fullversion

\begin{restatable}{theorem}{threedsrsasbinmaxutilstablehard}
\label{thm:3dsrsasbin_maxutilstable_hard}
3D-SR-SAS-BIN-MAXUW is $\NP$-hard.
\end{restatable}
\begin{proof}
A trivial reduction exists from Partition Into Triangles (defined in Section~\ref{sec:generalbinary}) to the problem of deciding if a given instance of 3D-SR-SAS-BIN-MAXUW contains a stable matching $M$ with $u(M)=2|N|$.
\end{proof}

\else

\begin{restatable}{theorem}{threedsrsasbinmaxutilstablehard}
\label{thm:3dsrsasbin_maxutilstable_hard}
3D-SR-SAS-BIN-MAXUW is $\NP$-hard.
\end{restatable}

\fi

We note that the reduction from PIT to 3D-SR-SAS-BIN-MAXUW also shows that the problem of finding a (not-necessarily stable) matching with maximum utilitarian welfare, given an instance of 3D-SR-SAS-BIN, is also $\NP$-hard.

In Section~\ref{sec:3dsrsasbin_findingarbitrarystablematching} we showed that, given an arbitrary instance $(N, V)$ of 3D-SR-SAS-BIN, a stable $P$\nobreakdash-matching exists and can be found in $O(|N|^3)$ time. We now present Algorithm~\texttt{findStableUW} (Algorithm~\ref{alg:3dsrsasbin_approximationalgo}) as an approximation algorithm for 3D-SR-SAS-BIN-MAXUW. This algorithm first calls Algorithm~\texttt{findStable} to construct a stable $P$\nobreakdash-matching. It then orders the unmatched agents into triples such that the produced matching is still stable in $(N,V)$ (by Lemma~\ref{lem:3dsrsasbinblockerimprovement}) but is not necessarily a $P$\nobreakdash-matching. The pseudocode description of Algorithm~\texttt{findStableUW} includes a call to \texttt{maximum2DMatching}. Given an instance $(N, V)$ and some set $U\subseteq N$, this subroutine returns a (two-dimensional) \emph{maximum cardinality matching} $Y$ in the subgraph of $G$, the underlying graph of $(N,V)$, induced by $U$. From $Y$, Algorithm~\texttt{findStableUW} constructs a set $X$ of pairs with cardinality $\lfloor |U|/3 \rfloor$. It also constructs a set $Z$ from the remaining agents, also with cardinality $\lfloor |U|/3 \rfloor$. Finally, it constructs the matching $M_2$ such that each triple in $M_2$ is union of a pair of agents in $X$ and a single agent in $Z$.
\newcommand\contentsOfFindStableUWAlgorithm{
\textbf{Input:} an instance $(N,V)$ of 3D-SR-SAS-BIN\\
\textbf{Output:} stable matching $M_A$ in $(N,V)$
\smallskip 
\begin{algorithmic}
\caption{Algorithm \texttt{findStableUW}\label{alg:3dsrsasbin_approximationalgo}} 
\State $M_1 \gets \texttt{findStable}((N, V))$
\State $U \gets \text{agents in $N$ unmatched in $M_1$}$
\State $Y \gets \texttt{maximum2DMatching}((N, V),\, U)$
\smallskip

\If {$|Y| \geq \lfloor |U|/3 \rfloor $}

\State $X \gets \text{any $\lfloor |U|/3 \rfloor$ elements of $Y$}$

\Else

\LineComment{Note that since $Y$ is a set of disjoint pairs, it follows that}
\LineCommentPhantom{$|U \setminus \bigcup Y|=|U|-2|Y|\geq \lfloor |U|/3 \rfloor - |Y|$.}
\smallskip

\State $W \gets \text{ an arbitrary set of } \lfloor |U|/3 \rfloor - |Y| \text{ pairs of elements in $U \setminus \bigcup Y$}$
\State $X \gets Y \cup W$

\EndIf
\State \textbf{end if}
\smallskip

\State $Z \gets U \setminus \bigcup X$

\smallskip

\LineComment{Suppose $X=\{x_1, x_2, \dots, x_{\lfloor |U|/3 \rfloor}\}$ and $Z=\{z_1, z_2, \dots, z_{\lfloor |U|/3 \rfloor}\}$.}
\LineComment{Note that $x_i$ is a pair of agents and $z_i$ is a single agent for each $1\leq i \leq {\lfloor |U|/3 \rfloor}$.}
\smallskip

\State $M_2 \gets \{ x_i \cup \{ z_i \} \text{ for each } 1 \leq i \leq \lfloor |U|/3 \rfloor \}$

\State \Return $M_1 \cup M_2$

\medskip
\end{algorithmic}
}
\ifdefined \fullversion
\begin{algorithm}
\contentsOfFindStableUWAlgorithm
\end{algorithm}
\else
\begin{algorithm}[t!]
\contentsOfFindStableUWAlgorithm
\end{algorithm}
\fi

\ifdefined \fullversion

We consider Algorithm~{\normalfont \texttt{findStableUW}} with an arbitrary input instance $(N, V)$. The goal is to show that $2u(M_{\textrm{A}}) \geq u(M_\textrm{opt})$, where $M_{\textrm{A}}$ is the stable matching returned by the algorithm, and $M_{\textrm{opt}}$ is a stable matching in $(N, V)$ with maximum utilitarian welfare. Recall that $|N|=3k+l$ for some $k \geq 0$ and $0 \leq l < 3$ and by Proposition~\ref{prop:completematching} we assume that $|M_{\textrm{opt}}|=k$.

The proof is broken down into two cases. The first case is proved in Lemma~\ref{lem:3dsrsasbin_no000exists_lem}. In this case, the utilitarian welfare of any triple in $M_{\textrm{A}}$ is at least two. The second case, in which some triple in $M_\textrm{A}$ has utilitarian welfare zero, is considered in Lemmas~\ref{lem:3dsrsasbin_some000exists_r1_lem} -- and~\ref{lem:3dsrsasbin_some000exists_final_lem}. At a high level, the proof in both cases is similar, and involves placing a lower bound on the welfare in $M_{\textrm{A}}$ of the agents in each triple in $M_{\textrm{opt}}$.

Let $T_\textrm{opt}^y$ and $T_\textrm{A}^y$ be the set of triples each with utilitarian welfare $y$ in $M_{\textrm{opt}}$ and $M_{\textrm{A}}$ respectively. Recall that since the valuation functions are symmetric, $u_t(M)\in \{0, 2, 4, 6\}$ for any triple $t$ in an arbitrary matching $M$. It follows that
\begin{alignat}{7}
    M_{\textrm{opt}} &= \matheqbox{RARoptbox}{T_{\textrm{opt}}^6} &\,& \cup &\,& \matheqbox{RARoptbox}{T_{\textrm{opt}}^4} &\,& \cup &\,& \matheqbox{RARoptbox}{T_{\textrm{opt}}^2} &\,& \cup &\,& \matheqbox{RARoptbox}{T_{\textrm{opt}}^0} \label{eq:moptcomposition}\\
    M_{\textrm{A}} &= 
    \matheqbox{RARoptbox}{T_{\textrm{A}}^6} &&\cup&& \matheqbox{RARoptbox}{T_{\textrm{A}}^4} &&\cup&& \matheqbox{RARoptbox}{T_{\textrm{A}}^2} &&\cup&& \matheqbox{RARoptbox}{T_{\textrm{A}}^0} \label{eq:macomposition}
\end{alignat}
and hence
\begin{align}
u(M_\textrm{opt}) &= 6|T_{\textrm{opt}}^6| + 4|T_{\textrm{opt}}^4| + 2|T_{\textrm{opt}}^2| \label{eq:welfareofmopt} \\
u(M_\textrm{A}) &= 6|T_{\textrm{A}}^6| + 4|T_{\textrm{A}}^4| + 2|T_{\textrm{A}}^2|\enspace. \label{eq:welfareofma}
\end{align}

Lemma~\ref{lem:3dsrsasbin_ma_is_complete} shows that, by design, there are exactly $l$ agents in $N$ that are unmatched in $M_{\textrm{A}}$.

\begin{lem}
\label{lem:3dsrsasbin_ma_is_complete}
$|M_{\textrm{A}}|=k$.
\end{lem}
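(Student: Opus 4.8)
The plan is to prove the claim by a direct counting argument, exploiting the fact that $M_{\textrm{A}} = M_1 \cup M_2$ is the union of two matchings built on disjoint sets of agents. First I would pin down the size of the unmatched set $U$. Since $\texttt{findStable}$ returns a stable $P$-matching $M_1$ (Theorem~\ref{lem:3dsrsasbinconstruction}), its triples are pairwise disjoint and each contains exactly three agents, so $M_1$ matches exactly $3|M_1|$ agents and hence $|U| = |N| - 3|M_1|$. Because $M_2$ is constructed entirely from agents in $U$, the matchings $M_1$ and $M_2$ are disjoint, giving $|M_{\textrm{A}}| = |M_1| + |M_2|$.

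Next I would verify that $M_2$ is well-defined and compute $|M_2|$. By construction $X$ is meant to be a set of $\lfloor |U|/3 \rfloor$ pairwise-disjoint pairs of agents in $U$. If $|Y| \geq \lfloor |U|/3 \rfloor$ this is immediate, since $Y$ is already a matching on $U$. Otherwise the agents in $U \setminus \bigcup Y$ number $|U| - 2|Y|$, and as $2\lfloor |U|/3 \rfloor \leq |U|$ there are enough of them to form the additional $\lfloor |U|/3 \rfloor - |Y|$ pairs comprising $W$; in either case $|X| = \lfloor |U|/3 \rfloor$. Setting $Z = U \setminus \bigcup X$ yields $|Z| = |U| - 2\lfloor |U|/3 \rfloor \geq \lfloor |U|/3 \rfloor$ (again because $3\lfloor |U|/3 \rfloor \leq |U|$), so each of the $\lfloor |U|/3 \rfloor$ pairs $x_i$ can be paired with a distinct single agent $z_i$. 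These triples are pairwise disjoint, so $M_2$ is a valid matching with $|M_2| = \lfloor |U|/3 \rfloor$.

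Finally I would carry out the floor arithmetic. Writing $|N| = 3k + l$ with $0 \leq l < 3$ and substituting $|U| = |N| - 3|M_1|$ gives
\[
\left\lfloor \frac{|U|}{3} \right\rfloor = \left\lfloor \frac{3(k - |M_1|) + l}{3} \right\rfloor = (k - |M_1|) + \left\lfloor \frac{l}{3} \right\rfloor = k - |M_1|,
\]
since $\lfloor l/3 \rfloor = 0$. Hence $|M_{\textrm{A}}| = |M_1| + |M_2| = |M_1| + (k - |M_1|) = k$, as required.

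I expect the only point requiring care, rather than a genuine obstacle, to be the well-definedness check in the second step: one must confirm that in the \emph{else} branch there are enough unmatched agents to complete $X$, and that $Z$ is large enough to supply one agent per pair in $X$. Both reduce to the elementary inequalities $2\lfloor |U|/3 \rfloor \leq |U|$ and $3\lfloor |U|/3 \rfloor \leq |U|$, so no substantial difficulty arises; the remainder of the argument is routine counting.
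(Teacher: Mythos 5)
Your proof is correct and follows essentially the same route as the paper: decompose $|M_{\textrm{A}}| = |M_1| + |M_2|$, compute $|U| = |N| - 3|M_1|$, and use the floor arithmetic $\lfloor |U|/3 \rfloor = k - |M_1|$ since $l < 3$. Your extra verification that $X$ and $Z$ are large enough for $M_2$ to be well-defined is a point the paper delegates to a comment in the pseudocode of Algorithm~\texttt{findStableUW}, so it is a welcome but inessential addition.
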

\begin{proof}
Recall that $|N|=3k+l$ for $k\geq 1$ and $0\leq l < 3$. Since $U$ contains the agents unmatched in $|M_1|$,
\begin{align}
    |U| &= |N| - 3|M_1| \nonumber\\
    &= 3k + l - 3|M_1| \label{eq:sizeofu}\enspace.
\end{align}
It then follows that
\begin{align}
    \lfloor |U| / 3 \rfloor &= \lfloor (3k + l - 3|M_1|)/3 \rfloor && \mbox{by Equation~\ref{eq:sizeofu}} \nonumber\\
    &= \lfloor k + l/3 - |M_1| \rfloor \nonumber\\
    &= k + \lfloor l/3 \rfloor - |M_1|\nonumber\\
    &= k - |M_1| && \mbox{since $l<3$ by definition.} \label{eq:sizeofuover3floor}
\end{align}
Now consider $|M_{\textrm{A}}|$. By construction, $M_\textrm{A} = M_1 \cup M_2$ so it follows that
\begin{align}
    |M_{\textrm{A}}| &= |M_1| + |M_2|\nonumber\\
    &= |M_1| + \lfloor |U| / 3 \rfloor && \mbox{by construction} \nonumber\\
    &= |M_1| + k - |M_1| && \mbox{by Equation~\ref{eq:sizeofuover3floor}}\nonumber\\
    &= k\enspace. \nonumber
\end{align}
\end{proof}

Lemma~\ref{lem:3dsrsasbin_tau_a_geq_tau_opt_over_3} demonstrates a relationship between $T_{\textrm{A}}^6$ and $T_{\textrm{opt}}^6$.

\begin{lem}
\label{lem:3dsrsasbin_tau_a_geq_tau_opt_over_3}
$|T_{\textrm{A}}^6| \geq |T_{\textrm{opt}}^6|/3$.
\end{lem}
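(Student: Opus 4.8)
The plan is to exploit the fact that the triangle packing $M_{\triangle}$ removed at the start of Algorithm~\texttt{findStable} is \emph{maximal}, and that it survives intact into $M_{\textrm{A}}$. First I would observe that $M_{\triangle} \subseteq M_1 \subseteq M_{\textrm{A}}$: the reduction to a triangle-free instance produces $M_{\triangle}$, and $M_1 = \texttt{findStable}((N,V))$ returns this packing together with a stable $P$\nobreakdash-matching of the triangle-free remainder (Lemma~\ref{lem:threedsrsasbintrianglefree}), while Algorithm~\texttt{findStableUW} only adds triples on the agents that are unmatched in $M_1$. Since each triple of $M_{\triangle}$ is a triangle and the valuations are symmetric, every agent in such a triple has utility two, so the triple has utilitarian welfare six. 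Hence $M_{\triangle} \subseteq T_{\textrm{A}}^6$ and in particular $|M_{\triangle}| \le |T_{\textrm{A}}^6|$.

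Next I would bound $|T_{\textrm{opt}}^6|$ from above by $3|M_{\triangle}|$. Every triple in $T_{\textrm{opt}}^6$ has welfare six and, again by symmetry, must be a triangle in the underlying graph. Because $M_{\triangle}$ is a maximal triangle packing, no triangle of the underlying graph can be vertex-disjoint from $\bigcup M_{\triangle}$; otherwise it could be appended to $M_{\triangle}$, contradicting maximality. Thus each $t \in T_{\textrm{opt}}^6$ shares at least one agent with some triple of $M_{\triangle}$, and choosing such a triple for each $t$ defines a map $f : T_{\textrm{opt}}^6 \to M_{\triangle}$.

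The key counting step is to bound the size of each fibre of $f$. The triples in $T_{\textrm{opt}}^6$ are pairwise disjoint, being distinct triples of the matching $M_{\textrm{opt}}$, so any two triples mapped to the same $T \in M_{\triangle}$ must meet $T$ in distinct agents. As $T$ contains only three agents, at most three triples of $T_{\textrm{opt}}^6$ can map to $T$, giving $|f^{-1}(T)| \le 3$. Summing over $M_{\triangle}$ yields $|T_{\textrm{opt}}^6| \le 3|M_{\triangle}|$, and combining with the previous paragraph gives $|T_{\textrm{opt}}^6| \le 3|M_{\triangle}| \le 3|T_{\textrm{A}}^6|$, which rearranges to the claimed inequality.

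I expect the only genuine subtlety to be the two structural facts feeding the bound: that $M_{\triangle}$ persists unchanged in $M_{\textrm{A}}$ (so that its triangles are counted in $T_{\textrm{A}}^6$), and the careful use of \emph{maximality} rather than maximum cardinality of the triangle packing. Once these are in place, the fibre-counting argument is routine, and notably no properties of the stable $P$\nobreakdash-matching $M'$ or of Algorithm~\texttt{repair} are required for this lemma.
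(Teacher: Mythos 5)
Your proof is correct and follows essentially the same route as the paper: both arguments rest on the facts that every triple in $T_{\textrm{opt}}^6$ is a triangle and so, by maximality of the packing $M_{\triangle}$, must share an agent with a welfare-six triple of $M_{\textrm{A}}$, after which disjointness of the triples in $T_{\textrm{opt}}^6$ yields the factor-three bound. Your fibre-counting via the map $f : T_{\textrm{opt}}^6 \to M_{\triangle}$ is just a slightly more explicit bookkeeping of the paper's direct count of distinct agents in $\bigcup T_{\textrm{A}}^6$, with the persistence of $M_{\triangle}$ in $M_{\textrm{A}}$ (which the paper uses implicitly) spelled out.
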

\begin{proof}
Consider an arbitrary triple $\{ \alpha_{h_1}, \alpha_{h_2}, \alpha_{h_3} \} \in T_{\textrm{opt}}^6$. This triple is a triangle, meaning $\mathit{val}_{\alpha_{h_1}}(\alpha_{h_2})=\mathit{val}_{\alpha_{h_2}}(\alpha_{h_3})=\mathit{val}_{\alpha_{h_3}}(\alpha_{h_1})=1$. Recall that the first step of Algorithm~\texttt{findStable} involved selecting a maximal set of triangles. In the pseudocode description of Algorithm~\texttt{findStable}, we described this operation using Algorithm~\texttt{eliminateTriangles}, which we refer to here. Since  $\{ \alpha_{h_1}, \alpha_{h_2}, \alpha_{h_3} \}$ is a triangle in $(N, V)$, either Algorithm~\texttt{eliminateTriangles} selected this triple, and $\{ \alpha_{h_1}, \alpha_{h_2}, \alpha_{h_3} \} \in T_{\textrm{A}}^6$, or at least one of $\alpha_{h_1}, \alpha_{h_2}, \alpha_{h_3}$ was added to a different triple in $T_{\textrm{A}}^6$. In either case, any triple in $T_{\textrm{opt}}^6$ contains at least one agent that belongs to some triple in $T_{\textrm{A}}^6$. Triples in $T_{\textrm{A}}^6$ are disjoint, hence the number of agents in triples in $T_{\textrm{A}}^6$ is at least $|T_{\textrm{opt}}^6|$. It follows that $|T_{\textrm{A}}^6| \geq |T_{\textrm{opt}}^6|/3$.
\end{proof}

In Lemma~\ref{lem:3dsrsasbin_no000exists_lem} we consider the case when $T_{\textrm{A}}^0=\varnothing$.

\begin{lem}
\label{lem:3dsrsasbin_no000exists_lem}
If $T_{\textrm{A}}^0=\varnothing$ then $2{u(M_{\textrm{A}})} \geq u(M_\textrm{opt})$.
\end{lem}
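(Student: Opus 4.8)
The plan is to collapse the four-way comparison of welfare contributions into a single inequality relating the numbers of ``full'' triangles in the two matchings, at which point the previously established Lemma~\ref{lem:3dsrsasbin_tau_a_geq_tau_opt_over_3} finishes the job. Throughout I abbreviate $a_y = |T_{\textrm{A}}^y|$ and $o_y = |T_{\textrm{opt}}^y|$ for $y \in \{0,2,4,6\}$, and I work from the welfare expressions in Equations~\eqref{eq:welfareofmopt} and~\eqref{eq:welfareofma}.

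First I would bound $u(M_{\textrm{A}})$ from below and $u(M_{\textrm{opt}})$ from above, each in terms of the count of weight-$6$ triples. Since $T_{\textrm{A}}^0 = \varnothing$ by hypothesis and $|M_{\textrm{A}}| = k$ by Lemma~\ref{lem:3dsrsasbin_ma_is_complete}, we have $a_6 + a_4 + a_2 = k$, so relaxing the coefficients $4$ and $2$ down to $2$ gives
\[
u(M_{\textrm{A}}) \;=\; 6a_6 + 4a_4 + 2a_2 \;\geq\; 6a_6 + 2(a_4 + a_2) \;=\; 4a_6 + 2k.
\]
Symmetrically, from $|M_{\textrm{opt}}| = k$ (Proposition~\ref{prop:completematching}) we have $o_4 + o_2 \leq k - o_6$, so relaxing the coefficient $2$ up to $4$ gives
\[
u(M_{\textrm{opt}}) \;=\; 6o_6 + 4o_4 + 2o_2 \;\leq\; 6o_6 + 4(o_4 + o_2) \;\leq\; 2o_6 + 4k.
\]

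Combining these two bounds yields $2u(M_{\textrm{A}}) \geq 8a_6 + 4k$, so it suffices to verify $8a_6 + 4k \geq 2o_6 + 4k$, i.e.\ $4a_6 \geq o_6$. This is precisely where Lemma~\ref{lem:3dsrsasbin_tau_a_geq_tau_opt_over_3} is used: it supplies $a_6 \geq o_6/3$, whence $4a_6 \geq \tfrac{4}{3}o_6 \geq o_6$, and the claim follows.

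I do not expect a genuine obstacle here, since the hypothesis $T_{\textrm{A}}^0 = \varnothing$ is strong. The one point requiring care is the direction of the two coefficient relaxations: welfare from the weight-$2$ triples of $M_{\textrm{A}}$ must be discarded down to $2$ per triple, while the weight-$2$ triples of $M_{\textrm{opt}}$ are inflated up to $4$ per triple. Together with the count identity $a_6 + a_4 + a_2 = k$ (which relies on $T_{\textrm{A}}^0 = \varnothing$) and the bound $o_6 + o_4 + o_2 \leq k$, this is exactly what reduces the statement to $4a_6 \geq o_6$, so that the factor-$\tfrac{1}{3}$ loss in Lemma~\ref{lem:3dsrsasbin_tau_a_geq_tau_opt_over_3} is comfortably absorbed by the factor $4$ arising from the target inequality $2u(M_{\textrm{A}}) \geq u(M_{\textrm{opt}})$.
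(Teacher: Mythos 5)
Your proposal is correct and is essentially the paper's own proof: the same decomposition by triple welfare, the same two coefficient relaxations giving $u(M_{\textrm{A}}) \geq 4|T_{\textrm{A}}^6| + 2k$ and $u(M_{\textrm{opt}}) \leq 2|T_{\textrm{opt}}^6| + 4k$, and the same final appeal to Lemma~\ref{lem:3dsrsasbin_tau_a_geq_tau_opt_over_3} (via Lemma~\ref{lem:3dsrsasbin_ma_is_complete} and Proposition~\ref{prop:completematching} for the triple counts). The only cosmetic difference is that you isolate the reduction to $4|T_{\textrm{A}}^6| \geq |T_{\textrm{opt}}^6|$ before invoking the lemma, whereas the paper substitutes the lemma's bound directly into the welfare inequality; the arithmetic is identical.
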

\begin{proof}
We start by presenting an upper bound on $|T_{\textrm{opt}}^4| + |T_{\textrm{opt}}^2|$ in terms of $k$ and $|T_{\textrm{opt}}^6|$. Recall that $|M_{\textrm{opt}}| = k$ by Proposition~\ref{prop:completematching}.
\begin{align}
    |T_{\textrm{opt}}^6| + |T_{\textrm{opt}}^4| + |T_{\textrm{opt}}^2| + |T_{\textrm{opt}}^0| &= |M_{\textrm{opt}}| = k && \mbox{by Equation~\ref{eq:moptcomposition}} \nonumber\\
    |T_{\textrm{opt}}^6| + |T_{\textrm{opt}}^4| + |T_{\textrm{opt}}^2| &\leq k \nonumber\\
    |T_{\textrm{opt}}^4| + |T_{\textrm{opt}}^2| &\leq k - |T_{\textrm{opt}}^6| \label{eq:maxsizeofr4optr2opt}\enspace.
\end{align}

We now place an upper bound on $u(M_\textrm{opt})$ only in terms of $|T_{\textrm{opt}}^6|$ and $k$.
\begin{align}
u(M_{\textrm{opt}}) &= 6|T_{\textrm{opt}}^6| + 4|T_{\textrm{opt}}^4| + 2|T_{\textrm{opt}}^2| && \mbox{(Equation~\ref{eq:welfareofmopt})} \nonumber\\
&\leq 6|T_{\textrm{opt}}^6| + 4(|T_{\textrm{opt}}^4| + |T_{\textrm{opt}}^2|) \nonumber\\
&\leq 6|T_{\textrm{opt}}^6| + 4(k - |T_{\textrm{opt}}^6|) && \mbox{by Inequality~\ref{eq:maxsizeofr4optr2opt}} \nonumber\\
&\leq 6|T_{\textrm{opt}}^6| + 4k - 4|T_{\textrm{opt}}^6| \nonumber\\
&\leq 2|T_{\textrm{opt}}^6| + 4k \label{eq:lowerboundumopt}\enspace.
\end{align}

Considering $M_{\textrm{A}}$, the following equalities hold:
\begin{align}
    |T_{\textrm{A}}^6| + |T_{\textrm{A}}^4| + |T_{\textrm{A}}^2| + |T_{\textrm{A}}^0| &= |M_{\textrm{A}}|  && \mbox{by Equation~\ref{eq:macomposition}} \nonumber \\
    |T_{\textrm{A}}^6| + |T_{\textrm{A}}^4| + |T_{\textrm{A}}^2| &= |M_{\textrm{A}}| && \mbox{since $|T_{\textrm{A}}^0|=\varnothing$} \nonumber\\
    |T_{\textrm{A}}^6| + |T_{\textrm{A}}^4| + |T_{\textrm{A}}^2| &= k && \mbox{by Lemma~\ref{lem:3dsrsasbin_ma_is_complete}} \nonumber\\
    |T_{\textrm{A}}^4| + |T_{\textrm{A}}^2| &= k - |T_{\textrm{A}}^6|\enspace. \label{eq:sizeofr4ar2a}
\end{align}

Placing a lower bound on $u(M_{\textrm{A}})$,
\begin{align}
    u(M_{\textrm{A}}) &= 6|T_{\textrm{A}}^6| + 4|T_{\textrm{A}}^4| + 2|T_{\textrm{A}}^2| && \mbox{(Equation~\ref{eq:welfareofma})} \nonumber\\
    &\geq 6|T_{\textrm{A}}^6| + 2(|T_{\textrm{A}}^4| + |T_{\textrm{A}}^2|) \nonumber\\
    &\geq 6|T_{\textrm{A}}^6| + 2(k - |T_{\textrm{A}}^6|) && \mbox{by Equation~\ref{eq:sizeofr4ar2a}} \nonumber\\
    &\geq 6|T_{\textrm{A}}^6| + 2k - 2|T_{\textrm{A}}^6| \nonumber\\
    &\geq 4|T_{\textrm{A}}^6| + 2k\enspace. \label{eq:upperbounduma}
\end{align}
By Lemma~\ref{lem:3dsrsasbin_tau_a_geq_tau_opt_over_3} and Inequality~\ref{eq:upperbounduma} we obtain the following lower bound for  $u(M_{\textrm{A}})$ in terms of $|T_{\textrm{opt}}^6|$ and $k$:
\begin{align}
    u(M_{\textrm{A}}) &\geq 4|T_{\textrm{A}}^6| + 2k && \mbox{(Inequality~\ref{eq:upperbounduma})} \nonumber\\
    &\geq 4(|T_{\textrm{opt}}^6|/3) + 2k && \mbox{by Lemma~\ref{lem:3dsrsasbin_tau_a_geq_tau_opt_over_3}} \nonumber\\
    &\geq 4|T_{\textrm{opt}}^6|/3 + 2k\enspace. \label{eq:final}
\end{align}
Thus, by Inequality~\ref{eq:final}:
\begin{align}
    2u(M_{\textrm{A}}) &\geq 8|T_{\textrm{opt}}^6|/3 + 4k \nonumber\\
    &\geq 2|T_{\textrm{opt}}^6| + 4k \nonumber\\
    &\geq u(M_{\textrm{opt}}) && \mbox{by Inequality~\ref{eq:lowerboundumopt}.} \nonumber
\end{align}
\end{proof}

We now consider the case when $|T_{\textrm{A}}^0|>0$.
\begin{lem}
\label{lem:3dsrsasbin_approx_if000thentisnotcomplete}
If $|T_{\textrm{A}}^0|>0$ then $|Y|<\lfloor |U|/3 \rfloor$.
\end{lem}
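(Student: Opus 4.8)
The plan is to prove the contrapositive: if $|Y|\geq\lfloor|U|/3\rfloor$ then $T_{\textrm{A}}^0=\varnothing$. Recall that $M_{\textrm{A}}=M_1\cup M_2$, where $M_1$ is the stable $P$\nobreakdash-matching returned by \texttt{findStable} and each triple of $M_2$ has the form $t_i=x_i\cup\{z_i\}$ with $x_i\in X$ a pair of agents and $z_i\in Z$ a single agent. I would first observe that no triple of $M_1$ can have utilitarian welfare $0$. Since $M_1$ is a $P$\nobreakdash-matching, every matched agent has positive utility, so within each triple of $M_1$ every one of the three agents is incident to an edge of the underlying graph; covering three vertices this way forces at least two edges, hence welfare at least $4$. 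Consequently every triple contributing to $T_{\textrm{A}}^0$ must belong to $M_2$.

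Next I would analyse an arbitrary triple $t_i=x_i\cup\{z_i\}\in M_2$ and show that if the pair $x_i$ is an edge of the underlying graph then $u_{t_i}(M_{\textrm{A}})\geq 2$. Indeed, if $x_i=\{\alpha_p,\alpha_q\}$ with $\mathit{val}_{\alpha_p}(\alpha_q)=\mathit{val}_{\alpha_q}(\alpha_p)=1$ (using symmetry), then $\alpha_p$ and $\alpha_q$ each gain utility at least $1$ from their mutual partner, so the welfare of $t_i$ is at least $2>0$. Thus any welfare-$0$ triple in $M_2$ must use a pair $x_i$ that is a \emph{non-edge} of the underlying graph.

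The key step is then to trace the case distinction in the algorithm. In the branch $|Y|\geq\lfloor|U|/3\rfloor$, the algorithm sets $X$ to be a subset of $Y$. Since $Y$ is a matching in the underlying graph, every pair in $Y$, and hence every pair in $X$, is an edge. By the previous paragraph every triple of $M_2$ then has welfare at least $2$, so together with the first observation we conclude $T_{\textrm{A}}^0=\varnothing$. This establishes the contrapositive. Equivalently, a welfare-$0$ triple can only arise if $X$ contains a pair drawn from $W$ (the only possible source of non-edge pairs), and $W$ is constructed solely in the branch $|Y|<\lfloor|U|/3\rfloor$.

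I do not expect a genuine obstacle, as the argument is entirely structural. The one point requiring care is the opening observation — confirming that the $P$\nobreakdash-matching property of $M_1$ rules out welfare-$0$ (indeed even welfare-$2$) triples — since this is what pins every zero-welfare triple to $M_2$ and ultimately forces the algorithm into the branch that creates $W$.
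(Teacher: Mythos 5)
Your proof is correct and follows essentially the same route as the paper's: prove the contrapositive, note that when $|Y|\geq\lfloor|U|/3\rfloor$ the algorithm takes $X\subseteq Y$ so every pair in $X$ is an edge of the underlying graph, giving each triple of $M_2$ welfare at least $2$, while the $P$\nobreakdash-matching property of $M_1$ handles the remaining triples. Your additional observations (welfare at least $4$ for $M_1$ triples, and $W$ being the only source of non-edge pairs) are sound but not needed, since welfare at least $2$ already suffices.
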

\begin{proof}
We prove the contrapositive. Suppose $|Y| \geq \lfloor |U|/3 \rfloor$. By construction, $X\subseteq Y$ is a set of pairs where $\mathit{val}_{\alpha_p}(\alpha_q)=1$ for each pair $\{\alpha_p, \alpha_q\} \in X$. It follows that each triple in $M_2$ contains two agents $\alpha_p, \alpha_q$ for which $\{ \alpha_p, \alpha_q \} \in X$ and hence $\mathit{val}_{\alpha_p}(\alpha_q)=1$. We thus obtain $u_{t}(M_{\textrm{A}})\geq 2$ for any triple $t \in M_2$. Since $M_1$ is a $P$\nobreakdash-matching, it also holds that $u_{t}(M_{\textrm{A}}) \geq 2$ for any $t \in M_1$. This shows that $|T_{\textrm{A}}^0|=\varnothing$.
\end{proof}

\begin{lem}
\label{lem:3dsrsasbin_approx_if000theneveryagentintgets1}
If $|T_{\textrm{A}}^0|>0$ then $u_{\alpha_p}(M_{\textrm{A}})\geq 1$ for any $\alpha_p \in \bigcup Y$.
\end{lem}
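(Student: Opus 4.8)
The plan is to reduce the statement entirely to the branch structure of Algorithm~\texttt{findStableUW}, together with the construction of $M_2$. First I would invoke Lemma~\ref{lem:3dsrsasbin_approx_if000thentisnotcomplete}: under the hypothesis $|T_{\textrm{A}}^0|>0$ it yields $|Y| < \lfloor |U|/3 \rfloor$, so the algorithm necessarily executes the \texttt{else} branch and sets $X = Y \cup W$. The crucial consequence is that $Y \subseteq X$, i.e.\ every pair selected by the maximum $2$-dimensional matching survives into $X$ rather than being discarded.

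Next I would fix an arbitrary $\alpha_p \in \bigcup Y$. Since $Y$ is a matching in the underlying graph $G$, the agent $\alpha_p$ lies in a unique pair $\{\alpha_p, \alpha_q\} \in Y$, and because $Y \subseteq X$ this same pair equals some $x_i \in X$. By the construction of $M_2$, the triple of $M_{\textrm{A}}$ containing $\alpha_p$ is exactly $x_i \cup \{z_i\}$, which therefore contains both $\alpha_p$ and its $Y$-partner $\alpha_q$. Finally, since $\{\alpha_p, \alpha_q\} \in Y$ is an edge of $G$, we have $\mathit{val}_{\alpha_p}(\alpha_q)=1$; as valuations are binary and non-negative, $u_{\alpha_p}(M_{\textrm{A}}) = u_{\alpha_p}(x_i \cup \{z_i\}) \geq \mathit{val}_{\alpha_p}(\alpha_q)=1$, which is the desired bound.

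The only genuine subtlety — and the step I would be most careful to justify — is the appeal to Lemma~\ref{lem:3dsrsasbin_approx_if000thentisnotcomplete} to guarantee $Y \subseteq X$. In the \texttt{if} branch, where $X$ is merely \emph{some} $\lfloor |U|/3 \rfloor$ elements of $Y$, an agent of $\bigcup Y$ could be dropped into $Z$ (or, if $|Y|$ were large, even left unmatched), and the conclusion would fail for that agent. The hypothesis $|T_{\textrm{A}}^0|>0$ is precisely what forces the \texttt{else} branch and hence $Y \subseteq X$; once that is established, the remainder is routine bookkeeping over the definitions of $X$, $Z$, and $M_2$ and requires no further case analysis.
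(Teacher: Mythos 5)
Your proof is correct and follows essentially the same route as the paper's: both invoke Lemma~\ref{lem:3dsrsasbin_approx_if000thentisnotcomplete} to get $|Y| < \lfloor |U|/3 \rfloor$, conclude that the pair $\{\alpha_p,\alpha_q\}\in Y$ containing $\alpha_p$ survives into $X$, and then read off $u_{\alpha_p}(M_{\textrm{A}})\geq \mathit{val}_{\alpha_p}(\alpha_q)=1$ from the construction of $M_2$. Your explicit remark that the hypothesis is needed precisely to force the \texttt{else} branch (so that $Y\subseteq X$) is a correct reading of the same subtlety the paper's proof handles implicitly.
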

\begin{proof}
Suppose $|T_{\textrm{A}}^0|>0$. Consider an arbitrary $\alpha_p \in \bigcup Y$. It follows that some $\alpha_q\in N$ exists where $\{ \alpha_p, \alpha_q \}\in Y$ and hence $\mathit{val}_{\alpha_p}(\alpha_q)=1$, by the definition of $Y$. 

By Lemma~\ref{lem:3dsrsasbin_approx_if000thentisnotcomplete}, $|Y| < \lfloor |U|/3 \rfloor$. It follows that $\{ \alpha_p, \alpha_q \} \in X$. It follows that there exists some $i$ where $1 \leq i \leq \lfloor |U|/3 \rfloor$ for which $X_i = \{ \alpha_p, \alpha_q \}$ and hence, by construction of $M_2$, the triple $x_i \cup \{ z_i \}$ belongs to $M_2$. It follows that $\alpha_q\in M_2(\alpha_p)$ and hence $u_{\alpha_p}(M_{\textrm{A}})\geq 1$.
\end{proof}

\begin{lem}
\label{lem:3dsrsasbin_000exists_stlemma}
Suppose $|T_{\textrm{A}}^0|>0$. For any $\alpha_r, \alpha_s \in N$, if $\mathit{val}_{\alpha_r}(\alpha_s)=1$ then $u_{\{ \alpha_r, \alpha_s\}}(M_{\textrm{A}})\geq 1$.
\end{lem}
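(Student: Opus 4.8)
The plan is to split into two cases according to whether $\alpha_r$ and $\alpha_s$ are matched by $M_1$ or belong to the set $U$ of agents left unmatched by $M_1$. Throughout I would use the fact that $M_{\textrm{A}} = M_1 \cup M_2$, where the triples of $M_1$ and $M_2$ are disjoint (as $M_2$ only matches agents of $U$). Consequently any agent $\alpha$ matched in $M_1$ occupies the same triple in $M_{\textrm{A}}$, so $u_{\alpha}(M_{\textrm{A}}) = u_{\alpha}(M_1)$; and since $M_1$ is a $P$\nobreakdash-matching, every such $\alpha$ satisfies $u_{\alpha}(M_1) \geq 1$.

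First I would dispose of the easy case. If at least one of $\alpha_r, \alpha_s$ is matched in $M_1$, then that agent $\alpha$ satisfies $u_{\alpha}(M_{\textrm{A}}) = u_{\alpha}(M_1) \geq 1$ by the remark above, and hence $u_{\{ \alpha_r, \alpha_s \}}(M_{\textrm{A}}) \geq 1$ immediately, with no use of the hypothesis $|T_{\textrm{A}}^0| > 0$.

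The substantive case is when both $\alpha_r, \alpha_s \in U$. Here the hypotheses $\mathit{val}_{\alpha_r}(\alpha_s) = 1$ and $|T_{\textrm{A}}^0| > 0$ come into play. Since preferences are symmetric and $\mathit{val}_{\alpha_r}(\alpha_s) = 1$, the pair $\{ \alpha_r, \alpha_s \}$ is an edge of the underlying graph restricted to $U$, that is, of the very subgraph in which the maximum cardinality matching $Y$ is computed by \texttt{maximum2DMatching}. The crucial observation is that a maximum matching cannot leave both endpoints of an edge exposed: if neither $\alpha_r$ nor $\alpha_s$ belonged to $\bigcup Y$, then $Y \cup \{ \{ \alpha_r, \alpha_s \} \}$ would be a strictly larger matching in that subgraph, contradicting the maximality of $Y$. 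Hence at least one of $\alpha_r, \alpha_s$, say $\alpha_r$, lies in $\bigcup Y$. Invoking Lemma~\ref{lem:3dsrsasbin_approx_if000theneveryagentintgets1}, whose hypothesis $|T_{\textrm{A}}^0| > 0$ holds by assumption, yields $u_{\alpha_r}(M_{\textrm{A}}) \geq 1$, and therefore $u_{\{ \alpha_r, \alpha_s \}}(M_{\textrm{A}}) \geq u_{\alpha_r}(M_{\textrm{A}}) \geq 1$.

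The main (and essentially only) delicate point is the appeal to the maximality of $Y$ in the both-unmatched subcase: this is precisely where the structure of the algorithm's 2D matching step is exploited, converting the global assumption $|T_{\textrm{A}}^0| > 0$ into a per-agent utility guarantee via the earlier lemma. Everything else is routine bookkeeping about utilities being preserved between $M_1$ and $M_{\textrm{A}}$.
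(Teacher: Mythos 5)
Your proposal is correct and uses essentially the same ingredients as the paper's proof: the $P$\nobreakdash-matching property of $M_1$, the maximality of $Y$ (an edge of the subgraph induced by $U$ cannot have both endpoints exposed, else $Y$ could be augmented), and Lemma~\ref{lem:3dsrsasbin_approx_if000theneveryagentintgets1} to convert membership in $\bigcup Y$ into $u_{\alpha}(M_{\textrm{A}})\geq 1$. The only difference is presentational: the paper argues by contradiction (assuming $u_{\{\alpha_r,\alpha_s\}}(M_{\textrm{A}})=0$ and deducing $\alpha_r,\alpha_s\in U\setminus\bigcup Y$), whereas you run the same argument directly as a case split, so there is no gap.
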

\begin{proof}
Suppose $|T_{\textrm{A}}^0|>0$.

Suppose for a contradiction that some $\alpha_r, \alpha_s \in N$ exist where $\mathit{val}_{\alpha_r}(\alpha_s)=1$ and $u_{\{ \alpha_r, \alpha_s\}}(M_{\textrm{A}})=0$. It follows that $u_{\alpha_r}(M_{\textrm{A}}) = u_{\alpha_s}(M_{\textrm{A}})=0$. It follows, by the definition of a $P$\nobreakdash-matching, that $\alpha_r, \alpha_s$ are unmatched in $M_1$ and hence $\alpha_r, \alpha_s \in U$. By Lemma~\ref{lem:3dsrsasbin_approx_if000theneveryagentintgets1} it follows that $\alpha_r \notin \bigcup Y$ and $\alpha_s \notin \bigcup Y$. It follows that $Y'=Y \cup \{ \alpha_r, \alpha_s \}$ is a disjoint set of pairs of agents in $U$ where $\mathit{val}_{\alpha_p}(\alpha_q)=1$ for each pair $\{ \alpha_p, \alpha_q \} \in Y'$. Since $|Y'|>|Y|$, this contradicts the maximality of $Y$ returned by Algorithm~\texttt{maximum2DMatching}. It follows that no such $\alpha_r, \alpha_s$ exist where $\mathit{val}_{\alpha_r}(\alpha_s)=1$ and $u_{\{ \alpha_r, \alpha_s\}}(M_{\textrm{A}})=0$.
\end{proof}

\begin{lem}
\label{lem:3dsrsasbin_some000exists_r1_lem}
If $|T_{\textrm{A}}^0|>0$ then $u_t(M_{\textrm{A}})\geq 3$ for any $t\in T_{\textrm{opt}}^6$.
\end{lem}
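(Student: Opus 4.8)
The plan is to combine two independent lower bounds on the utilities of the three agents of a triple $t=\{\alpha_{h_1},\alpha_{h_2},\alpha_{h_3}\}\in T_{\textrm{opt}}^6$: one supplied by the stability of $M_{\textrm{A}}$, the other by Lemma~\ref{lem:3dsrsasbin_000exists_stlemma}. First I would record the structural consequence of $t$ lying in $T_{\textrm{opt}}^6$: since $u_t(M_{\textrm{opt}})=6$ and valuations are symmetric and binary, $t$ must be a triangle, so $\mathit{val}_{\alpha_{h_1}}(\alpha_{h_2})=\mathit{val}_{\alpha_{h_2}}(\alpha_{h_3})=\mathit{val}_{\alpha_{h_3}}(\alpha_{h_1})=1$. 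Hence each agent in $t$ would derive utility exactly $2$ from the other two, i.e.\ $u_{\alpha_{h_i}}(t\setminus\{\alpha_{h_i}\})=2$ for every $i\in\{1,2,3\}$.

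The first key step uses stability. Because $M_{\textrm{A}}$ is the stable matching returned by Algorithm~\texttt{findStableUW}, the triple $t$ cannot block $M_{\textrm{A}}$. As each agent would gain utility $2$ inside $t$, the non-blocking condition forces at least one agent $\alpha_{h_i}$ with $u_{\alpha_{h_i}}(M_{\textrm{A}})\geq 2$; otherwise all three would have utility at most $1$ in $M_{\textrm{A}}$ and $t$ would block it, a contradiction. The second key step invokes Lemma~\ref{lem:3dsrsasbin_000exists_stlemma}, which applies precisely because $|T_{\textrm{A}}^0|>0$. Applying it to each of the three edges of the triangle shows that no two agents of $t$ can simultaneously have utility $0$ in $M_{\textrm{A}}$, so at most one agent of $t$ has utility $0$ and therefore at least two agents of $t$ have utility at least $1$.

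To finish I would reconcile the two bounds. The agent guaranteed utility $\geq 2$ is itself one of the agents with utility $\geq 1$, so among the remaining two there is still at least one further agent with utility $\geq 1$. Adding the contributions gives $u_t(M_{\textrm{A}})\geq 2+1+0 = 3$, as required.

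The main obstacle is that neither ingredient alone suffices. Lemma~\ref{lem:3dsrsasbin_000exists_stlemma} by itself only rules out two zeros among the three agents and hence only yields $u_t(M_{\textrm{A}})\geq 2$, a bound that is genuinely attained by the utility profile $(1,1,0)$; stability alone only guarantees a single agent of utility $\geq 2$ and says nothing about the others. The crux is that these two lower bounds constrain overlapping but not identical subsets of agents, and it is exactly the interaction between them — the $\geq 2$ agent coincides with one of the two $\geq 1$ agents, leaving room for precisely one more unit of utility — that upgrades the bound from $2$ to the required $3$.
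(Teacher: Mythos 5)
Your proof is correct and takes essentially the same route as the paper: stability of $M_{\textrm{A}}$ forces one agent of the triangle to have utility $2$, and Lemma~\ref{lem:3dsrsasbin_000exists_stlemma} supplies the extra unit from the remaining pair, giving $u_t(M_{\textrm{A}})\geq 2+1=3$. The only cosmetic difference is that you apply the edge lemma to all three edges of the triangle to rule out two zeros, whereas the paper applies it once, directly to the edge joining the two agents other than the utility-$2$ agent.
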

\begin{proof}
Suppose $|T_{\textrm{A}}^0|>0$. Consider an arbitrary $\{ \alpha_{h_1}, \alpha_{h_2}, \alpha_{h_3} \} \in T_{\textrm{opt}}^6$. By definition, $\mathit{val}_{\alpha_{h_1}}(\alpha_{h_2})=\mathit{val}_{\alpha_{h_2}}(\alpha_{h_3})=\mathit{val}_{\alpha_{h_3}}(\alpha_{h_1})=1$. Since $M_\textrm{A}$ is a stable matching, the triple $\{ \alpha_{h_1}, \alpha_{h_2}, \alpha_{h_3} \}$ does not block $M_\textrm{A}$. It follows that at least one of the following holds: $u_{\alpha_{h_1}}(M_{\textrm{A}})=2$, $u_{\alpha_{h_2}}(M_{\textrm{A}})=2$, or $u_{\alpha_{h_3}}(M_{\textrm{A}})=2$. Suppose without loss of generality that $u_{\alpha_{h_1}}(M_{\textrm{A}})=2$. By Lemma~\ref{lem:3dsrsasbin_000exists_stlemma}, it must be that $u_{\{ \alpha_{h_2}, \alpha_{h_3}\}}(M_{\textrm{A}})\geq 1$. In total, $u_{\{ \alpha_{h_1}, \alpha_{h_2},\allowbreak \alpha_{h_3} \}}(M_{\textrm{A}}) \geq 3$.
\end{proof}

\begin{lem}
\label{lem:3dsrsasbin_some000exists_r2_lem}
If $|T_{\textrm{A}}^0|>0$ then $u_t(M_{\textrm{A}})\geq 2$ for any $t\in T_{\textrm{opt}}^4$.
\end{lem}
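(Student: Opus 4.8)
The plan is to mirror the argument of Lemma~\ref{lem:3dsrsasbin_some000exists_r1_lem}, exploiting the structure of a welfare-$4$ triple together with the stability of $M_{\textrm{A}}$ and Lemma~\ref{lem:3dsrsasbin_000exists_stlemma}. First I would observe that, since valuations are binary and symmetric, a triple $t=\{ \alpha_{h_1}, \alpha_{h_2}, \alpha_{h_3} \}\in T_{\textrm{opt}}^4$ contains exactly two edges of the underlying graph, because each edge contributes $2$ to the utilitarian welfare. Any two distinct edges on three vertices share a common vertex, so relabelling if necessary I may assume this common vertex is $\alpha_{h_2}$, giving $\mathit{val}_{\alpha_{h_1}}(\alpha_{h_2}) = \mathit{val}_{\alpha_{h_2}}(\alpha_{h_3}) = 1$ and $\mathit{val}_{\alpha_{h_1}}(\alpha_{h_3}) = 0$. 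In other words $t$ is a path with centre $\alpha_{h_2}$.

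Next I would use that $M_{\textrm{A}}$ is stable, so $t$ does not block it. Computing the within-$t$ utilities gives $u_{\alpha_{h_2}}(\{ \alpha_{h_1}, \alpha_{h_3} \}) = 2$ and $u_{\alpha_{h_1}}(\{ \alpha_{h_2}, \alpha_{h_3} \}) = u_{\alpha_{h_3}}(\{ \alpha_{h_1}, \alpha_{h_2} \}) = 1$. Since $t$ does not block $M_{\textrm{A}}$, at least one of its agents fails to strictly improve, which yields three cases: either $u_{\alpha_{h_2}}(M_{\textrm{A}}) \geq 2$, or $u_{\alpha_{h_1}}(M_{\textrm{A}}) \geq 1$, or $u_{\alpha_{h_3}}(M_{\textrm{A}}) \geq 1$.

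The first case is immediate, since $u_t(M_{\textrm{A}}) \geq u_{\alpha_{h_2}}(M_{\textrm{A}}) \geq 2$. For the second case I would apply Lemma~\ref{lem:3dsrsasbin_000exists_stlemma} to the edge $\{ \alpha_{h_2}, \alpha_{h_3} \}$ (which is valid because $|T_{\textrm{A}}^0|>0$ and $\mathit{val}_{\alpha_{h_2}}(\alpha_{h_3})=1$) to obtain $u_{\{ \alpha_{h_2}, \alpha_{h_3} \}}(M_{\textrm{A}}) \geq 1$; combining this with $u_{\alpha_{h_1}}(M_{\textrm{A}}) \geq 1$ gives $u_t(M_{\textrm{A}}) = u_{\alpha_{h_1}}(M_{\textrm{A}}) + u_{\{ \alpha_{h_2}, \alpha_{h_3} \}}(M_{\textrm{A}}) \geq 2$. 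The third case is symmetric, instead applying Lemma~\ref{lem:3dsrsasbin_000exists_stlemma} to the edge $\{ \alpha_{h_1}, \alpha_{h_2} \}$ and adding $u_{\alpha_{h_3}}(M_{\textrm{A}}) \geq 1$.

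I do not anticipate a genuine obstacle, as the argument is short. The only point requiring care is the combinatorial observation that a welfare-$4$ triple is always a path with a distinguished centre: this guarantees that in the second and third cases the "saved" endpoint is disjoint from the edge to which Lemma~\ref{lem:3dsrsasbin_000exists_stlemma} is applied, so that the two separate lower bounds of $1$ genuinely partition the utility of $t$ and may be summed to obtain $u_t(M_{\textrm{A}}) \geq 2$.
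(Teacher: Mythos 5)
Your proof is correct and uses the same essential ingredients as the paper's: the stability of $M_{\textrm{A}}$ applied to the triple $t$, combined with Lemma~\ref{lem:3dsrsasbin_000exists_stlemma} applied to an edge of the path. The only difference is organisational — you argue directly by cases on which agent of $t$ fails to improve, whereas the paper argues by contradiction on the value of $u_t(M_{\textrm{A}})\in\{0,1\}$ and applies Lemma~\ref{lem:3dsrsasbin_000exists_stlemma} to both edges to pin down the profile $(0,1,0)$ — so this is essentially the same proof in contrapositive form.
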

\begin{proof}
Suppose $|T_{\textrm{A}}^0|>0$. Consider an arbitrary $\{ \alpha_{h_1}, \alpha_{h_2}, \alpha_{h_3} \} \in T_{\textrm{opt}}^4$ where $\mathit{val}_{\alpha_{h_1}}(\alpha_{h_2}) = \mathit{val}_{\alpha_{h_2}}(\alpha_{h_3}) = 1$ and $\mathit{val}_{\alpha_{h_1}}(\alpha_{h_3})=0$. Suppose for a contradiction that $u_{\{ \alpha_{h_1}, \alpha_{h_2}, \alpha_{h_3} \}}(M_{\textrm{A}}) < 2$.

If $u_{\{ \alpha_{h_1}, \alpha_{h_2}, \alpha_{h_3} \}}(M_{\textrm{A}}) = 0$, then $\{ \alpha_{h_1}, \alpha_{h_2}, \alpha_{h_3} \}$ blocks $M_{\textrm{A}}$ in $(N,V)$. It must be that $u_{\{ \alpha_{h_1}, \alpha_{h_2}, \alpha_{h_3} \}}(M_{\textrm{A}})=1$. By Lemma~\ref{lem:3dsrsasbin_000exists_stlemma}, it must be that $u_{\{ \alpha_{h_1}, \alpha_{h_2}\}}(M_{\textrm{A}})\geq 1$ and also that $u_{\{ \alpha_{h_2}, \alpha_{h_3}\}}(M_{\textrm{A}})\geq 1$. It follows that $u_{\alpha_{h_1}}(M_{\textrm{A}}) = u_{\alpha_{h_3}}(M_{\textrm{A}}) = 0$ and $u_{\alpha_{h_2}}(M_{\textrm{A}}) = 1$. In this case, $\{ \alpha_{h_1}, \alpha_{h_2}, \alpha_{h_3} \}$ blocks $M_{\textrm{A}}$ in $(N, V)$, which is a contradiction. It follows that $u_{\{ \alpha_{h_1}, \alpha_{h_2}, \alpha_{h_3} \}}(M_{\textrm{A}}) \geq 2$.
\end{proof}

\begin{lem}
\label{lem:3dsrsasbin_some000exists_r3_lem}
If $|T_{\textrm{A}}^0|>0$ then $u_t(M_{\textrm{A}})\geq 1$ for any $t\in T_{\textrm{opt}}^2$.
\end{lem}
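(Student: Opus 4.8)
The plan is to exploit the fact that, under binary symmetric valuations, a triple's welfare counts its internal edges twice, so $u_t(M_{\textrm{opt}})=2$ pins down the structure of any $t\in T_{\textrm{opt}}^2$ almost completely. First I would write $t=\{\alpha_{h_1},\alpha_{h_2},\alpha_{h_3}\}$ and observe that welfare exactly two forces exactly one of the three internal valuations to equal one; without loss of generality $\mathit{val}_{\alpha_{h_1}}(\alpha_{h_2})=1$ while $\mathit{val}_{\alpha_{h_1}}(\alpha_{h_3})=\mathit{val}_{\alpha_{h_2}}(\alpha_{h_3})=0$. This identification of the unique edge is the only structural work needed.

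The heart of the argument is then a single application of Lemma~\ref{lem:3dsrsasbin_000exists_stlemma}. Assuming $|T_{\textrm{A}}^0|>0$, I would apply that lemma to the pair $\{\alpha_{h_1},\alpha_{h_2}\}$, for which $\mathit{val}_{\alpha_{h_1}}(\alpha_{h_2})=1$, to conclude $u_{\{\alpha_{h_1},\alpha_{h_2}\}}(M_{\textrm{A}})\geq 1$. Since $u_{\alpha_{h_3}}(M_{\textrm{A}})\geq 0$ holds trivially for any agent, and since $u_t(M_{\textrm{A}})=u_{\{\alpha_{h_1},\alpha_{h_2}\}}(M_{\textrm{A}})+u_{\alpha_{h_3}}(M_{\textrm{A}})$ by definition of utilitarian welfare of a set, it follows immediately that $u_t(M_{\textrm{A}})\geq 1$, as required.

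I do not expect any real obstacle here: this case is the easiest of the three (cf.\ Lemmas~\ref{lem:3dsrsasbin_some000exists_r1_lem} and~\ref{lem:3dsrsasbin_some000exists_r2_lem}), and unlike those it does not even require a blocking-triple argument invoking the stability of $M_{\textrm{A}}$ directly — Lemma~\ref{lem:3dsrsasbin_000exists_stlemma} already packages the needed consequence of the maximality of the matching $Y$. The only point to get right is the welfare-to-edge count that isolates the single reciprocated valuation, after which the lemma applies verbatim to that edge.
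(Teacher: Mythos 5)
Your proof is correct and takes essentially the same route as the paper: both identify the unique reciprocated pair $\{\alpha_{h_1},\alpha_{h_2}\}$ in a welfare-two triple, apply Lemma~\ref{lem:3dsrsasbin_000exists_stlemma} to that pair to get $u_{\{\alpha_{h_1},\alpha_{h_2}\}}(M_{\textrm{A}})\geq 1$, and conclude since $u_{\alpha_{h_3}}(M_{\textrm{A}})\geq 0$. Your explicit remarks on the edge-count structure and the additive decomposition of $u_t$ are just slightly more detailed renderings of steps the paper leaves implicit.
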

\begin{proof}
Suppose $|T_{\textrm{A}}^0|>0$. Consider an arbitrary $\{ \alpha_{h_1}, \alpha_{h_2}, \alpha_{h_3} \} \in T_{\textrm{opt}}^2$ where $\mathit{val}_{\alpha_{h_1}}(\alpha_{h_2})=1$ and $\mathit{val}_{\alpha_{h_1}}(\alpha_{h_3})=\mathit{val}_{\alpha_{h_2}}(\alpha_{h_3})=0$. By Lemma~\ref{lem:3dsrsasbin_000exists_stlemma}, it must be that $u_{\{ \alpha_{h_1}, \alpha_{h_2}\}}(M_{\textrm{A}})\geq 1$ and hence $u_{\{ \alpha_{h_1}, \alpha_{h_2}, \alpha_{h_3} \}}(M_{\textrm{A}})\geq 1$.
\end{proof}

\begin{lem}
\label{lem:3dsrsasbin_some000exists_final_lem}
If $|T_{\textrm{A}}^0|>0$ then $2{w(M_{\textrm{A}})} \geq w(M_\textrm{opt})$.
\end{lem}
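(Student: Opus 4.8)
The plan is to sum the per-triple lower bounds of Lemmas~\ref{lem:3dsrsasbin_some000exists_r1_lem}, \ref{lem:3dsrsasbin_some000exists_r2_lem} and~\ref{lem:3dsrsasbin_some000exists_r3_lem} over the triples of $M_{\textrm{opt}}$ and compare the result against the exact welfare of $M_{\textrm{opt}}$ recorded in Equation~\ref{eq:welfareofmopt}. The key structural fact is that, since $|M_{\textrm{opt}}|=k$ by Proposition~\ref{prop:completematching} and $|N|=3k+l$, the set $N$ is partitioned into the agents appearing in triples of $M_{\textrm{opt}}$ together with the set $R$ of the at most $l$ agents left unmatched by $M_{\textrm{opt}}$. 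First I would decompose
\begin{align*}
u(M_{\textrm{A}}) = \sum_{t \in M_{\textrm{opt}}} u_t(M_{\textrm{A}}) + \sum_{\alpha_i \in R} u_{\alpha_i}(M_{\textrm{A}}) \geq \sum_{t \in M_{\textrm{opt}}} u_t(M_{\textrm{A}}),
\end{align*}
where the inequality holds because each individual utility $u_{\alpha_i}(M_{\textrm{A}})$ is nonnegative, so the contribution of the leftover agents in $R$ can simply be discarded.

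Next I would split the remaining sum according to the welfare classes $T_{\textrm{opt}}^6,\allowbreak T_{\textrm{opt}}^4, T_{\textrm{opt}}^2, T_{\textrm{opt}}^0$ from Equation~\ref{eq:moptcomposition} and bound each class term-by-term using the hypothesis $|T_{\textrm{A}}^0|>0$: every $t \in T_{\textrm{opt}}^6$ satisfies $u_t(M_{\textrm{A}}) \geq 3$, every $t \in T_{\textrm{opt}}^4$ satisfies $u_t(M_{\textrm{A}}) \geq 2$, every $t \in T_{\textrm{opt}}^2$ satisfies $u_t(M_{\textrm{A}}) \geq 1$, and the triples of $T_{\textrm{opt}}^0$ contribute at least $0$. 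This gives the lower bound
\begin{align*}
u(M_{\textrm{A}}) \geq 3|T_{\textrm{opt}}^6| + 2|T_{\textrm{opt}}^4| + |T_{\textrm{opt}}^2|.
\end{align*}
Doubling and invoking Equation~\ref{eq:welfareofmopt} then yields $2u(M_{\textrm{A}}) \geq 6|T_{\textrm{opt}}^6| + 4|T_{\textrm{opt}}^4| + 2|T_{\textrm{opt}}^2| = u(M_{\textrm{opt}})$, which is the desired conclusion.

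The substantive work — the three per-triple inequalities, each of which leans on the stability of $M_{\textrm{A}}$ and, through Lemma~\ref{lem:3dsrsasbin_000exists_stlemma}, on the maximality of the matching $Y$ — is already established, so this final argument is essentially bookkeeping and I expect no genuine obstacle. The only point meriting care is verifying that discarding the agents of $R$ is legitimate: this requires the observation that these agents contribute nonnegatively to $u(M_{\textrm{A}})$ and that the triples of $M_{\textrm{opt}}$ together with $R$ form a genuine partition of $N$, both of which follow immediately from $|M_{\textrm{opt}}|=k$.
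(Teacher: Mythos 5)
Your proposal is correct and follows essentially the same argument as the paper: sum the per-triple bounds of Lemmas~\ref{lem:3dsrsasbin_some000exists_r1_lem}, \ref{lem:3dsrsasbin_some000exists_r2_lem} and~\ref{lem:3dsrsasbin_some000exists_r3_lem} over the welfare classes of $M_{\textrm{opt}}$, discard the $T_{\textrm{opt}}^0$ terms, and double to compare with $u(M_{\textrm{opt}}) = 6|T_{\textrm{opt}}^6| + 4|T_{\textrm{opt}}^4| + 2|T_{\textrm{opt}}^2|$. If anything, you are slightly more careful than the paper, which writes $u(M_{\textrm{A}}) = \sum_{t\in M_{\textrm{opt}}} u_{t}(M_{\textrm{A}})$ as an equality, whereas your explicit handling of the at most $l$ agents unmatched in $M_{\textrm{opt}}$ via the nonnegative remainder set $R$ correctly records it as an inequality in the needed direction.
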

\begin{proof}
Suppose that $|T_{\textrm{A}}^0|>0$. Intuitively, in this lemma the utilitarian welfare in $M_{\textrm{A}}$ is apportioned by considering the utilitarian welfare in $M_{\textrm{A}}$ of each triple in $M_{\textrm{opt}}$. By definition,
\begin{align}
    u(M_{\textrm{A}}) &= \sum\limits_{t\in M_{\textrm{opt}}} u_{t}(M_{\textrm{A}}) \nonumber\\
    &= \sum\limits_{t \in T_{\textrm{opt}}^6} u_{t}(M_{\textrm{A}}) +  \sum\limits_{t \in T_{\textrm{opt}}^4} u_{t}(M_{\textrm{A}}) + \sum\limits_{t \in T_{\textrm{opt}}^2} u_{t}(M_{\textrm{A}}) \nonumber \\ 
    & \phantom{=} + 
    \sum\limits_{t \in T_{\textrm{opt}}^0} u_{t}(M_{\textrm{A}}) && \mbox{by Equation~\ref{eq:moptcomposition}} \nonumber \\
    &\geq \sum\limits_{t \in T_{\textrm{opt}}^6} u_{t}(M_{\textrm{A}}) +  \sum\limits_{t \in T_{\textrm{opt}}^4} u_{t}(M_{\textrm{A}})  + 
    \sum\limits_{t \in T_{\textrm{opt}}^2} u_{t}(M_{\textrm{A}}) \nonumber \\
    &\geq 3|T_{\textrm{opt}}^6| + 2|T_{\textrm{opt}}^4| + |T_{\textrm{opt}}^2| && \mbox{by Lemmas~\ref{lem:3dsrsasbin_some000exists_r1_lem},} \nonumber \\
    & && \mbox{ \ref{lem:3dsrsasbin_some000exists_r2_lem}, and~\ref{lem:3dsrsasbin_some000exists_r3_lem}.} \label{eq:final2}
\end{align}
Thus, by Inequality~\ref{eq:final2}:
\begin{align*}
    2u(M_{\textrm{A}}) &\geq 6|T_{\textrm{opt}}^6| + 4|T_{\textrm{opt}}^4| + 2|T_{\textrm{opt}}^2|\\
    &\geq u(M_{\textrm{opt}}) && \mbox{by Equation~\ref{eq:welfareofmopt}.}
\end{align*}
\end{proof}

\begin{lem}
\label{lem:3dsrsasbin_absoluteapproxratio_runningtime}
Algorithm~{\normalfont \texttt{findStableUW}} has running time $O(|N|^3)$.
\end{lem}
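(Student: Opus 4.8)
The plan is to bound the running time of each step of Algorithm~\texttt{findStableUW} separately and observe that, since the algorithm is a straight-line sequence of a constant number of steps (with a single branch), its total running time is the sum---equivalently, up to constants, the maximum---of these. First I would fix a suitable data representation: the input stored via adjacency lists so that, for any $\alpha_p$, the set $\{\alpha_q : \mathit{val}_{\alpha_p}(\alpha_q)=1\}$ can be scanned in $O(|N|)$ time, and matchings stored as linked lists of triples supplemented by a lookup table recording, for each agent, the triple it belongs to (or that it is unmatched).

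The call to Algorithm~\texttt{findStable} runs in $O(|N|^3)$ time by Lemma~\ref{lem:3dsrsasbinconstruction_runningtime} and produces $M_1$. From the representation above, the set $U$ of agents unmatched in $M_1$ can be identified in $O(|N|)$ time by a single scan over the agents.

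The second, and key, step is the call to \texttt{maximum2DMatching}, which computes a maximum cardinality matching $Y$ in the subgraph of the underlying graph $G$ induced by $U$. Since $G$ is a general (not necessarily bipartite) simple undirected graph, this step requires an algorithm for maximum cardinality matching in general graphs. Such a matching can be computed in time cubic in the number of vertices using a blossom-based algorithm; as the induced subgraph has at most $|U| \le |N|$ vertices, this step runs in $O(|N|^3)$ time. This is the step that fixes the overall cubic bound and is the main point requiring justification: one must appeal to a polynomial-time general-graph matching algorithm with a cubic (rather than worse) running time, as the naive blossom implementation is slower.

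Finally I would bound the remaining, elementary, steps. The test $|Y| \ge \lfloor |U|/3 \rfloor$ and the selection of $X$---either $\lfloor |U|/3 \rfloor$ elements of $Y$, or $Y$ together with an arbitrary set $W$ of pairs of agents drawn from $U \setminus \bigcup Y$---can each be performed in $O(|N|)$ time, as can the construction of $Z = U \setminus \bigcup X$. Given a fixed indexing $X = \{x_1,\dots,x_{\lfloor |U|/3 \rfloor}\}$ and $Z = \{z_1,\dots,z_{\lfloor |U|/3 \rfloor}\}$, the matching $M_2 = \{\, x_i \cup \{z_i\} : 1 \le i \le \lfloor |U|/3 \rfloor \,\}$ and the final union $M_1 \cup M_2$ are each formed in $O(|N|)$ time. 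Summing all contributions, the running time is dominated by the calls to \texttt{findStable} and \texttt{maximum2DMatching}, each $O(|N|^3)$, giving an overall running time of $O(|N|^3)$.
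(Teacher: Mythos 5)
Your proof is correct and follows the same decomposition as the paper's: the total time is the sum over a constant number of steps, dominated by the call to \texttt{findStable}, which is $O(|N|^3)$ by Lemma~\ref{lem:3dsrsasbinconstruction_runningtime}, plus the call to \texttt{maximum2DMatching}, with all remaining bookkeeping linear. The one substantive difference is how the matching subroutine is bounded: the paper simply asserts that \texttt{maximum2DMatching} runs in $O(|N|^2)$ time, whereas you invoke an $O(n^3)$ blossom-based algorithm for maximum cardinality matching in general graphs. Your bound is the safer one to cite, since $O(|N|^2)$ is not a standard bound for \emph{maximum} matching in general graphs (the usual bounds are $O(\sqrt{n}\,m)$, i.e.\ $O(|N|^{2.5})$ on dense graphs, or $O(n^3)$ for classical blossom implementations). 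The paper's $O(|N|^2)$ claim can in fact be justified, but only via a structural observation it leaves implicit: because $M_1$ is a \emph{stable} $P$-matching, every agent in $U$ has utility zero, so if the subgraph of the underlying graph induced by $U$ contained a path on three vertices, those three agents would form a triple blocking $M_1$; hence that induced subgraph has maximum degree one, its edge set is itself a matching, and $Y$ can be read off in $O(|N|^2)$ time by a single scan of the adjacency structure (alternatively, the correctness analysis in Lemma~\ref{lem:3dsrsasbin_000exists_stlemma} only ever uses inclusion-maximality of $Y$, which a greedy $O(|N|^2)$ pass provides). Either way the lemma's $O(|N|^3)$ conclusion stands; your appeal to a cubic general-graph matching algorithm buys rigour without this instance-specific argument, at the cost of letting the matching step, rather than \texttt{findStable} alone, co-dominate the bound.
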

\begin{proof}
Since the time complexity of Algorithm~\texttt{findStable} is $O(|N|^3)$ and the time complexity of Algorithm~\texttt{maximum2DMatching} is $O(|N|^2)$. 
\end{proof}

\begin{restatable}{theorem}{threedsrsasbinabsoluteapproxratio}
\label{thm:threedsrsasbin_absoluteapproxratio}
Algorithm~{\normalfont \texttt{findStableUW}} is a 2-approximation algorithm for 3D-SR-SAS-BIN-MAXUW.
\end{restatable}
\begin{proof}
The absolute approximation ratio is shown in Lemmas~\ref{lem:3dsrsasbin_no000exists_lem} and~\ref{lem:3dsrsasbin_some000exists_final_lem}. The running time is shown in Lemma~\ref{lem:3dsrsasbin_absoluteapproxratio_runningtime}. 
\end{proof}

\else

Let $M_{\textrm{A}}$ be an arbitrary matching returned by Algorithm~\texttt{findStableUW} given $(N,V)$. Suppose $M_{\textrm{opt}}$ is a stable matching in $(N, V)$ with maximum utilitarian welfare. To prove the performance guarantee of Algorithm~{\normalfont \texttt{findStableUW}} we show that $2{u(M_{\textrm{A}})} \geq u(M_\textrm{opt})$. The proof involves apportioning the welfare of agents in $M_{\textrm{A}}$ by the triples of those agents in $M_{\textrm{opt}}$.



\begin{restatable}{theorem}{threedsrsasbinabsoluteapproxratio}
\label{thm:threedsrsasbin_absoluteapproxratio}
Algorithm~{\normalfont \texttt{findStableUW}} is a 2-approximation algorithm for 3D-SR-SAS-BIN-MAXUW.
\end{restatable}



\fi

In the instance of 3D-SR-SAS-BIN shown in Figure~\ref{fig:3d_sr_sas_bin_algorithm_example_before_scenarios}, Algorithm~\texttt{findStableUW} always returns $M_\textrm{A}=\{ \{ \alpha_3, \alpha_5, \alpha_6 \} \}$ while $M_\textrm{opt}=\{\{ \alpha_1, \alpha_2, \alpha_3 \},\allowbreak \{ \alpha_4, \alpha_5, \alpha_8 \}, \{ \alpha_6, \alpha_7, \alpha_9 \}\}$. Since $u(M_\textrm{A})=6$ and $u(M_{\textrm{opt}})=12$ it follows that $u(M_{\textrm{opt}})=2u(M_{\textrm{A}})$. This shows that the analysis of Algorithm~{\normalfont \texttt{findStableUW}} is tight. Moreover, this particular instance shows that any approximation algorithm with a better performance ratio than $2$ should not always begin, like Algorithm~\texttt{findStableUW} does, by selecting a maximal set of triangles.

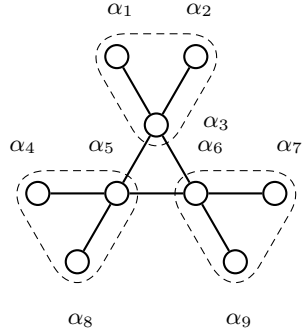
\begin{figure}[t]
    \centering
    \begin{tikzpicture}
\begin{scope}[every node/.style={circle,draw, minimum size=2.4mm}, scale=0.7]
    \begin{scope}
        \begin{scope}[rotate=-60]
            \node[thick, circle, label={[shift={(-0.2, 0.1)}]:$\alpha_5$}] (a1) at (0,0) {};
            \node[thick, circle, label={[shift={(-0.2, 0.1)}]:$\alpha_4$}] (a2) at (-0.75,-1.3) {};
            \node[thick, circle, label={[shift={(0.05, -1.3)}]:$\alpha_8$}] (a3) at (0.75,-1.3) {};
            
            \begin{scope}[scale=2, shift={(0.0, 0.433)}]
            \draw [rounded corners=6.5mm, densely dashed] (0.0, 0.0)--(-0.75, -1.3)--(0.75, -1.3)--cycle;
            \end{scope}
        \end{scope}
        
        \begin{scope}[shift={(1.5, 0.0)}]
            \begin{scope}[rotate=60]
                \node[thick, circle, label={[shift={(0.2, 0.1)}]:$\alpha_6$}] (a4) at (0,0) {};
                \node[thick, circle, label={[shift={(0.05, -1.3)}]:$\alpha_9$}] (a5) at (-0.75,-1.3) {};
                \node[thick, circle, label={[shift={(0.2, 0.1)}]:$\alpha_7$}] (a6) at (0.75,-1.3) {};
                
                \begin{scope}[scale=2, shift={(0.0, 0.433)}]
                \draw [rounded corners=6.5mm, densely dashed] (0.0, 0.0)--(-0.75, -1.3)--(0.75, -1.3)--cycle;
                \end{scope}
            \end{scope}
        \end{scope}
        
        \begin{scope}[shift={(0.75, 1.3)}]
            \begin{scope}[rotate=180]
                \node[thick, circle, label={[shift={(0.8, -0.5)}]:$\alpha_3$}] (a7) at (0,0) {};
                \node[thick, circle, label={[shift={(0.05, 0.1)}]:$\alpha_2$}] (a8) at (-0.75,-1.3) {};
                \node[thick, circle, label={[shift={(0.05, 0.1)}]:$\alpha_1$}] (a9) at (0.75,-1.3) {};
                
                \begin{scope}[scale=2, shift={(0.0, 0.433)}]
                \draw [rounded corners=6.5mm, densely dashed] (0.0, 0.0)--(-0.75, -1.3)--(0.75, -1.3)--cycle;
                \end{scope}
            \end{scope}
        \end{scope}
    \end{scope}

\end{scope}

\begin{scope}
    \foreach \from/\to in {a1/a2, a3/a1, a7/a8, a9/a7, a4/a5, a6/a4, a1/a7, a7/a4, a4/a1}
        \draw [thick] (\from) -- (\to);

\end{scope}
\end{tikzpicture}
    \caption{An instance in which $u(M_{\textrm{opt}})=2u(M_{\textrm{A}})$.}
    \label{fig:3d_sr_sas_bin_algorithm_example_before_scenarios}
\end{figure}

\section{Open questions}
\label{sec:conclusion}
In this paper we have considered the three-dimensional stable roommates problem with additively separable preferences. We considered the special cases in which preferences are binary but not necessarily symmetric, and both binary and symmetric.  There are several interesting directions for future research.

\begin{itemize}
    \item[\textbullet] Does there exist an approximation algorithm for 3D-SR-SAS-BIN-MAXUW (Section~\ref{sec:3dsrsasbin_utilitarianwelfare}) with a better performance guarantee than $2$?
    \item[\textbullet] In 3D-SR-AS, there are numerous possible restrictions besides symmetric and binary preferences. Do any other restrictions ensure that a stable matching exists? For example, we could consider the restriction in which preferences are symmetric and $\mathit{val}_{\alpha_i}\in \{0,1,2\}$ for each $\alpha_i \in N$.
    \item[\textbullet] Additively separable preferences are one possible structure of agents' preferences that can be applied in a model of three-dimensional SR. Are there other systems of preferences that result in new models in which a stable matching can be found in polynomial time?
    \item[\textbullet] The 3D-SR-AS problem model can be generalised to higher dimensions. It would be natural to ask if the algorithm for 3D-SR-SAS-BIN can be generalised to the same problem in $k\geq 3$ dimensions, in which a $k$-set of agents $S$ is blocking if, for each of the $k$ agents in $S$, the utility of $S$ is strictly greater than that agent's utility in the matching. We conjecture that when $k\geq 4$, a stable matching need not exist, and that the associated decision problem is $\NP$-complete, even when preferences are both binary and symmetric.
\end{itemize}

\bibliographystyle{splncs04}
\bibliography{michael}

\begin{thebibliography}{10}
\providecommand{\url}[1]{\texttt{#1}}
\providecommand{\urlprefix}{URL }
\providecommand{\doi}[1]{https://doi.org/#1}

\bibitem{ABEOMP09}
Arkin, E., Bae, S., Efrat, A., Mitchell, J., Okamoto, K., Polishchuk, V.:
  Geometric {S}table {R}oommates. Information Processing Letters  \textbf{109},
   219--224 (2009)

\bibitem{Aziz:2011:OPA:2283396.2283405}
Aziz, H., Brandt, F., Seedig, H.G.: Optimal partitions in additively separable
  hedonic games. In: Proceedings of IJCAI '11: the 22\textsuperscript{nd}
  International Joint Conference on Artificial Intelligence - Volume One. pp.
  43--48. AAAI Press (2011)

\bibitem{AZIZ2013316}
Aziz, H., Brandt, F., Seedig, H.G.: Computing desirable partitions in
  additively separable hedonic games. Artificial Intelligence  \textbf{195},
  316--334 (2013)

\bibitem{10.5555/2832249.2832313}
Aziz, H., Gaspers, S., Gudmundsson, J., Mestre, J., T\"{a}ubig, H.: Welfare
  maximization in fractional hedonic games. In: Proceedings of IJCAI '15: the
  24\textsuperscript{th} International Joint Conference on Artificial
  Intelligence. pp. 461--467. AAAI Press (2015)

\bibitem{AzizLang2016}
Aziz, H., Lang, J., Monnot, J.: Computing {P}areto optimal committees. In:
  Proceedings of IJCAI '16: the 25\textsuperscript{th} International Joint
  Conference on Artificial Intelligence. pp. 60--66. AAAI Press (2016)

\bibitem{aziz_savani_moulin_2016}
Aziz, H., Savani, R., Moulin, H.: Hedonic games. In: Brandt, F., Conitzer, V.,
  Endriss, U., Lang, J., Procaccia, A.D. (eds.) Handbook of Computational
  Social Choice, p. 356–376. Cambridge University Press (2016)

\bibitem{BBP04}
Barber\`a, S., Bossert, W., Pattanaik, P.: Ranking sets of objects. In:
  Barber\`a, S., Hammond, P., Seidl, C. (eds.) Handbook of Utility Theory,
  vol.~2, chap.~17, pp. 893--977. Kluwer Academic Publishers (2004)

\bibitem{Boe20}
Boehmer, N., Elkind, E.: Stable roommate problem with diversity preferences.
  In: Proceedings of IJCAI '20: the 29\textsuperscript{th} International Joint
  Conference on Artificial Intelligence. pp. 96--102. {IJCAI} Organization
  (2020)

\bibitem{Brandt2020FindingAR}
Brandt, F., Bullinger, M.: Finding and recognizing popular coalition
  structures. In: Proceedings of AAMAS '20: the 19\textsuperscript{th}
  International Conference on Autonomous Agents and Multiagent Systems. pp.
  195--203. IFAAMAS (2020)

\bibitem{Bre20}
Bredereck, R., Heeger, K., Knop, D., Niedermeier, R.: Multidimensional stable
  roommates with master list. In: Proceedings of WINE '20: The
  16\textsuperscript{th} Conference on Web and Internet Economics. Lecture
  Notes in Computer Science, vol. 12495, pp. 59--73. Springer (2020)

\bibitem{10.5555/3398761.3398791}
Bullinger, M.: Pareto-optimality in cardinal hedonic games. In: Proceedings of
  AAMAS '20: the 19\textsuperscript{th} International Conference on Autonomous
  Agents and Multiagent Systems. pp. 213--221. IFAAMAS (2020)

\bibitem{CHATAIGNER20091396}
Chataigner, F., {Mani\'c}, G., Wakabayashi, Y., Yuster, R.: Approximation
  algorithms and hardness results for the clique packing problem. Discrete
  Applied Mathematics  \textbf{157}(7),  1396--1406 (2009)

\bibitem{CSEH2019}
{Cseh}, {\'A}., {Fleiner}, T., {Harj{\'a}n}, P.: {Pareto optimal coalitions of
  fixed size}. Journal of Mechanism and Institution Design  \textbf{4}(1),
  87--108 (2019)

\bibitem{DEINEKO20131837}
Deineko, V.G., Woeginger, G.J.: Two hardness results for core stability in
  hedonic coalition formation games. Discrete Applied Mathematics
  \textbf{161}(13),  1837--1842 (2013)

\bibitem{GS62}
Gale, D., Shapley, L.: College admissions and the stability of marriage.
  American Mathematical Monthly  \textbf{69},  9--15 (1962)

\bibitem{GJ79}
Garey, M., Johnson, D.: Computers and Intractability. Freeman, San Francisco,
  CA. (1979)

\bibitem{Haj06}
Hajdukov\'a, J.: Coalition formation games: a survey. International Game Theory
  Review  \textbf{8}(4),  613--641 (2006)

\bibitem{CCHUANG2207FULLVERSION}
Huang, C.C.: Two's company, three's a crowd: Stable family and threesome
  roommates problems. In: Proceedings of ESA'07: the 15\textsuperscript{th}
  European Symposium on Algorithms. Lecture Notes in Computer Science,
  vol.~4698, pp. 558--569. Springer (2007)

\bibitem{CCHUANG2207TR}
Huang, C.C.: Two's company, three's a crowd: Stable family and threesome
  roommates problems. {Computer Science Technical Report TR2007-598, Dartmouth
  College} ({2007})

\bibitem{Irv85}
Irving, R.: An {E}fficient {A}lgorithm for the ``{S}table {R}oommates''
  {P}roblem. Journal of Algorithms  \textbf{6},  577--595 (1985)

\bibitem{IMO07}
Iwama, K., Miyazaki, S., Okamoto, K.: Stable roommates problem with triple
  rooms. In: Proceedings of WAAC '07: the 10\textsuperscript{th} Korea-Japan
  Workshop on Algorithms and Computation. pp. 105--112 (2007)

\bibitem{KH83}
Kirkpatrick, D.G., Hell, P.: On the complexity of general graph factor
  problems. siam  \textbf{12}(3),  601--609 (1983)

\bibitem{NH91}
Ng, C., Hirschberg, D.: Three-dimensional stable matching problems. SIAM
  Journal on Discrete Mathematics  \textbf{4}(2),  245--252 (1991)

\bibitem{SUNG2010635}
Sung, S.C., Dimitrov, D.: Computational complexity in additive hedonic games.
  European Journal of Operational Research  \textbf{203}(3),  635--639 (2010)

\bibitem{Woe13}
Woeginger, G.J.: Core stability in hedonic coalition formation. In: Proceedings
  of SOFSEM '13: the 39th International Conference on Current Trends in Theory
  and Practice of Computer Science. Lecture Notes in Computer Science,
  vol.~7741, pp. 33--50. Springer (2013)

\bibitem{WrightV15}
Wright, M., Vorobeychik, Y.: Mechanism design for team formation. In:
  Proceedings of the 29\textsuperscript{th} Conference on Artificial
  Intelligence. pp. 1050--1056. AAAI '15, {AAAI} Press (2015)

\end{thebibliography}

\ifdefined\hideappendix
\else
\ifdefined\fullversion \else

\clearpage
\appendix
\section{Proofs in detail}
\label{sec:appendix}
\setcounter{secnumdepth}{3} 

\subsection{Preliminaries}

\threedsrsasbinblockerimprovement*

\subsection{General binary preferences}

The reduction in Section~\ref{sec:generalbinary}, shown in Figure~\ref{fig:3d_sr_as_binary_reduction}, constructs an instance $(N, V)$ of 3D-SR-AS-BIN from an arbitrary instance $G=(W, E)$ of PIT. In Section~\ref{sec:3dsrasbinfirstdirection} we consider the first direction and show that if a partition into triangles $X=\{X_1,X_2,\dots,X_q\}$ exists in $G$ then a stable matching $M$ exists in $(N, V)$. In Section~\ref{sec:3dsrasbinseconddirection} we consider the second direction and show that if a stable matching $M$ exists in $(N,V)$ then a partition into triangles $X = \{ X_1, X_2, \dots, X_q\}$ exists in $G$. Note that the only instance discussed is $(N, V)$ and hence we shorten ``blocks $M$ in $(N, V)$'' to simply ``blocks $M$''.

\subsubsection{Correctness of the reduction: first direction}\hfill
\label{sec:3dsrasbinfirstdirection}

\subsubsection{Correctness of the reduction: second direction}\hfill
\label{sec:3dsrasbinseconddirection}

In this section assume that $M$ is a stable matching in $(N,V)$. We analyse its structure and construct a corresponding partition into triangles in $G$.

\subsubsection{Conclusion}\hfill

\threedsrasbinexistence*

\subsection{Symmetric binary preferences}\hfill

The layout of the full proofs in this section reflects the layout of the paper.

\subsubsection{Preliminaries}\hfill

\threedsrsasbintrianglefree*

\subsubsection{Repairing a \texorpdfstring{$P$}{P}-matching in a triangle-free instance}\hfill\\

Before presenting the full proof of Lemma~\ref{lem:algoreturnsstablematching} we prove some intermediary lemmas.

\algoreturnsstablematching*
\begin{proof}
By Lemmas~\ref{lem:algoreturnsstablematching_notimecomplex} and~\ref{lem:3dsrsasbin_almosttherealgo_runningtime}.
\end{proof}
\subsubsection{Finding a stable \texorpdfstring{$P$}{P}-matching in a triangle-free instance}\hfill

\threedsrsasbinalgfindsstablepmatching*

\subsubsection{Finding a stable \texorpdfstring{$P$}{P}-matching in an arbitrary instance}\hfill\\
\label{sec:findingpmatchingingeneralappendix}

For completeness, we provide a sample implementation of Algorithm~{\normalfont \texttt{findStable}}. Suppose a suitable implementation of Algorithm~{\normalfont \texttt{eliminateTriangles}} is supplied, which runs in $O(|N|^3)$ time (Lemma~\ref{lem:threedsrsasbintrianglefree}).

\threedsrsasbinconstruction*

\subsubsection{Stability and utilitarian welfare}


\threedsrsasbinmaxutilstablehard*
\begin{proof}
A trivial reduction exists from Partition Into Triangles (PIT, see Section~\ref{sec:generalbinary}) to this decision problem. Let $N=W$ and let $\mathit{val}_{\alpha_i}(\alpha_j)=1$ if $\{w_i, w_j\}\in E$ for every $v_i,v_j \in V$. If a partition into triangles $X=\{X_1,X_2,\dots,X_q\}$ exists then the corresponding matching $M=X$ with welfare $u(M)=2|N|$ is stable, since $u_{\alpha_l}(M)=2$ for any $\alpha_l \in N$. Conversely, if the matching $M$ has utility $u(M)=2|N|$ then it is stable, and a partition into triangles exists.
\end{proof}

\threedsrsasbinabsoluteapproxratio*

\fi
\fi
\end{document}